\definecolor{shadecolor}{rgb}{0.9,0.9,0.9}
\theoremstyle{plain}
\newtheorem{proposition}{Proposition}
\newtheorem{lemma}[proposition]{Lemma}
\newtheorem{theorem}[proposition]{Theorem}
\newtheorem{corollary}[proposition]{Corollary}
\newtheorem{definition}{Definition}
\def\squareforqed{\hbox{\rlap{$\sqcap$}$\sqcup$}}
\def\qed{\ifmmode\squareforqed\else{\unskip\nobreak\hfil
\penalty50\hskip1em\null\nobreak\hfil\squareforqed
\parfillskip=0pt\finalhyphendemerits=0\endgraf}\fi}
\def\endenv{\ifmmode\;\else{\unskip\nobreak\hfil
\penalty50\hskip1em\null\nobreak\hfil\;
\parfillskip=0pt\finalhyphendemerits=0\endgraf}\fi}
\newenvironment{proof}{\noindent \textbf{{Proof} }}{\hfill $\blacksquare$}
\newcounter{remark}
\newenvironment{remark}[1][]{\refstepcounter{remark}\par\medskip\noindent%
\textbf{Remark~\theremark #1} }{\medskip}
\newcounter{example}
\mathchardef\ordinarycolon\mathcode`\:
\def\vcentcolon{\mathrel{\mathop\ordinarycolon}}
\newmdenv[skipabove=7pt,
skipbelow=7pt,
backgroundcolor=darkblue!15,
innerleftmargin=5pt,
innerrightmargin=5pt,
innertopmargin=5pt,
leftmargin=0cm,
rummaging=0cm,
innerbottommargin=5pt,
linewidth=1pt]{tBox}
\newmdenv[skipabove=7pt,
skipbelow=7pt,
backgroundcolor=red!15,
innerleftmargin=5pt,
innerrightmargin=5pt,
innertopmargin=5pt,
leftmargin=0cm,
rightmargin=0cm,
innerbottommargin=5pt,
linewidth=1pt]{rBox}
\newmdenv[skipabove=7pt,
skipbelow=7pt,
backgroundcolor=blue2!25,
innerleftmargin=5pt,
innerrightmargin=5pt,
innertopmargin=5pt,
leftmargin=0cm,
rightmargin=0cm,
innerbottommargin=5pt,
linewidth=1pt]{dBox}
\newmdenv[skipabove=7pt,
skipbelow=7pt,
backgroundcolor=darkkblue!15,
innerleftmargin=5pt,
innerrightmargin=5pt,
innertopmargin=5pt,
leftmargin=0cm,
rightmargin=0cm,
innerbottommargin=5pt,
linewidth=1pt]{sBox}
\definecolor{darkblue}{RGB}{0,76,156}
\definecolor{darkkblue}{RGB}{0,0,153}
\definecolor{blue2}{RGB}{102,178,255}
\definecolor{darkred}{RGB}{195,0,0}
\newcommand{\nc}{\newcommand}
\nc{\rnc}{\renewcommand}
\nc{\lbar}[1]{\overline{#1}}
\nc{\bra}[1]{\langle#1|}
\nc{\ket}[1]{|#1\rangle}
\nc{\ketbra}[2]{|#1\rangle\!\langle#2|}
\nc{\braket}[2]{\langle#1|#2\rangle}
\nc{\proj}[1]{| #1\rangle\!\langle #1 |}
\nc{\avg}[1]{\langle#1\rangle}
\nc{\rank}{\operatorname{Rank}}
\nc{\smfrac}[2]{\mbox{$\frac{#1}{#2}$}}
\nc{\tr}{\operatorname{Tr}}
\nc{\ox}{\otimes}
\nc{\dg}{\dagger}
\nc{\dn}{\downarrow}
\nc{\cA}{{\cal A}}
\nc{\cB}{{\cal B}}
\nc{\cC}{{\cal C}}
\nc{\cD}{{\cal D}}
\nc{\cE}{{\cal E}}
\nc{\cF}{{\cal F}}
\nc{\cG}{{\cal G}}
\nc{\cH}{{\cal H}}
\nc{\cI}{{\cal I}}
\nc{\cJ}{{\cal J}}
\nc{\cK}{{\cal K}}
\nc{\cL}{{\cal L}}
\nc{\cM}{{\cal M}}
\nc{\cN}{{\cal N}}
\nc{\cO}{{\cal O}}
\nc{\cP}{{\cal P}}
\nc{\cQ}{{\cal Q}}
\nc{\cR}{{\cal R}}
\nc{\cS}{{\cal S}}
\nc{\cT}{{\cal T}}
\nc{\cU}{{\cal U}}
\nc{\cV}{{\cal V}}
\nc{\cX}{{\cal X}}
\nc{\cY}{{\cal Y}}
\nc{\cZ}{{\cal Z}}
\nc{\cW}{{\cal W}}
\nc{\ip}{{\mathit{p}}}
\nc{\iq}{{\mathit{q}}}
\nc{\csupp}{{\operatorname{csupp}}}
\nc{\qsupp}{{\operatorname{qsupp}}}
\nc{\var}{{\operatorname{var}}}
\nc{\rar}{\rightarrow}
\nc{\lrar}{\longrightarrow}
\nc{\polylog}{{\operatorname{polylog}}}
\nc{\wt}{{\operatorname{wt}}}
\nc{\av}[1]{{\left\langle {#1} \right\rangle}}
\nc{\supp}{{\operatorname{supp}}}
\nc{\argmin}{{\operatorname{argmin}}}
\def\x{\xi}
\nc{\RR}{{{\mathbb R}}}
\nc{\CC}{{{\mathbb C}}}
\nc{\FF}{{{\mathbb F}}}
\nc{\NN}{{{\mathbb N}}}
\nc{\ZZ}{{{\mathbb Z}}}
\nc{\PP}{{{\mathbb P}}}
\nc{\QQ}{{{\mathbb Q}}}
\nc{\UU}{{{\mathbb U}}}
\nc{\EE}{{{\mathbb E}}}
\nc{\id}{{\operatorname{id}}}
\nc{\CHSH}{{\operatorname{CHSH}}}
\nc{\be}{\begin{equation}}
\nc{\ee}{{\end{equation}}}
\nc{\bea}{\begin{eqnarray}}
\nc{\eea}{\end{eqnarray}}
\nc{\rU}{\mbox{U}}
\nc{\ob}[1]{#1}
\nc{\SEP}{{\text{\rm SEP}}}
\nc{\NS}{{\text{\rm NS}}}
\nc{\LOCC}{{\text{\rm LOCC}}}
\nc{\PPT}{{\text{\rm PPT}}}
\nc{\EXT}{{\text{\rm EXT}}}
\nc{\Sym}{{\operatorname{Sym}}}
\nc{\ERLO}{{E_{\text{r,LO}}}}
\nc{\ERLOCC}{{E_{\text{r,LOCC}}}}
\nc{\ERPPT}{{E_{\text{r,PPT}}}}
\nc{\ERLOCCinfty}{{E^{\infty}_{\text{r,LOCC}}}}
\nc{\Aram}{{\operatorname{\sf A}}}
\newcommand{\CPTP}{\text{\rm CPTP}}
\DeclareMathOperator{\Tr}{Tr}
\def\grd@save@target#1{%
  \def\grd@target{#1}}
\def\grd@save@start#1{%
  \def\grd@start{#1}}
\tikzset{
  grid with coordinates/.style={
    to path={%
      \pgfextra{%
        \edef\grd@@target{(\tikztotarget)}%
        \tikz@scan@one@point\grd@save@target\grd@@target\relax
        \edef\grd@@start{(\tikztostart)}%
        \tikz@scan@one@point\grd@save@start\grd@@start\relax
        \draw[minor help lines,magenta] (\tikztostart) grid (\tikztotarget);
        \draw[major help lines] (\tikztostart) grid (\tikztotarget);
        \grd@start
        \pgfmathsetmacro{\grd@xa}{\the\pgf@x/1cm}
        \pgfmathsetmacro{\grd@ya}{\the\pgf@y/1cm}
        \grd@target
        \pgfmathsetmacro{\grd@xb}{\the\pgf@x/1cm}
        \pgfmathsetmacro{\grd@yb}{\the\pgf@y/1cm}
        \pgfmathsetmacro{\grd@xc}{\grd@xa + \pgfkeysvalueof{/tikz/grid with coordinates/major step}}
        \pgfmathsetmacro{\grd@yc}{\grd@ya + \pgfkeysvalueof{/tikz/grid with coordinates/major step}}
        \foreach \x in {\grd@xa,\grd@xc,...,\grd@xb}
        \node[anchor=north] at (\x,\grd@ya) {\pgfmathprintnumber{\x}};
        \foreach \y in {\grd@ya,\grd@yc,...,\grd@yb}
        \node[anchor=east] at (\grd@xa,\y) {\pgfmathprintnumber{\y}};
      }
    }
  },
  minor help lines/.style={
    help lines,
    step=\pgfkeysvalueof{/tikz/grid with coordinates/minor step}
  },
  major help lines/.style={
    help lines,
    line width=\pgfkeysvalueof{/tikz/grid with coordinates/major line width},
    step=\pgfkeysvalueof{/tikz/grid with coordinates/major step}
  },
  grid with coordinates/.cd,
  minor step/.initial=.2,
  major step/.initial=1,
  major line width/.initial=2pt,
}
\def\problem@s{}
\newcounter{problems@cnt}
\newcommand{\allproblems}{\problem@s}
\definecolor{colortwo}{rgb}{0.4,0.77,0.17}
\definecolor{colorthree}{rgb}{0.01,0.51,0.93}
\pgfplotsset{compat=1.18}
\nc\bmu{{ \mathbf{u} }}
\nc\bmv{{ \mathbf{v} }}
\nc\bmp{{ \mathbf{p} }}
\nc\bmq{{ \mathbf{q} }}
\nc\bmone{{ \mathbf{1} }}
\begin{document}
\title{One-shot manipulation of coherence in dynamic \\quantum resource theory

}
\author[1]{Yu Luo\thanks{luoyu@snnu.edu.cn}}
\affil[1]{\small College of Computer Science, Shaanxi Normal University, Xi'an, 710062, China}
\date{\today}
\maketitle

\begin{abstract}
A fundamental problem in quantum information is to understand the operational significance of quantum resources. Quantum resource theories (QRTs) provide a powerful theoretical framework that aids in the analysis and comprehension of the operational meaning of these resources. Early resource theories primarily focused on analyzing static quantum resources. Recently, interest in the study of dynamic quantum resources has been growing. In this paper, we utilize superchannel theory to describe the dynamic resource theory of quantum coherence. In this dynamic resource theory, we treat classical channels as free channels and consider two classes of free superchannels that preserve channel incoherence [maximally incoherent superchannels (MISCs) and dephasing-covariant incoherent superchannels (DISCs)] as free resources. We regard the quantum Fourier transform as the golden unit of dynamic coherence resources. We first establish the one-shot theory of dynamic coherence cost and dynamic coherence distillation, which involves converting the quantum Fourier transform into an arbitrary quantum channel using MISCs and DISCs. Next, we introduce a class of free superchannels known as $\delta$-MISCs, which asymptotically generate negligible dynamic coherence. Finally, we provide upper and lower bounds for the one-shot catalytic dynamic coherence cost of quantum channels under the action of these $\delta$-MISC superchannels.
\end{abstract}
% \tableofcontents
%\newpage
%%%%%%%%%%%%%%%%%%%%%%%%%%%%%%%%%%%%%%%%%%%%%%%%%%%%%%%%%%%%%%%%%%%%%%%%%
%%%%%%%%%%%%%%%%%%%%%%%%%%%%%%%%%%%%%%%%%%%%%%%%%%%%%%%%%%%%%%%%%%%%%%%%%
\section{Introduction} 
Quantum technologies offer significant advantages over classical computation in areas such as integer factorization~\cite{Shor1999,Ahnefeld2022}, quantum system simulation~\cite{Feynman2018simulating}, and quantum information processing~\cite{Fang2020,Fang2022,Regula2022,Takagi2020}. These advantages arise from the utilization of resources unique to quantum systems, such as entanglement~\cite{Horodecki2009,Vedral1997,PhysRevA.57.1619}, coherence~\cite{Streltsov2017a}, and magic states~\cite{Veitch2014}, among others. To investigate the potential operational advantages provided by these resources, quantum resource theoris (QRTs) were developed. In general, a QRT is characterized by a set of free states and a corresponding set of free operations that preserve the free states. States that do not belong to the set of free states are considered to possess resources~\cite{Brandao15}. For example, in the QRT of entanglement, the free states can be considered to be separable states, and the free operations are local operations and classical communication. Through these two main ingredients (free states and free operations), the QRT can quantitatively analyze the amount of resources present in quantum states and understand their operational significance~\cite{Takagi2022,Liu2019a,Streltsov2017,Chitambar2019}. It is worth noting that the advantages of quantum technologies depend not only on the resources possessed by a quantum state, but more on the amount of resources possessed by a quantum operation. For example, the key factor of Shor's algorithm is the application of the quantum Fourier transform gate. Therefore, studying QRT solely from the state level is incomplete, and a clear structure is also needed to study the resource theory of operations (dynamic resources). Furthermore, from the perspective of dynamic QRT (for comparison, we refer to state-based QRT as static QRT), a quantum state can also be viewed as a quantum channel with no input but a constant output, which serves as a representation of a quantum mechanical preparation apparatus. In this sense, dynamic QRT not only encompasses static QRT, but also possesses profound implications for exploring quantum advantages~\cite{Gour2019}. For the aforementioned reasons, researchers have discussed various topics from the perspective of dynamic resource theory and the resource theory of operations, including channel~\cite{PhysRevResearch.2.012035Liu-channel,Jencova2021general}, dynamic entanglement~\cite{Gour_PhysRevLett.125.180505,Gour2021-PhysRevA.103.062422}, dynamic coherence~\cite{Saxena20}, magic channels~\cite{Wang2019quantifying,Saxena2022}, informational non-equilibrium preservability~\cite{Stratton2024}, communication~\cite{Kristjansson2020resource,Hseih2021-PRXQuantum.2.020318,Takagi2020}, measurements~\cite{Skrzypczyk2019-PhysRevLett.122.140403,Tendick2023distancebased-measure}, measurement incompatibility~\cite{PhysRevLett.124.120401Buscemi-imcompatibility,RevModPhys.95.011003-imcompatible}, instrument incompatibility~\cite{Mitra2023-PhysRevA.107.032217,Ji2024-PRXQuantum.5.010340}, measurement sharpness~\cite{mitra2022quantifying,Buscemi2024completeoperational}, multi-time processes~\cite{Berk2021resourcetheoriesof}, quantum memories~\cite{Rosset2018,yuan2021universal}, causal connection~\cite{Milz2022resourcetheoryof}, and nonlocality~\cite{Buhrman2010-RevModPhys.82.665,Forster2009-PRL.102.120401,popescu1994quantum}. Similar to static QRT, dynamic QRT also consists of two main components: free channels (analogous to free states) and free superchannels (analogous to free operations). Naturally, the significance of these dynamic resources in operational tasks has attracted the attention of researchers~\cite{Saxena20,Gour2019,Regula2022,PhysRevLett.127.060402dynamicalQRT,Takagi2022,Theurer2019,Gour_PhysRevLett.125.180505}.

In this paper, we will investigate the concepts and related issues of QRT of operations, using coherence as a specific example. We will employ the quantum Fourier transform as a fundamental building block, which serves a role analogous to that of a maximally coherent state in static coherence resource theory. This resource theory, which uses operations rather than states as its fundamental elements, aligns more naturally with the name "dynamic QRT of coherence"~\cite{KIM-one-shot-2021,Gour2019}. Since classical channels cannot generate any quantum coherence, we extend the static QRT of coherence to the dynamic case by treating classical channels as free resources~\cite{Saxena20}. It is worth noting that in some studies focused on the coherence of operations, maximally incoherent operations (MIOs), which include classical channels, are often regarded as free channels~\cite{diaz2018-using,zhao2024probabilistic}. However, this raises certain issues: When using MIOs to detect the resources of a given state, the detection process itself requires resource consumption, leading to the question of whether MIOs can truly be considered "free"~\cite{Saxena20,Theurer2019,Egloff18}. Additionally, the identity channel, as a special case of a MIO, can also be regarded as a resource. This is because a physical system often undergoes decoherence processes, while the identity channel preserves the coherence of quantum memory, thus qualifying it as a resource channel~\cite{Saxena20,Simnacher_PhysRevA.99.062319,Rosset2018}. Moreover, a channel is classical if and only if its renormalized Choi matrix is an incoherent state (free state). This is connected to the dynamic entanglement theory, which considers separable operations to be free channels~\cite{Gour_PhysRevLett.125.180505}, as well as the magic channel theory, which regards completely stabilizer-preserving operations as free channels~\cite{Saxena2022}. Furthermore, it provides insights into how the so-called Choi-defined resource theories can be extended to dynamic resources~\cite{Zanoni2024choi}.

We also introduce two sets of free superchannels: maximally incoherent superchannels (MISCs) and dephasing-covariant incoherent superchannels (DISCs) to manipulate quantum channels. Since there is no physical restriction on such sets of free superchannels, they are useful in finding fundamental limitations to the ability of a quantum channel to generate coherent states. Based on the two types of free superchannels, MISCs and DISCs, we investigate the dynamic cost problem and the dynamic distillation problem of quantum channels in a one-shot setting. The one-shot dynamic cost quantifies the amount of resources consumed by the quantum Fourier transform when simulating the target channel under the action of free superchannels. The one-shot dynamic distillation quantifies the amount of resources consumed when using any channel to simulate the quantum Fourier transform under the action of free superchannels. Additionally, we introduce catalytic channels and study the limiting behavior of dynamic coherence cost under a class of superchannels called $\delta$-MISCs, where $\delta$-MISCs are a class of superchannels that can generate a small amount of dynamic coherence when acting on classical channels. 

%Besides, studying such superchannels gives us no-go results in resource interconversion tasks involving more restricted types of operations.

This paper is organized as follows. In Sec.~\ref{sec:Preliminary}, we first introduce the necessary notation and definitions we need. In Sec.~\ref{sec:one-shot cost}, we provide the upper and lower bounds of the one-shot dynamic coherence cost under MISCs and DISCs. In Sec.~\ref{sec:one-shot distillation}, we provide the upper bound of one-shot dynamic coherence distillation under MISCs and DISCs. In Sec.~\ref{sec:one-shot catalytic cost}, we provide the upper and lower bounds of the one-shot catalytic dynamic coherence cost under $\delta$-MISCs. We summarize our results in Sec.~\ref{sec:conclusion}.
%%%%%%%%%%%%%%%%%%%%%%%%%%%%%%%%%%%%%%%%%%%%%%%%%%
\section{Preliminary} \label{sec:Preliminary}
%%%%%%%%5
\paragraph{Notations.} 
Throughout this paper, we adopt most of the notations used in Ref.~\cite{Gour2019a}. We take the logarithm to base 2 and all Hilbert spaces $\mathscr{H}$ considered are finite dimensional. We will use \(A_0, A_1, B_0, B_1\), etc., to represent static systems and their corresponding Hilbert spaces. A replica of a system is represented using a tilde symbol. For instance, the system \(\tilde{A}_0\) denotes a replica of \(A_0\), and \(\tilde{A}_0 \tilde{B}_0\) denotes a replica of the composite system \(A_0 B_0\). This indicates that \(|\tilde{A}_0| = |A_0|\) and \(|\tilde{A}_0 \tilde{B}_0| = |A_0 B_0|\), where \(|A_0|\) represents the dimension of the system \(A_0\). The collection of all bounded operators on system $A_1$ will be denoted by $\textrm{B}(A_1)$, and all density matrices will be denoted by $\textrm{D}(A_1)$. Density matrices will be denoted using lowercase Greek letters such as \(\rho\), \(\sigma\), \(\tau\), and so on. The set of all linear maps from $\textrm B(A_0)$ to $\textrm B(A_1)$ will be denoted as $\textrm{L}(A_0\to A_1)$, among which all completely-positive and trace-preserving maps ($\CPTP$) are denoted as $\textrm{CPTP}(A_0\to A_1)$.  A CPTP map is also called a quantum channel. We will use calligraphic letters (\( \mathcal{E}, \mathcal{M}, \mathcal{N}, \mathcal{P}, \mathcal{Q} \), etc.) to denote quantum channels. The action of a quantum channel will usually be denoted by parentheses, as in \( \mathcal{N}(\rho) \). The normalized Choi matrix (or Choi state) of a quantum channel $\mathcal{N}_{\tilde{A}_0\to A_1}\in\textrm{CPTP}(\tilde{A}_0\to A_1)$ is given by $ J^{\mathcal{N}}=\text{id}_{A_0}\otimes\mathcal{N}_{\tilde{A}_0\to A_1}(\phi^+_{A_0\tilde{A}_0})$, where \( \text{id}_{A_0} \) is the identity map on system $A_0$ and $\phi^+_{A_0\tilde{A}_0}=\frac{1}{|A_0|}\sum_{i,j=0}^{|A_0|-1}\ketbra{ii}{jj}_{A_0\tilde{A}_0}$ is the maximally entangled state. 

In this context, we use \(A, B, C\), etc., to represent dynamic systems and their associated Hilbert spaces. We will assume that a system \( A \) has an associated input system \( A_0 \) and an output system \( A_1 \). Therefore, we will use shorthand notation, such as \( \textrm{L}(A) = \textrm{L}(A_0 \to A_1) \), \( \textrm{CPTP}(A) = \textrm{CPTP}(A_0 \to A_1) \), etc. A linear map from \(  
\textrm{L}(A) \) to \(\textrm{L}(B) \) is called a supermap, and the set of all such supermaps will be denoted by \(\textrm{L}(A \to B) \). We will use capital Greek letters like \( \Theta, \Omega \), etc., to denote supermaps. The action of a supermap will be represented with square brackets, as in \( \Theta[\mathcal{N}] \) and \( \Theta[\mathcal{E}] \). Superchannels are special supermaps that transform a quantum channel \( \mathcal{N}_{A_0 \to A_1} \) into another quantum channel \( \mathcal{M}_{B_0 \to B_1} \) through the expression \( \mathcal{M}_{B_0 \to B_1} = \mathcal{Q}_{A_1E \to B_1} \circ (\text{id}_E\otimes\mathcal{N}_{A_0 \to A_1} ) \circ \mathcal{P}_{B_0 \to A_0E} \), where \( \mathcal{P}_{B_0 \to A_0E} , \mathcal{Q}_{A_1E \to B_1} \) are the pre- and post-processing quantum channels, respectively~\cite{Chiribella2008,Gour2019a}. In the remainder of this paper, when the system on which a channel or superchannel acts is explicitly given, we will omit the subscript representing the system. For example, we will denote $\mathcal{N}_A\in\textrm{CPTP}(A_0\to A_1)$ or $\mathcal{N}_{ A_0\to A_1 }\in\textrm{CPTP}(A_0\to A_1)$ as $\mathcal{N}$, and $\Theta_{A\to B}$ as $\Theta$.
\paragraph{Static QRT of coherence.}
We begin by introducing the framework of the static QRT of coherence. We use the 2-tuple $\mathfrak{R}=(\mathfrak{F},\mathfrak{O})$, to denote a static QRT of coherence, the set of free states is denoted by $\mathfrak{F}$, and the set of free operations is denoted by $\mathfrak{O}$. In the static QRT of coherence, a free state $\sigma\in\mathfrak{F}$ (incoherent state) can be written as $\sigma=\sum_i\sigma_i\ketbra{i}{i}$ for a fixed basis $\{\ket{i}\}$. Variants of the free operations in the resource theory of coherence have been proposed. A CPTP map $\mathcal N$ is said to have MIOs, if $\mathcal N$ maps any incoherent state to an incoherent state~\cite{Aberg2006quantifyingsuperposition}. A CPTP map $\mathcal N$ is said to be a dephasing-covariant incoherent operation (DIO), if $\mathcal N\circ \mathcal{D}=\mathcal{D}\circ\mathcal{N}$, where $\mathcal{D}(\rho)=\sum_{i=1}^{d}\ket{i}\bra{i}\rho\ket{i}\bra{i}$ is a completely dephasing channel~\cite{Chitambar2016,Marvian2016}. 
A CPTP map $\mathcal N$ is said to be a detection incoherent operation (DI)~\cite{Theurer2019} (or nonactivating~\cite{Liu2017-PhysRevLett.118.060502}), if $\mathcal D\circ \mathcal{N}=\mathcal{D}\circ\mathcal{N}\circ\mathcal{D}$. Clearly, the set of DIOs is a subset of MIOs. Let $\textrm{D}(\mathscr{H})$ be the set of quantum states, the basic two requirements for a functional $C: \textrm{D}(\mathscr{H})\to \mathbb{R}^+$ being a coherence measure for $\mathfrak{R}$ are~\cite{Streltsov17_rmp}:\\
$\textbf{[A1]}$~(Non-negativity): $C(\rho)\geq0$ and the equality holds if and only if $\mathcal{\rho}\in\mathfrak{F}$;\\
$\textbf{[A2]}$~(Monotonicity): $C(\mathcal{\rho})\geq C(\mathcal{E}(\rho))$, where $\mathcal{E}\in\mathfrak{O}$ is a free operation.

\paragraph{Dynamic QRT of coherence.}\label{sec:dyn-qrt}
In this section, we will show that the framework of dynamic QRT of coherence. In a dynamic QRT, denoted as a 2-tuple $\hat{\mathfrak{R}}=(\hat{\mathfrak{F}},\hat{\mathfrak{O}})$, the set of free channels is denoted by $\hat{\mathfrak{F}}$, and the set of free superoperations is denoted by $\hat{\mathfrak{O}}$. Free channels $\hat{\mathfrak{F}}$ are those quantum channels that do not possess any resource, and free superoperations $\hat{\mathfrak{O}}$ are a subset of superchannels that transform free channels into free channels \cite{PhysRevResearch.2.012035Liu-channel}. Let $\textrm{CPTP}(\mathscr{H})$ be the set of quantum channels, the basic two requirements for a functional $\hat C: \textrm{CPTP}(\mathscr{H})\to \mathbb{R}^+$ being a channel coherence measure for $\hat{\mathfrak{R}}$ are~\cite{PhysRevResearch.2.012035Liu-channel}:\\
$\textbf{[B1]}$~(Non-negativity): $\hat C(\mathcal{N})\geq0$ and the equality holds if and only if $\mathcal{N}\in\hat{\mathfrak{F}}$;\\
$\textbf{[B2]}$~(Monotonicity): $\hat C(\mathcal{N})\geq \hat C(\Theta[\mathcal{N}])$, where $\Theta\in\hat{\mathfrak{O}}$ is a free superoperation.

In the dynamic resource theory of coherence, a free channel $\mathcal{N}\in \hat{\mathfrak{F}}$ is defined as a quantum channel that maps any incoherent states to another incoherent states. There are several types of channels can be considered as the free channel in the dynamic resource theory of coherence~\cite{Saxena20,diaz2018-using}. For instance: classical channel, MIOs and DIOs can be considered as the free channels. 

\begin{definition}[Classical channel~\cite{Saxena20}]
$\mathcal{N}\in\textrm{CPTP}(A_0\to A_1)$ is called a classical channel if it satisfies
\begin{equation}\label{eq:classicalchannel}
\mathcal{N}=\mathcal{D}_{A_1}\circ\mathcal{N}\circ\mathcal{D}_{A_0}
\end{equation}
where $\mathcal{D}_{A_k}(\rho)=\sum_{i=1}^{d_k}\ket{i}\bra{i}\rho\ket{i}\bra{i}$ is a completely dephasing channel for systems $A_k$ ($k=0,1$) in the incoherent bases $\{\ket{i}\}_{i=0}^{d_k-1}$.
\end{definition}
We will denote the set of classical channels that take system \(A_0\) to \(A_1\) by \(\textrm{C}(A_0 \to A_1)\). It is easy to see that the normalized Choi matrix $J^{\mathcal{N}}$ of a classical channel $\mathcal{N}$ is an incoherent state. Similar to the numerous free operations in the static QRT of coherence, there are various free superchannels in the dynamic QRT of coherence~\cite{Saxena20}. Here, we list two types of free superchannels studied in this paper:

\begin{definition}[Maximally incoherent superchannels (MISC)~\cite{Saxena20}]
    A superchannel $\Theta\in \hat{\mathfrak{O}}(A\to B)$ is said to be MISC if for any quantum channel $\mathcal{N}_A\in\textrm{C}(A_0\to A_1)$, $\Theta[\mathcal{N}_A]\in \textrm{C} (B_0\to B_1)$.
\end{definition}

\begin{definition}[Dephasing incoherent superchannels (DISC)~\cite{Saxena20}]
    A superchannel $\Theta\in\hat{\mathfrak{O}}(A\to B)$ is said to be DISC if 
    \begin{equation}
        \Delta_B\circ\Theta=\Theta\circ\Delta_B,
    \end{equation}
where $\Delta_B\in\hat{\mathfrak{O}}(B\to B)$ is the dephasing superchannel defined by $\Delta_B[\mathcal{N}_B]=\mathcal{D}_{B_1}\circ\mathcal{N}_B\circ\mathcal{D}_{B_0}$.
\end{definition}    

We denote the sets of MISC, DISC as $\textrm{MISC}(A\to B)$ and $\textrm{DISC}(A\to B)$, respectively. 

%\textcolor{blue}{
%\paragraph{Measures of static QRT of coherence}
%We will show that the measures of static QRT of coherence are used in this paper. 
%For any quantum state $\rho$, the robustness of coherence of $\rho$ is defined as~\cite{Napoli2016-PhysRevLett.116.150502,Piani2016}
%\begin{equation}
%    C_R(\rho)=\min\{ s\geq0: \frac{1}{1+s}(\rho+s\sigma)\in\mathfrak{F}\},
%\end{equation}
%where $\sigma$ is a quantum state.
%}

\paragraph{Measures of static and dynamic QRT of coherence.}\label{sec:meausres-dyn-qrt}
In this section, we will show that the resource measures of static and dynamic QRTs of coherence is used in this paper. 

For any quantum state $\rho$, the robustness of coherence for $\rho$ is defined as~\cite{Napoli2016-PhysRevLett.116.150502,Piani2016}
\begin{equation}
    C_R(\rho)=\min\{ s\geq0: \frac{1}{1+s}(\rho+s\sigma)\in\mathfrak{F}\},
\end{equation}
where $\sigma$ is a quantum state.
The log-robustness of coherence for $\rho$ is defined as
\begin{equation}
    LR(\rho)=\min_{\sigma\in\mathfrak{F}}D_{\max}(\rho||\sigma),
\end{equation}
where $D(\rho||\sigma)=\log\min\{\lambda\geq 0:\rho\leq\lambda\sigma\}$ is the max-relative entropy~\cite{Datta2009}. The condition $\rho \leq \lambda \sigma$ means that $\lambda \sigma - \rho$ is positive semi-definite.

For any quantum channel \(\mathcal{N}\in \text{CPTP}(A_0\to A_1)\), the robustness of coherence for $\mathcal{N}$ is defined as
\begin{align}\label{eq:ROC}
\hat{C}_R(\mathcal{N})=\min\{s\geq0: \mathcal{M}\in \text{CPTP}(A_0\to A_1),\\\nonumber
\frac{1}{1+s}(\mathcal{N}+s\mathcal{M})\in\textrm{C}(A_0\to A_1)
\}.    
\end{align}

The log-robustness of coherence for $\mathcal{N}$ is defined as
\begin{equation}
\hat{LR}(\mathcal{N})=\min_{ \mathcal{M}\in\textrm{C}(A_0\to A_1)}\hat{D}_{\max}(\mathcal{N}||\mathcal{M}),
\end{equation}
where $\hat{D}_{\max}(\mathcal{N}||\mathcal{M})=\log\min\{\lambda\geq 0:\mathcal{N}\leq\lambda\mathcal{M}\}$ is the max-relative entropy of the channel. The inequality $\mathcal{N} \leq \lambda \mathcal{M}$ is interpreted in terms of the completely positive ordering of superoperators, meaning that $\lambda \mathcal{M} - \mathcal{N}$ is a completely-positive map. Equivalently, the max-relative entropy can also be expressed as \(\hat{D}_{\max}(\mathcal{N}||\mathcal{M})=\log \min \{ \lambda \geq 0 : J^{\mathcal{N}} \leq \lambda J^{\mathcal{M}} \} \). The proof of this equivalent form can be found in Lemma 12 of Ref.~\cite{Wilde2020amortized}.

The smoothed version of the log-robustness of coherence is defined as follows~\cite{gour2024resources,Yu2024-PhysRevA.109.052413}:
\begin{equation}
    \hat{LR}_\epsilon(\mathcal N)=\min_{\mathcal{N}'\in B_{\epsilon}(\mathcal{N})  } \hat{LR} (\mathcal{N}'),
\end{equation}
where $\mathcal{N}' \in B_{\epsilon}(\mathcal{N})\iff\frac{1}{2}||\mathcal{N}-\mathcal{N}'||_{\diamond}\leq\epsilon$ and $||X_A||_{\diamond}=\max_{\rho_{AE}}\Tr|X_A\otimes\textit{id}_E(\rho_{AE})|$ is the diamond norm~\cite{PhysRevResearch.2.012035Liu-channel}. The inequality $||\Theta[\mathcal{N}_1] - \Theta[\mathcal{N}_2]||_{\diamond} \leq ||\mathcal{N}_1 -\mathcal{N}_2||_{\diamond}$, valid for any superchannel $\Theta$ and quantum channels $\mathcal{N}_1$ and $\mathcal{N}_2$~\cite{Gour2019a}, makes it straightforward to deduce that this smoothed quantity $\hat{LR}_\epsilon(\mathcal N)$ also qualifies as a dynamic resource monotone. 

%%%%%%%%%%%%%%%%%%%%%%%%%%%%%%%%%%%%%%%%%%%%%%%%%%%%%%%%%%%%%%%%%%%%%%%%%%%%%%%%%%
\section{One-shot dynamic coherence cost under MISC and DISC}\label{sec:one-shot cost}
Resource cost stands out as an important subclass of resource manipulation tasks. Dynamic coherence cost is a protocol to transform the quantum Fourier transform (QFT) $\mathcal{F}_{d}$ into a target channel $\mathcal{N}$ using free superchannels, in which a quantum Fourier transforms is transformed into the desired channel. The optimal performance of this task is characterized by the one-shot dynamic coherence cost. Formally, we have following definition:
\begin{definition}\label{Def.one-shot-cost}
Given $\epsilon\geq0$, the one-shot dynamic coherence cost of a quantum channel $\mathcal{N}\in\textrm{CPTP}(A_0\to A_1)$ is defined as
\begin{equation}
c_{\epsilon,\hat{\mathfrak{O}}}^{(1)}(\mathcal{N})=\min \{\log d^2: \frac{1}{2}||\mathcal{N}-\Theta[\mathcal{F}_{d}]||_{\diamond}\leq\epsilon, \Theta\in\hat{\mathfrak{O}}(A'\to A) \},
\end{equation}
where $\hat{\mathfrak{O}}\in\{\textrm{MISC}, \textrm{DISC}\}$ and the QFT channel $\mathcal{F}_d(\cdot)=F_d(\cdot) F_d^{\dag}$ consists of the application of the QFT gate~\cite{Nielsen2010}:
\begin{equation}
    F_d=\frac{1}{\sqrt d}\sum_{j,k=0}^{d-1}e^{\frac{2\pi i}{d}jk} \ketbra{j}{k}.  
\end{equation}   
\end{definition}

The one-shot dynamic coherence cost can also be viewed as the minimum dimension of $\mathcal{F}_{d}$ needed to simulating a desired channel $\mathcal{N}$ by using some free pre-selected (post-selected) channels.

\begin{remark}
In dynamic QRTs, golden units are fundamental resource elements that serve as the most valuable and universally applicable building blocks within a given resource framework. They are often the maximally resourceful channels, meaning that any other resourceful channel can be generated from them using free superchannels. In Appendix~\ref{app:golden-units}, we show that any channel can be obtained from a QFT channel through a \textrm{MISC}, indicating that the QFT channel can be regarded as the golden unit in the dynamic QRT of coherence where the free superchannels are \textrm{MISCs}.
In the dynamic resource theory where free superchannels belong to \(\textrm{DISCs}\), directly proving that QFT channels can be converted into any other channels is challenging. However, it is reasonable to regard them as the golden units of this dynamic resource theory of coherence for the following reasons:  First, as shown in Appendix~\ref{app:maximal-replacement}, the dephasing log-robustness of the QFT channel (see Definition \ref{def:dephasing-logrob}) attains the maximal value of this measure, \(\log d^2\), which is a necessary condition for being a golden unit. 

Second, in addition to the QFT channel, maximal replacement channels \(\mathcal{R}_d(\cdot) = \Tr(\cdot) \psi^+_{d}\) could also be considered as potential golden units for this dynamic resource theory, where \(\psi^+_{d} = \frac{1}{d} \sum_{j,k=0}^{d-1} \ketbra{j}{k}\) is the maximally coherent state. However, as shown in Appendix~\ref{app:maximal-replacement}, both the log-robustness and the dephasing log-robustness of maximal replacement channels are \(\log d\), whereas the log-robustness and dephasing log-robustness of QFT channels are \(\log d^2\).  
This indicates that maximal replacement channels do not meet the \textit{necessary condition} for being golden units. Additionally, in Appendix~\ref{app:QRT-convert-Replacement}, we construct a free superchannel \( \Theta \in \textrm{DISC} \) such that \( \Theta[\mathcal{F}_d] = \mathcal{R}_d \).
In summary, treating QFT channels as the golden units for the two types of free superchannels (\(\textrm{MISCs}\) and \(\textrm{DISCs}\)) is a well-justified choice.
\end{remark}

%%%%%%%%%%%%%%%%%%%%%%%%%%%%%%%%%%%%%%%%%%%%%%%%%%%%%%%

The following result provides the upper and lower bounds of the one-shot dynamic coherence cost under MISCs:
\begin{shaded}
\begin{theorem}\label{the:cost-misc-upp-lower}
Let $d_0=\min\{d\in\mathbb{N}: \log d^2\geq   \hat{LR}_{\epsilon}(\mathcal{N})\}$. The one-shot dynamic coherence cost under MISC is bounded as follows
\begin{equation}\label{eq:misc-equality}
\hat{LR}_{\epsilon}(\mathcal N)\leq c_{\epsilon,\textrm{MISC}}^{(1)}(\mathcal N)< \hat{LR}_{\epsilon}(\mathcal N)+\log(\frac{d_0}{d_0-1})^2.
\end{equation}  
\end{theorem}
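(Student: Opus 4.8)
My plan is to prove the two inequalities separately: the left one by monotonicity, and the right one by an explicit conversion protocol followed by elementary logarithm arithmetic.

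For the lower bound I would use that the (unsmoothed) log-robustness $\hat{LR}$ is a dynamic coherence monotone under MISC, together with its value on the golden unit. Let $\Theta\in\textrm{MISC}$ and $d$ be optimal in the definition of $c^{(1)}_{\epsilon,\textrm{MISC}}(\mathcal N)$, so that $c^{(1)}_{\epsilon,\textrm{MISC}}(\mathcal N)=\log d^2$ and $\mathcal N':=\Theta[\mathcal F_d]$ obeys $\tfrac12\|\mathcal N-\mathcal N'\|_\diamond\le\epsilon$, i.e. $\mathcal N'\in B_\epsilon(\mathcal N)$. Because a MISC maps classical channels to classical channels and preserves the completely-positive ordering of superoperators, $\hat{LR}(\Theta[\mathcal F_d])\le\hat{LR}(\mathcal F_d)$. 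Combining this with the definition of the smoothed robustness and the fact (established in Appendix~\ref{app:maximal-replacement}) that $\hat{LR}(\mathcal F_d)=\log d^2$ gives $\hat{LR}_\epsilon(\mathcal N)\le\hat{LR}(\mathcal N')\le\hat{LR}(\mathcal F_d)=\log d^2=c^{(1)}_{\epsilon,\textrm{MISC}}(\mathcal N)$, which is the left inequality.

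For the upper bound, the key observation is that the correction term is purely an artifact of rounding $\hat{LR}_\epsilon(\mathcal N)$ up to the nearest achievable QFT dimension. By minimality of $d_0$ one has $\log(d_0-1)^2<\hat{LR}_\epsilon(\mathcal N)$ (assuming $\mathcal N$ is resourceful, so $d_0\ge2$; the case $d_0=1$ is trivial since then $\mathcal N$ is classical), whence $\log d_0^2=\log(d_0-1)^2+\log\big(\tfrac{d_0}{d_0-1}\big)^2<\hat{LR}_\epsilon(\mathcal N)+\log\big(\tfrac{d_0}{d_0-1}\big)^2$. Thus it suffices to prove the cleaner bound $c^{(1)}_{\epsilon,\textrm{MISC}}(\mathcal N)\le\log d_0^2$, i.e. to exhibit one MISC that turns $\mathcal F_{d_0}$ into a channel $\epsilon$-close to $\mathcal N$. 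To this end I would let $\mathcal N'$ be the minimizer in $\hat{LR}_\epsilon(\mathcal N)$, so $\tfrac12\|\mathcal N-\mathcal N'\|_\diamond\le\epsilon$ and $\hat{LR}(\mathcal N')=\hat{LR}_\epsilon(\mathcal N)\le\log d_0^2$. By the Choi form of the channel max-relative entropy this yields a classical channel $\mathcal M$ and a number $\lambda=2^{\hat{LR}_\epsilon(\mathcal N)}\le d_0^2$ with $J^{\mathcal N'}\le\lambda J^{\mathcal M}$, equivalently the robustness decomposition $\mathcal M=\tfrac1\lambda\mathcal N'+\tfrac{\lambda-1}{\lambda}\mathcal K'$ for some channel $\mathcal K'$. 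If I can construct a MISC $\Theta$ with $\Theta[\mathcal F_{d_0}]=\mathcal N'$, then $\tfrac12\|\mathcal N-\Theta[\mathcal F_{d_0}]\|_\diamond\le\epsilon$, hence $c^{(1)}_{\epsilon,\textrm{MISC}}(\mathcal N)\le\log d_0^2$, completing the proof.

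The crux, and the step I expect to be hardest, is the explicit construction of this MISC from the decomposition $J^{\mathcal N'}\le d_0^2 J^{\mathcal M}$. The natural route is to work at the level of Choi matrices: the Choi state of $\mathcal F_{d_0}$ is the rank-one maximally coherent state $\ketbra{\Psi}{\Psi}$ on $d_0^2$ levels (amplitudes $\tfrac1{d_0}e^{2\pi i jk/d_0}$), whose state log-robustness equals $\log d_0^2$, so $J^{\mathcal N'}$ fits ``under'' $d_0^2 J^{\mathcal M}$ with room to spare. I would then design pre- and post-processing channels $\mathcal P$ and $\mathcal Q$ (equivalently, the Choi matrix of $\Theta$ via the link product) so that $\Theta[\mathcal F_{d_0}]$ reproduces $J^{\mathcal N'}$, reusing the golden-unit conversion already built in Appendix~\ref{app:golden-units} but now with the dimension pinned down by the robustness bound. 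Two admissibility checks are needed: (i) that $\Theta$ is a genuine superchannel, i.e. its Choi obeys the causal-ordering/trace-preservation marginal conditions—this is exactly where the normalization of $J^{\mathcal M}$ and the bound $\lambda\le d_0^2$ enter; and (ii) that $\Theta$ is MISC, i.e. sends every classical channel to a classical channel, which should hold because the construction only involves dephasing-compatible free ingredients. Verifying (i) and (ii) while keeping the output exactly equal to $\mathcal N'$ is the main obstacle; the remainder is the monotonicity argument and the elementary arithmetic above.
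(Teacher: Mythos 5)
Your lower bound is correct and is essentially the paper's own argument (its Lemma~\ref{lemma:cost-f_d}: push the optimal classical channel for $\hat{LR}(\mathcal{F}_d)$ through $\Theta$, use that $\Theta$ preserves classicality together with data processing of $\hat{D}_{\max}$, and $\hat{LR}(\mathcal{F}_d)=\log d^2$), and your rounding arithmetic $\log d_0^2<\hat{LR}_\epsilon(\mathcal{N})+\log(\frac{d_0}{d_0-1})^2$ is also the paper's. The genuine gap is the achievability step. You reduce the upper bound to ``if I can construct a MISC $\Theta$ with $\Theta[\mathcal{F}_{d_0}]=\mathcal{N}'$,'' explicitly flag this as the hardest step, and then offer only a sketch rather than a construction; a conditional statement here leaves the theorem unproven. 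Moreover, the route you sketch --- reusing the golden-unit conversion of Appendix~\ref{app:golden-units} --- does not work in this setting: that construction requires the QFT dimension to match the input/output dimensions of the target channel ($|A_0|=|A_1|=|B_0|=|B_1|=d$) and makes no use of the robustness decomposition, whereas $d_0$ is fixed by $\hat{LR}_\epsilon(\mathcal{N})$ and is in general much smaller than $|A_0|$. A dimension-blind version of that conversion would in fact contradict your own lower bound (it would let a low-dimensional, even trivial, QFT channel exactly simulate highly coherent channels, impossible since MISCs map classical channels to classical channels).

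What is missing is precisely the step where the decomposition you already extracted, $\mathcal{N}'\le \lambda\,\mathcal{M}$ with $\mathcal{M}$ classical and $\lambda=2^{\hat{LR}_\epsilon(\mathcal{N})}\le d_0^2$, gets turned into a superchannel. The paper does this with a measure-and-prepare supermap: for an arbitrary input channel $\mathcal{E}$, set $t=\Tr(J^{\mathcal{F}_{d_0}}J^{\mathcal{E}})$ and define $\Theta[\mathcal{E}]=\frac{d_0^2}{d_0^2-1}(1-t)\,(\mathcal{M}-\frac{1}{d_0^2}\mathcal{N}')+t\,\mathcal{N}'$. Your bound $\lambda\le d_0^2$ is exactly what makes $\mathcal{M}-\frac{1}{d_0^2}\mathcal{N}'$ completely positive, so $\Theta$ is a genuine superchannel (your check (i)); the MISC property (your check (ii)) holds because every classical channel $\mathcal{Q}$ has an incoherent Choi state while $J^{\mathcal{F}_{d_0}}$ has all diagonal entries equal to $1/d_0^2$, hence $t=1/d_0^2$ and $\Theta[\mathcal{Q}]=\mathcal{M}$, which is classical; and $t=1$ for $\mathcal{E}=\mathcal{F}_{d_0}$ gives $\Theta[\mathcal{F}_{d_0}]=\mathcal{N}'$ exactly. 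This one formula discharges both admissibility checks you listed and closes the hole; without it (or an equivalent explicit construction) your upper bound remains an unproven claim at exactly the point you identified as the crux.
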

\end{shaded}
Before proving Theorem \ref{the:cost-misc-upp-lower}, we need to prove following lemma:
\begin{shaded}
\begin{lemma}\label{lemma:cost-f_d}
For any $\Theta\in\hat{\mathfrak{O}}(A'\to A)$, if $\frac{1}{2}||\mathcal N-\Theta[\mathcal{F}_d]||_{\diamond}\leq\epsilon$, then we have
    \begin{equation}
        \hat{LR}_{\epsilon}(\mathcal{N})\leq\hat{LR}(\mathcal{F}_d)=\log d^2.
    \end{equation}
\end{lemma}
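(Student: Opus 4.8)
The plan is to split the statement into two independent pieces: the monotonicity inequality $\hat{LR}_{\epsilon}(\mathcal{N})\le \hat{LR}(\mathcal{F}_d)$ and the value computation $\hat{LR}(\mathcal{F}_d)=\log d^2$. For the inequality, the hypothesis $\tfrac12\|\mathcal{N}-\Theta[\mathcal{F}_d]\|_\diamond\le\epsilon$ says precisely that $\Theta[\mathcal{F}_d]\in B_\epsilon(\mathcal{N})$, so by the definition $\hat{LR}_\epsilon(\mathcal{N})=\min_{\mathcal{N}'\in B_\epsilon(\mathcal{N})}\hat{LR}(\mathcal{N}')$ we immediately get $\hat{LR}_\epsilon(\mathcal{N})\le \hat{LR}(\Theta[\mathcal{F}_d])$. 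It then remains to show $\hat{LR}(\Theta[\mathcal{F}_d])\le \hat{LR}(\mathcal{F}_d)$, i.e.\ that the (unsmoothed) log-robustness is monotone under a free superchannel $\Theta\in\hat{\mathfrak{O}}$.

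To prove this monotonicity I would use the Choi form $\hat{D}_{\max}(\mathcal{N}\|\mathcal{M})=\log\min\{\lambda:J^{\mathcal{N}}\le\lambda J^{\mathcal{M}}\}$. Let $\mathcal{M}\in\textrm{C}(A'_0\to A'_1)$ be optimal for $\mathcal{F}_d$, so $\lambda\mathcal{M}-\mathcal{F}_d$ is completely positive with $\log\lambda=\hat{LR}(\mathcal{F}_d)$. Writing $\Theta[\,\cdot\,]=\mathcal{Q}\circ(\mathrm{id}_E\otimes\,\cdot\,)\circ\mathcal{P}$ with $\mathcal{P},\mathcal{Q}$ CPTP, the supermap carries completely positive maps to completely positive maps (tensoring with $\mathrm{id}_E$ and composing with CP maps preserves complete positivity); applying it to $\lambda\mathcal{M}-\mathcal{F}_d$ and using linearity shows $\lambda\Theta[\mathcal{M}]-\Theta[\mathcal{F}_d]$ is completely positive, i.e.\ $J^{\Theta[\mathcal{F}_d]}\le\lambda J^{\Theta[\mathcal{M}]}$. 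Finally $\Theta[\mathcal{M}]$ is again classical: for a MISC this is the defining property, and for a DISC it follows because classical channels are exactly the fixed points of the dephasing superchannel $\Delta$ and $\Theta$ commutes with $\Delta$. Hence $\hat{LR}(\Theta[\mathcal{F}_d])\le\log\lambda=\hat{LR}(\mathcal{F}_d)$, which closes the inequality.

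For the value I would compute the Choi state $J^{\mathcal{F}_d}=\ketbra{\Phi}{\Phi}$, where $\ket{\Phi}=\mathrm{id}\otimes\mathcal{F}_d(\phi^+)$ is a pure state on $A_0A_1$ all of whose $d^2$ amplitudes in the product basis $\{\ket{j}\ket{k}\}$ have equal modulus $1/d$; thus $\ket{\Phi}$ is maximally coherent in dimension $d^2$. A classical channel has diagonal Choi matrix $J^{\mathcal{M}}=\sum_{j,k}p_{jk}\ketbra{jk}{jk}$ with $p_{jk}\ge0$ and marginals $\sum_k p_{jk}=1/d$ (trace preservation). Using the elementary identity $\min\{\lambda:\ketbra{\Phi}{\Phi}\le\lambda D\}=\braandket{\Phi}{D^{-1}}{\Phi}$ for positive definite $D$, the log-robustness reduces to minimizing $\tfrac{1}{d^2}\sum_{j,k}p_{jk}^{-1}$ subject to the row-marginal constraints. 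These constraints decouple over $j$, and the Cauchy--Schwarz (equivalently AM--HM) bound $\big(\sum_k p_{jk}\big)\big(\sum_k p_{jk}^{-1}\big)\ge d^2$ gives $\tfrac{1}{d^2}\sum_k p_{jk}^{-1}\ge d$ per row, with equality at the uniform choice $p_{jk}=1/d^2$; summing the $d$ rows yields minimum $d^2$, so $\hat{LR}(\mathcal{F}_d)=\log d^2$.

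The step requiring the most care is the value computation: one must check that the uniform distribution $p_{jk}=1/d^2$ is \emph{feasible} (it meets every marginal, so the channel-constrained robustness coincides with the unconstrained state robustness) and genuinely optimal, and handle the degenerate case where some $p_{jk}=0$ forces $\lambda=\infty$ and is therefore never minimal. The monotonicity step is routine once one records that superchannels preserve complete positivity, but I would take extra care with the DISC case, where ``$\Theta[\mathcal{M}]$ classical'' rests on the fixed-point characterization of classical channels under dephasing together with the commutation $\Delta\circ\Theta=\Theta\circ\Delta$.
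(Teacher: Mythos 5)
Your proof is correct, and its skeleton for the inequality is the same as the paper's: the hypothesis puts $\Theta[\mathcal{F}_d]$ inside $B_\epsilon(\mathcal{N})$, so $\hat{LR}_\epsilon(\mathcal{N})\le \hat{LR}(\Theta[\mathcal{F}_d])$, and monotonicity of $\hat{LR}$ under a free superchannel is obtained by pushing the optimal classical channel $\mathcal{E}^*$ for $\mathcal{F}_d$ through $\Theta$ and using data processing for $\hat{D}_{\max}$. Where you genuinely diverge is in what you take as given. The paper invokes the data-processing inequality for the channel max-relative entropy as a known fact; you re-derive it from scratch by writing $\Theta=\mathcal{Q}\circ(\mathrm{id}_E\otimes\,\cdot\,)\circ\mathcal{P}$ and observing that such supermaps preserve complete positivity, so $\lambda\mathcal{M}-\mathcal{F}_d$ CP implies $\lambda\Theta[\mathcal{M}]-\Theta[\mathcal{F}_d]$ CP. Likewise, the paper states $\Theta[\mathcal{E}^*]\in\mathrm{C}(A_0\to A_1)$ without comment (immediate for MISC, but for DISC it needs an argument); you supply the missing DISC case correctly via the fixed-point characterization of classical channels under $\Delta$ together with $\Delta\circ\Theta=\Theta\circ\Delta$. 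Most substantially, the paper never proves $\hat{LR}(\mathcal{F}_d)=\log d^2$ at all --- it is quoted from an earlier reference --- whereas you compute it: the Choi state of $\mathcal{F}_d$ is a pure maximally coherent state of dimension $d^2$, classical channels have diagonal Choi matrices with row marginals $1/d$, and the identity $\min\{\lambda:\ketbra{\Phi}{\Phi}\le\lambda D\}=\braandket{\Phi}{D^{-1}}{\Phi}$ plus AM--HM per row yields the optimum $d^2$ at the uniform $p_{jk}=1/d^2$ (with the degenerate $p_{jk}=0$ case correctly discarded). The net effect is that your proof is fully self-contained where the paper leans on citations; the paper's version is shorter but transfers the burden of both key ingredients elsewhere.
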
    
\end{shaded}
\begin{proof}
Let $\mathcal{E}^*\in\textrm{C}(A'_0\to A'_1)$ be the optimal quantum channel for $\hat{LR}(\mathcal{F}_d)$. Then, we have:
\begin{eqnarray}
   \hat{LR}_{\epsilon}(\mathcal{N})
   &\leq&\nonumber
   \hat{LR}(\Theta[\mathcal{F}_d])   
   \\\nonumber&=&  
   \min_{\mathcal{M}\in\textrm{C}(A_0\to A_1)}\hat{D}_{\max}(\Theta[\mathcal{F}_d]||\mathcal{M})
   \\\nonumber &\leq& 
   \hat{D}_{\max}(\Theta[\mathcal{F}_d]||\Theta[\mathcal{E}^*])
   \\\nonumber &\leq&
   \hat{D}_{\max}(\mathcal{F}_d||\mathcal{E}^*)
   \\\nonumber &=&
   \hat{LR}(\mathcal{F}_d)
   \\ &=&
   \log d^2,
\end{eqnarray}
where the fact that the second inequality holds follows from $\Theta[\mathcal{E}^*]\in\textrm{C}(A_0\to A_1)$.
\end{proof}

Now, we will prove the Theorem \ref{the:cost-misc-upp-lower}:

\begin{proof}\textbf{of Theorem \ref{the:cost-misc-upp-lower}}
For the lower bound, setting
\begin{equation}
    d_0=\min\{d\in\mathbb{N}: \log d^2\geq   \hat{LR}_{\epsilon}(\mathcal{N})\}. 
\end{equation}
Then, lemma \ref{lemma:cost-f_d} shows that 
\begin{equation}
 c_{\epsilon,\hat{\mathfrak{O}}}^{(1)}(\mathcal N) \geq \log d_0^2   \geq \hat{LR}_{\epsilon}(\mathcal N),
\end{equation}  
  For the upper bound, let $\mathcal{N}_\epsilon\in \textrm{CPTP}(A_0\to A_1)$ satisfies that $\frac{1}{2}||\mathcal{N}_\epsilon-\mathcal{N}||_{\diamond}\leq\epsilon$ and $\mathcal{P}$ be the classical channel that achieves $\hat{LR}_\epsilon(\mathcal{N})$, which implies $\mathcal{N}_\epsilon\leq 2^{\hat{LR}_\epsilon(\mathcal{N})}\mathcal{P}$. We can construct a supermap $\Theta_{A\to B}$ as follows
\begin{eqnarray}
    \Theta[\mathcal{N}] 
    &=&\nonumber
    \frac{d_0^2}{d_0^2-1}  (\Tr(J^{\mathcal{F}_{d_0}}J^{\mathcal{N}})-\frac{1}{d_0^2})\mathcal{N}_{\epsilon}    +     
    \frac{d_0^2}{d_0^2-1}  (1-\Tr(J^{\mathcal{F}_{d_0}}J^{\mathcal{N}}))\mathcal{P}    
    \\ &=&
    \frac{d_0^2}{d_0^2-1}  (1-\Tr(J^{\mathcal{F}_{d_0}}J^{\mathcal{N}}))  (\mathcal{P}-\frac{1}{d_0^2}\mathcal{N}_{\epsilon} )    +    
    \Tr(J^{\mathcal{F}_{d_0}}J^{\mathcal{N}})    \mathcal{N}_\epsilon,
\end{eqnarray}
where \( J^{\mathcal{F}_{d_0}} \) and \( J^{\mathcal{N}} \) denote the normalized Choi matrix of the QFT channel \( \mathcal{F}_{d_0} \) and the quantum channel \( \mathcal{N} \), respectively.
From the definition of $d_0$, we have $\mathcal{P}-\frac{1}{d_0^2}\mathcal{N}_{\epsilon}\geq0$, which implies $\Theta$ is a superchannel~\cite{Gour2019a}. Since for any classical channel $\mathcal{Q}\in \textrm{C}(A_0\to A_1)$, $\Tr(J^{\mathcal{F}_{d_0}}J^{\mathcal{Q}})=\frac{1}{d_0^2}$, so $\Theta[\mathcal{Q}]\in\textrm{C}(B_0\to B_1)$. Thus, $\Theta\in \textrm{MISC}$. Meanwhile, we have $\Theta[\mathcal{F}_{d_0}] = \mathcal{N}_\epsilon$. This means that $\log d_0^2$ is the achievable rate of $c_{\epsilon,\textrm{MISC}}^{(1)}(\mathcal N)$. By the definition of $d_0$, we have 
\begin{equation}
\log (d_0-1)^2<\hat{LR}_\epsilon(\mathcal{N})\implies
\log d_0^2+\log (d_0-1)^2<\hat{LR}_\epsilon(\mathcal{N})+\log d_0^2 \implies
\log d_0^2< \hat{LR}_\epsilon(\mathcal{N})+\log (\frac{d_0}{d_0-1})^2.
\end{equation}
\end{proof}
%%%%%%%%%%%%%%%%%%%%%%%%

Although \(\log d^2_0\) is an achievable rate for \(c^{(1)}_{\epsilon,\textrm{MISC}}(\mathcal{N})\), the upper bound determined by Eq.~(\ref{eq:misc-equality}) is particularly useful in calculating the regularization of the smoothed coherence cost, as defined in Eq.~(\ref{def:asy-log-rob}).

To establish the bounds of the one-shot dynamic coherence cost under DISC, we need to define the following quantity:
%%%%%%%%%%%%%%%%%%%%%%%%%%%%%
\begin{definition}\label{def:dephasing-logrob}
For any quantum channel $\mathcal{N}\in \textrm{CPTP}(A_0\to A_1)$, its dephasing log-robustness of coherence is defined as
\begin{equation}
    \hat{LR}_{\Delta}(\mathcal{N})=\hat{D}_{\max} (\mathcal{N}||\Delta[\mathcal{N}]),
\end{equation}
where $\Delta$ is the dephasing superchannel. Its smoothed version is defined as
\begin{equation}
    \hat{LR}_{\epsilon,\Delta}(\mathcal{N})=\min_{\mathcal{N}'\in B_{\epsilon}(\mathcal{N})  } \hat{D}_{\max} (\mathcal{N}'||\Delta[\mathcal{N}']).
\end{equation}
\end{definition}

The following result provides the upper and lower bounds of the one-shot dynamic coherence cost under DISCs:
\begin{shaded}
\begin{theorem}\label{the:log-rob-lower-DISC}
Let $d_0=\min\{ d\in\mathbb{N}: \log d^2\geq \hat{LR}_{\epsilon,\Delta}(\mathcal{N})\}$, the one-shot dynamic coherence cost under DISC is bound as follows
\begin{equation}\label{eq:disc-equality}
\hat{LR}_{\epsilon,\Delta}(\mathcal N)\leq c_{\epsilon,\textrm{DISC}}^{(1)}(\mathcal N) < \hat{LR}_{\epsilon,\Delta}(\mathcal N)+\log (\frac{d_0}{d_0-1})^2.
\end{equation}  
\end{theorem}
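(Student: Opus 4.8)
The plan is to follow the template of the proof of Theorem~\ref{the:cost-misc-upp-lower} almost verbatim, replacing the smoothed log-robustness $\hat{LR}_\epsilon$ throughout by its dephasing counterpart $\hat{LR}_{\epsilon,\Delta}$ and, crucially, upgrading the witnessing superchannel so that it is dephasing-covariant rather than merely maximally incoherent. The lower bound will come from a DISC-analogue of Lemma~\ref{lemma:cost-f_d}, while the upper bound will come from an explicit construction whose only new ingredient is a careful choice of the ``classical part'' of the superchannel.

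For the lower bound, I would first establish that $\hat{LR}_{\epsilon,\Delta}$ is monotone under DISCs. If $\Theta\in\textrm{DISC}$ and $\frac{1}{2}||\mathcal{N}-\Theta[\mathcal{F}_d]||_{\diamond}\leq\epsilon$, then $\Theta[\mathcal{F}_d]\in B_\epsilon(\mathcal{N})$, so $\hat{LR}_{\epsilon,\Delta}(\mathcal{N})\leq \hat{D}_{\max}(\Theta[\mathcal{F}_d]\,||\,\Delta[\Theta[\mathcal{F}_d]])=\hat{LR}_{\Delta}(\Theta[\mathcal{F}_d])$. The defining relation $\Delta\circ\Theta=\Theta\circ\Delta$ lets me rewrite $\Delta[\Theta[\mathcal{F}_d]]=\Theta[\Delta[\mathcal{F}_d]]$, and then the data-processing inequality for $\hat{D}_{\max}$ under the superchannel $\Theta$ gives $\hat{LR}_{\Delta}(\Theta[\mathcal{F}_d])\leq \hat{D}_{\max}(\mathcal{F}_d\,||\,\Delta[\mathcal{F}_d])=\hat{LR}_{\Delta}(\mathcal{F}_d)$. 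Since the dephasing log-robustness of the QFT channel equals $\log d^2$ (Appendix~\ref{app:maximal-replacement}), this yields $\hat{LR}_{\epsilon,\Delta}(\mathcal{N})\leq \log d^2$, the exact analogue of Lemma~\ref{lemma:cost-f_d}. The definition of $d_0$ then gives $c_{\epsilon,\textrm{DISC}}^{(1)}(\mathcal{N})\geq \log d_0^2\geq \hat{LR}_{\epsilon,\Delta}(\mathcal{N})$.

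For the upper bound, I would pick $\mathcal{N}_\epsilon$ achieving $\hat{LR}_{\epsilon,\Delta}(\mathcal{N})$, set the classical channel to be $\mathcal{P}=\Delta[\mathcal{N}_\epsilon]$ (which is classical, since $\mathcal{D}\circ(\mathcal{D}\circ\mathcal{N}_\epsilon\circ\mathcal{D})\circ\mathcal{D}=\Delta[\mathcal{N}_\epsilon]$), and reuse the superchannel $\Theta$ of Theorem~\ref{the:cost-misc-upp-lower} with this particular $\mathcal{P}$. Positivity of $\mathcal{P}-\frac{1}{d_0^2}\mathcal{N}_\epsilon$, needed for $\Theta$ to be a valid superchannel, follows from $\hat{LR}_{\epsilon,\Delta}(\mathcal{N})=\hat{D}_{\max}(\mathcal{N}_\epsilon\,||\,\Delta[\mathcal{N}_\epsilon])\leq \log d_0^2$, i.e. $\mathcal{N}_\epsilon\leq d_0^2\,\Delta[\mathcal{N}_\epsilon]$. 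Exactly as in the MISC case one has $\Theta[\mathcal{F}_{d_0}]=\mathcal{N}_\epsilon$ (using $\Tr(J^{\mathcal{F}_{d_0}}J^{\mathcal{F}_{d_0}})=1$), so $\log d_0^2$ is achievable; the identical algebra with the definition of $d_0$ then produces the claimed strict upper bound $\hat{LR}_{\epsilon,\Delta}(\mathcal{N})+\log(\frac{d_0}{d_0-1})^2$.

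The one genuinely new step — and the part I expect to be the main obstacle — is verifying that this $\Theta$ is actually a DISC, i.e. $\Delta\circ\Theta=\Theta\circ\Delta$, and not merely a MISC. The key observation is that for any channel $\mathcal{N}$ the dephased channel $\Delta[\mathcal{N}]$ is classical, so $\Tr(J^{\mathcal{F}_{d_0}}J^{\Delta[\mathcal{N}]})=\frac{1}{d_0^2}$; substituting this into the formula for $\Theta$ collapses $\Theta[\Delta[\mathcal{N}]]$ to $\mathcal{P}$ for every $\mathcal{N}$. On the other side, because $\mathcal{P}=\Delta[\mathcal{N}_\epsilon]$ satisfies $\Delta[\mathcal{P}]=\mathcal{P}$ and $\Delta[\mathcal{N}_\epsilon]=\mathcal{P}$, linearity of $\Delta$ makes $\Delta[\Theta[\mathcal{N}]]$ a weighted combination that also collapses to $\mathcal{P}$, independently of $\Tr(J^{\mathcal{F}_{d_0}}J^{\mathcal{N}})$. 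Matching these two constant outputs establishes the covariance; this is precisely where the choice $\mathcal{P}=\Delta[\mathcal{N}_\epsilon]$ (rather than an arbitrary classical channel, as was permitted for MISCs) becomes essential.
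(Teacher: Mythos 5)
Your proposal is correct and follows essentially the same route as the paper: the lower bound via monotonicity of $\hat{LR}_{\epsilon,\Delta}$ using the covariance relation $\Delta\circ\Theta=\Theta\circ\Delta$ plus data processing for $\hat{D}_{\max}$, and the upper bound via the MISC-type superchannel with the classical part chosen as $\mathcal{P}=\Delta[\mathcal{N}_\epsilon]$, whose DISC property is checked exactly as you describe (both $\Theta\circ\Delta[\mathcal{N}]$ and $\Delta\circ\Theta[\mathcal{N}]$ collapse to $\Delta[\mathcal{N}_\epsilon]$). This is precisely the paper's construction in Eq.~(\ref{eq:supermap-disc}) and its verification, so there is nothing to add.
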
     
\end{shaded}
\begin{proof}
For the lower bound, suppose a free superchannel $\Theta\in \textrm{DISC}(A'\to A)$ and a QFT channel $\mathcal{F}_d$ exist such that $\frac{1}{2}||\mathcal N-\Theta[\mathcal{F}_d]||_{\diamond}\leq\epsilon$. Then, we have
\begin{eqnarray}
   \hat{LR}_{\epsilon,\Delta}(\mathcal{N})
   &\leq&\nonumber
   \hat{LR}_{\Delta}(\Theta[\mathcal{F}_d])   
   \\\nonumber&=&  
   \hat{D}_{\max}(\Theta[\mathcal{F}_d]||\Delta\circ\Theta[\mathcal{F}_d])
   \\\nonumber &=& 
  \hat{D}_{\max}(\Theta[\mathcal{F}_d]||\Theta\circ\Delta[\mathcal{F}_d])   
   \\\nonumber &\leq&
   \hat{D}_{\max}(\mathcal{F}_d||\Delta[\mathcal{F}_d])
   \\\nonumber &=&
   \log d^2
   \\ &=&
   c_{\epsilon,\textrm{DISC}}^{(1)}(\mathcal N),
\end{eqnarray}
where the fact that the second equality holds follows from $\Theta\in \textrm{DISC}$, the fact that the second inquality holds follows from the monotonicity of the max-relative channel divergence under superchannels, and the third equality holds because $ \hat{D}_{\max}(\mathcal{F}_d||\Delta[\mathcal{F}_d])=D_{\max}(J^{\mathcal{F}_d}||J^{\Delta[\mathcal{F}_d]})=D_{\max}(J^{\mathcal{F}_d}||\mathcal{D}(J^{\mathcal{F}_d}))=D_{\max}(J^{\mathcal{F}_d}||\frac{I}{d^2})=\log d^2$ where $\mathcal{D}(\cdot)=\sum_i\ketbra{i}{i}\cdot\ketbra{i}{i}$ is the completely dephasing channel.  

For the upper bound, let $\mathcal{N}_\epsilon\in \textrm{CPTP}(A_0\to A_1)$ satisfies $\hat{LR}_{\epsilon,\Delta}(\mathcal{N})=\hat{D}_{\max}( \mathcal{N}_\epsilon||\Delta[\mathcal{N}_\epsilon] )$, which implies $\mathcal{N}_{\epsilon}\leq 2^{\hat{LR}_{\Delta} ( \mathcal{N})}\Delta[\mathcal{N}_{\epsilon}]$. Now, we can construct following supermap $\Theta$:
\begin{eqnarray}\label{eq:supermap-disc}
    \Theta[\mathcal{N}] 
&=&\nonumber
\frac{d_0^2}{d_0^2-1} (\Tr(J^{\mathcal{F}_{d_0}} J^{\mathcal{N}})-\frac{1}{d_0^2}) \mathcal{N}_{\epsilon} +
\frac{d_0^2}{d_0^2-1}  (1-\Tr(J^{\mathcal{F}_{d_0}} J^{\mathcal{N}})) \Delta[\mathcal{N}_\epsilon]
\\&=&
\frac{d_0^2}{d_0^2-1} (1- \Tr(J^{\mathcal{F}_{d_0}} J^{\mathcal{N}}))
(\Delta[\mathcal{N}_\epsilon]-\frac{1}{d_0^2}\mathcal{N}_{\epsilon}) +
\Tr(J^{\mathcal{F}_{d_0}} J^{\mathcal{N}}) \mathcal{N}_{\epsilon}.
\end{eqnarray}
From the definition of $d_0$, we have $\Delta[\mathcal{N}_\epsilon]-\frac{1}{d_0^2}\mathcal{N}_{\epsilon}\geq0$, which implies $\Theta$ is a superchannel~\cite{Gour2019a}. Now, we will show $\Theta\in \textrm{DISC}$ by noting that
\begin{eqnarray}
    \Theta\circ\Delta[\mathcal{N}]
&=&\nonumber
\frac{d_0^2}{d_0^2-1} (1- \Tr(J^{\mathcal{F}_{d_0}} J^{\Delta[\mathcal{N}]}))
(\Delta[\mathcal{N}_\epsilon]-\frac{1}{d_0^2}\mathcal{N}_{\epsilon}) +
\frac{d_0^2}{d_0^2-1}
(1-\Tr(J^{\mathcal{F}_{d_0}} J^{\Delta[\mathcal{N}]})) \Delta[\mathcal{N}_{\epsilon}]
,\\&=& 
\Delta[\mathcal{N}_{\epsilon}],
\end{eqnarray}
and
\begin{eqnarray}\label{eq:delta-delta}
    \Delta\circ\Theta[\mathcal{N}] 
&=&\nonumber
\frac{d_0^2}{d_0^2-1} (1- \Tr(J^{\mathcal{F}_{d_0}} J^{\mathcal{N}}))
(\Delta[\mathcal{N}_\epsilon]-\frac{1}{d_0^2}\Delta[\mathcal{N}_{\epsilon}]) +
\Tr(J^{\mathcal{F}_{d_0}} J^{\mathcal{N}}) \Delta[\mathcal{N}_{\epsilon}]
\\&=& 
\Delta[\mathcal{N}_{\epsilon}],
\end{eqnarray}
where we used $\Delta\circ\Delta=\Delta$ in the first equality in Eq.(\ref{eq:delta-delta}).
Moreover, feeding $\mathcal{F}_{d_0}$ into Eq.(\ref{eq:supermap-disc}), we find $\Theta[\mathcal{F}_{d_0}]=\mathcal{N}_\epsilon$, which implies that $\log d_0^2$ is an achievable rate. From the definition of $d_0$, we have
\begin{equation}
\log (d_0-1)^2<\hat{LR}_{\epsilon,\Delta}(\mathcal{N})\implies
\log d_0^2+\log (d_0-1)^2<\hat{LR}_{\epsilon,\Delta}(\mathcal{N})+\log d_0^2,
\end{equation}
which implies
\begin{equation}
\log d_0^2< \hat{LR}_{\epsilon,\Delta}(\mathcal{N})+\log (\frac{d_0}{d_0-1})^2.
\end{equation}
\end{proof}
%%%%%%%%%%%%%%%%%%%%%%%%%%%%%%%%%%%%%%%%%

Although \(\log d^2_0\) is an achievable rate for \(c^{(1)}_{\epsilon,\textrm{DISC}}(\mathcal{N})\), the upper bound determined by Eq.(\ref{eq:disc-equality}) is particularly useful for calculating the regularization of the smoothed coherence cost, as defined in Eq.(\ref{def:asy-deph-log-rob}).

With the results above, we will now provide an exact description of the regularization of the smoothed coherence cost under free superchannels. First, we present the definition of the regularization of the smoothed coherence cost under free superchannels.

%%%%%%%%%%%%%%%%%%%%%%%%%%%%%%%%%%%%%%%%%%%%
The regularization of the smoothed coherence cost under $\hat{\mathfrak{O}}\in\{\textrm{MISC},\textrm{DISC}\}$ can be defined as follows
\begin{equation}
    c^{\infty}_{\hat{\mathfrak{O}}}(\mathcal{N})= \lim_{\epsilon\to0^{+}}\lim_{n\to\infty}\frac{1}{n}c_{\epsilon,\hat{\mathfrak{O}}}^{(1)}(\mathcal{N}^{\otimes n}).
\end{equation}
Meanwhile, the asymptotic log-robustness is defined as
\begin{equation}\label{def:asy-log-rob}
    \hat{LR}^{\infty}(\mathcal{N})= \lim_{\epsilon\to0^{+}}\liminf_{n\to\infty}\frac{1}{n}\hat{LR}_\epsilon(\mathcal{N}^{\otimes n}),
\end{equation}
and the asymptotic dephasing log-robustness is defined as
\begin{equation}\label{def:asy-deph-log-rob}
    \hat{LR}^{\infty}_{\Delta}(\mathcal{N})= \lim_{\epsilon\to0^{+}}\liminf_{n\to\infty}\frac{1}{n}\hat{LR}_{\epsilon,\Delta}(\mathcal{N}^{\otimes n}).
\end{equation}
\begin{shaded}
\begin{corollary}\label{coro:regu-smooth-CS}
The regularization of the smoothed coherence cost under MISCs is equal to the asymptotic log-robustness, i.e.,
\begin{equation}\label{eq:regu-misc}
c^{\infty}_{\textrm{MISC}}(\mathcal{N})=\hat{LR}^{\infty}(\mathcal{N}),
\end{equation}
and the regularization of the smoothed coherence cost under DISCs is equal to the asymptotic dephasing log-robustness, i.e.,
\begin{equation}\label{eq:regu-disc}
c^{\infty}_{\textrm{DISC}}(\mathcal{N})=\hat{LR}^{\infty}_{\Delta}(\mathcal{N}).
\end{equation}
\end{corollary}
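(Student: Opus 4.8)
The plan is to derive the corollary directly from the one-shot bounds already established: Theorem~\ref{the:cost-misc-upp-lower} for the MISC case and Theorem~\ref{the:log-rob-lower-DISC} for the DISC case. I would apply each of these to the $n$-fold tensor power $\mathcal{N}^{\otimes n}$, normalize by $1/n$, and then take $n\to\infty$ followed by $\epsilon\to0^{+}$, exactly matching the order of limits in the definitions. The whole argument rests on one observation: each theorem sandwiches the one-shot cost between the relevant smoothed log-robustness and that same quantity plus a correction term of the form $\log\left(\frac{d_0}{d_0-1}\right)^2$, and this correction term washes out in the regularized limit.

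For Eq.~(\ref{eq:regu-misc}) I would fix $\epsilon>0$, apply Theorem~\ref{the:cost-misc-upp-lower} to $\mathcal{N}^{\otimes n}$ with $d_0^{(n)}=\min\{d\in\mathbb{N}:\log d^2\geq\hat{LR}_\epsilon(\mathcal{N}^{\otimes n})\}$, and divide by $n$ to obtain
\begin{equation}
\frac{1}{n}\hat{LR}_\epsilon(\mathcal{N}^{\otimes n})\leq\frac{1}{n}c_{\epsilon,\textrm{MISC}}^{(1)}(\mathcal{N}^{\otimes n})<\frac{1}{n}\hat{LR}_\epsilon(\mathcal{N}^{\otimes n})+\frac{1}{n}\log\left(\frac{d_0^{(n)}}{d_0^{(n)}-1}\right)^2.
\end{equation}
I would then show $\frac{1}{n}\log\left(\frac{d_0^{(n)}}{d_0^{(n)}-1}\right)^2\to0$ by a short case split: if $\hat{LR}_\epsilon(\mathcal{N}^{\otimes n})$ is unbounded in $n$ then $d_0^{(n)}\to\infty$, and since $\log\left(\frac{d_0}{d_0-1}\right)^2=2\log\left(1+\frac{1}{d_0-1}\right)\to0$ the correction already vanishes even before dividing by $n$; if instead $\hat{LR}_\epsilon(\mathcal{N}^{\otimes n})$ stays bounded then $d_0^{(n)}$ is bounded, the correction is bounded, and the factor $1/n$ kills it. Taking $\liminf_{n\to\infty}$ across the chain gives $\liminf_n\frac{1}{n}c_{\epsilon,\textrm{MISC}}^{(1)}(\mathcal{N}^{\otimes n})=\liminf_n\frac{1}{n}\hat{LR}_\epsilon(\mathcal{N}^{\otimes n})$ at each fixed $\epsilon$, and sending $\epsilon\to0^{+}$ yields the claim.

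The DISC identity Eq.~(\ref{eq:regu-disc}) follows verbatim, now invoking Theorem~\ref{the:log-rob-lower-DISC} with $\hat{LR}_{\epsilon,\Delta}$ and $d_0^{(n)}=\min\{d\in\mathbb{N}:\log d^2\geq\hat{LR}_{\epsilon,\Delta}(\mathcal{N}^{\otimes n})\}$ in place of the MISC quantities; the correction term has the same form and vanishes by the identical case analysis. I would emphasize that no interchange of the two limits is needed: the inner limit $n\to\infty$ is resolved at each fixed $\epsilon$, and only afterwards is $\epsilon\to0^{+}$ taken, so the nested order in the definitions is respected throughout.

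The step needing the most care is reconciling the inner $\lim_{n\to\infty}$ in the definition of $c^{\infty}_{\hat{\mathfrak{O}}}$ with the inner $\liminf_{n\to\infty}$ in the definitions of $\hat{LR}^{\infty}$ and $\hat{LR}^{\infty}_{\Delta}$. Since the squeeze forces $\frac{1}{n}c^{(1)}_{\epsilon,\hat{\mathfrak{O}}}(\mathcal{N}^{\otimes n})$ and $\frac{1}{n}\hat{LR}_\epsilon(\mathcal{N}^{\otimes n})$ to differ by a null sequence, the two normalized sequences share the same $\liminf$ \emph{and} the same $\limsup$; hence the limit defining $c^{\infty}_{\hat{\mathfrak{O}}}$ exists exactly when the regularized log-robustness limit does, and in every case both equal the common $\liminf$ value, so the stated equalities hold as written. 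The only remaining point is the harmless degenerate case $d_0^{(n)}=1$, which arises only when $\hat{LR}_\epsilon(\mathcal{N}^{\otimes n})=0$; there the cost is trivially zero, achieved by a free superchannel preparing the nearby classical channel, and the correction term is simply dropped.
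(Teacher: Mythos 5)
Your proposal is correct and takes essentially the same approach as the paper: apply the one-shot sandwich bounds of Theorem~\ref{the:cost-misc-upp-lower} and Theorem~\ref{the:log-rob-lower-DISC} to $\mathcal{N}^{\otimes n}$, normalize by $1/n$, and let the correction term vanish in the regularized limit. The only difference is cosmetic: where you run a case split on the boundedness of $\hat{LR}_{\epsilon}(\mathcal{N}^{\otimes n})$, the paper simply bounds the correction uniformly by $2$ (via $d_n\geq 2$, giving $\log\left(\frac{d_n}{d_n-1}\right)^2\leq 2$, hence at most $\frac{2}{n}$ after normalization); your explicit treatment of the degenerate case $d_n=1$ and of the lim-versus-liminf reconciliation is a minor refinement of, not a departure from, the paper's argument.
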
    
\end{shaded}
\begin{proof}
We need to prove only Eq.(\ref{eq:regu-misc}), as the proof of Eq.(\ref{eq:regu-disc}) follows a similar argument. From Eq.(\ref{eq:misc-equality}), it follows that
\begin{equation}
\hat{LR}_{\epsilon}({\mathcal N}^{\otimes n}) \leq c_{\epsilon, \mathrm{MISC}}^{(1)}(\mathcal N^{\otimes n}) < \hat{LR}_{\epsilon}(\mathcal N^{\otimes n}) + \log\left(\frac{d_n}{d_n-1}\right)^2,
\end{equation}    
where $d_n = \min\{d \in \mathbb{N} : \log d^2 \geq \hat{LR}_{\epsilon}(\mathcal{N}^{\otimes n})\}$. Since $d_n \geq 2$ for any $n \geq 1$, we have $\log\left(\frac{d_n}{d_n-1}\right)^2 \leq 2$. Therefore, we obtain:
\begin{equation}
\frac{1}{n}\hat{LR}_{\epsilon}({\mathcal N}^{\otimes n}) \leq \frac{1}{n}c_{\epsilon, \mathrm{MISC}}^{(1)}(\mathcal N^{\otimes n}) < \frac{1}{n}\hat{LR}_{\epsilon}(\mathcal N^{\otimes n}) + \frac{2}{n}.
\end{equation}   

Taking the limits as $n \to \infty$ and $\epsilon \to 0^+$ completes the proof of Eq.(\ref{eq:regu-misc}). 
\end{proof}
%%%%%%%%%%%%%%%%%%%%%%%%%%%%%%%%%%%%%%%%%%%%%%%%%%%%%%%%%%%%%%%%%%%%%%%%%%%%%%%%%%%%%%%%%%%%%%%%%%%%%%%%%%%%%%%%%%%%%%%%%%%%%%%%%%%%%%%%%%%%%%%%%%%%%%%%%%%%%%%%%%%%%%%%%%%%%%%
\section{One-shot dynamic coherence distillation under \textrm{MISCs} and \textrm{DISCs}}\label{sec:one-shot distillation}
Resource distillation is the other important subclasses of resource manipulation tasks. In this section, we investigate the maximal amount of dynamic coherence required to simulate a quantum Fourier transform under free superchannels. The formal definition is thus given as follows:

\begin{definition}
Given $\epsilon\geq0$, the one-shot dynamic coherence distillation of a quantum channel $\mathcal{N}\in\textrm{CPTP}(A_0\to A_1)$ is defined as
\begin{equation}\label{def:dynamic-coh-dis}
d_{\epsilon,\hat{\mathfrak O}}^{(1)}(\mathcal{N})=\max\{\log d^2:~ \frac{1}{2}||\Theta[\mathcal{N}]-\mathcal{F}_d||_{\diamond}\leq\epsilon, \Theta\in \hat{\mathfrak O}(A\to B)   \}.
\end{equation}
where $\hat{\mathfrak O}  \in\{ \textrm{MISC}, \textrm{DISC}   \}$.   
\end{definition}
%%%
Next, we provide another bound of one-shot dynamic coherence distillation based on the smoothed hypothesis testing relative entropy. For any density matrices $\rho$ and $\sigma$, the smoothed hypothesis testing relative entropy (sometimes it is also called as the min-relative entropy) is defined as \cite{wang-PhysRevLett.108.200501,Nuradha2024-SMOOTHEDfidelity} 
\begin{equation}
    H_\epsilon(\rho||\sigma)=-\log \min_{0\leq P\leq I}\{ \Tr[P\sigma]: \Tr[P\rho] \geq 1-\epsilon         \}.
\end{equation}
Its smoothed channel divergence can be defined as \cite{cooney2016strong,Takagi2022,diaz2018-using,Yuan-PhysRevA.99.032317}
\begin{equation}
    \hat{H}_\epsilon(\mathcal{N}||\mathcal{M})=\max_{\psi\in
    \textrm{D}(RA)} H_{\epsilon}(id_R\otimes\mathcal{N}(\psi)||id_R\otimes\mathcal{M}(\psi)),
\end{equation}
where the optimization is restricted to pure input states $\psi$ without loss of generality.
%%%
\begin{definition}
For any two quantum channels $\mathcal{N}\in \textrm{CPTP}(A_0\to A_1)$, the hypothesis-testing relative entropy of coherence of $\mathcal{N}$ is defined as

\begin{equation}
    C_{H}^{\epsilon}(\mathcal{N})=\min_{\mathcal{M}\in\textrm{C}(A_0\to A_1) }\hat{H}_\epsilon(\mathcal{N}||\mathcal{M}).
\end{equation}   
\end{definition}

First, we have following result: 
\begin{shaded}
\begin{proposition} \label{prop:1st-CH}
For any quantum channel $\mathcal{N}\in \textrm{CPTP}(A_0\to A_1) $, the hypothesis-testing relative entropy of coherence is a dynamic coherence monotone under the free superchannels $\Theta\in\hat{\mathfrak{O}}(A\to B)$. That is 
\begin{equation}
C_{H}^{\epsilon}(\Theta[\mathcal{N}]) \leq C_{H}^{\epsilon}(\mathcal{N}).   
\end{equation}
\end{proposition}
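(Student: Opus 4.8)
The plan is to combine two facts: that free superchannels send classical channels to classical channels, and that the smoothed hypothesis-testing channel divergence $\hat{H}_\epsilon$ is monotone under superchannels. First I would fix $\mathcal{M}^*\in\textrm{C}(A_0\to A_1)$ to be an optimal classical channel in the minimization defining $C_H^\epsilon(\mathcal{N})$, so that $C_H^\epsilon(\mathcal{N})=\hat{H}_\epsilon(\mathcal{N}||\mathcal{M}^*)$. The pivotal observation is that $\Theta[\mathcal{M}^*]\in\textrm{C}(B_0\to B_1)$ for any $\Theta\in\hat{\mathfrak{O}}$. For MISC this is immediate from the definition. For DISC, recall that a classical channel is exactly a fixed point of the input dephasing superchannel, i.e. $\Delta[\mathcal{M}^*]=\mathcal{M}^*$; then the covariance $\Delta_B\circ\Theta=\Theta\circ\Delta$ gives $\Delta_B[\Theta[\mathcal{M}^*]]=\Theta[\Delta[\mathcal{M}^*]]=\Theta[\mathcal{M}^*]$, so $\Theta[\mathcal{M}^*]$ is a fixed point of $\Delta_B$ and hence classical. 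Since $\Theta[\mathcal{M}^*]$ is then a feasible point in the minimization for $C_H^\epsilon(\Theta[\mathcal{N}])$, I obtain $C_H^\epsilon(\Theta[\mathcal{N}])\leq\hat{H}_\epsilon(\Theta[\mathcal{N}]||\Theta[\mathcal{M}^*])$.

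The crux is to prove the channel-divergence monotonicity $\hat{H}_\epsilon(\Theta[\mathcal{N}]||\Theta[\mathcal{M}^*])\leq\hat{H}_\epsilon(\mathcal{N}||\mathcal{M}^*)$. I would invoke the realization $\Theta[\,\cdot\,]=\mathcal{Q}_{A_1E\to B_1}\circ(\text{id}_E\otimes(\cdot))\circ\mathcal{P}_{B_0\to A_0E}$. For an arbitrary input $\psi_{RB_0}$, set $\psi'=(\text{id}_R\otimes\mathcal{P})(\psi)$ on $(RE)A_0$, and let $\omega^{\mathcal{N}}=(\text{id}_{RE}\otimes\mathcal{N})(\psi')$ and $\omega^{\mathcal{M}^*}=(\text{id}_{RE}\otimes\mathcal{M}^*)(\psi')$. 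Then $\text{id}_R\otimes\Theta[\mathcal{N}](\psi)=(\text{id}_R\otimes\mathcal{Q})(\omega^{\mathcal{N}})$, and likewise for $\mathcal{M}^*$. Applying the state-level data-processing inequality for $H_\epsilon$ under the post-processing channel $\text{id}_R\otimes\mathcal{Q}$ yields $H_\epsilon(\text{id}_R\otimes\Theta[\mathcal{N}](\psi)||\text{id}_R\otimes\Theta[\mathcal{M}^*](\psi))\leq H_\epsilon(\omega^{\mathcal{N}}||\omega^{\mathcal{M}^*})$. Because $\omega^{\mathcal{N}},\omega^{\mathcal{M}^*}$ are produced by feeding the single state $\psi'$ (on the enlarged reference $RE$) into $\mathcal{N}$ and $\mathcal{M}^*$, the right-hand side is at most $\hat{H}_\epsilon(\mathcal{N}||\mathcal{M}^*)$, using that the channel divergence is a supremum over all input states and is unaffected by enlarging the reference (the restriction to pure inputs being without loss of generality). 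Taking the maximum over $\psi_{RB_0}$ delivers the stated monotonicity.

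Chaining the two inequalities gives the result:
\[
C_{H}^{\epsilon}(\Theta[\mathcal{N}]) \leq \hat{H}_\epsilon(\Theta[\mathcal{N}]||\Theta[\mathcal{M}^*]) \leq \hat{H}_\epsilon(\mathcal{N}||\mathcal{M}^*) = C_{H}^{\epsilon}(\mathcal{N}).
\]
The main obstacle is the middle inequality: one must carefully absorb the pre-processing channel $\mathcal{P}$ together with the ancilla $E$ into the reference system so that the channel-divergence bound applies, and verify that enlarging the reference does not raise $\hat{H}_\epsilon$. The remaining steps are a feasibility argument for the classical-channel constraint (handled above for both MISC and DISC) and a direct appeal to the data-processing inequality for the hypothesis-testing relative entropy at the state level.
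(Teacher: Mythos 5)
Your proof is correct and takes essentially the same approach as the paper: fix the optimal classical channel $\mathcal{M}^*$, note that $\Theta[\mathcal{M}^*]$ remains classical so it is feasible for $C_H^{\epsilon}(\Theta[\mathcal{N}])$, and establish monotonicity of $\hat{H}_\epsilon$ under superchannels via the realization $\Theta[\cdot]=\mathcal{Q}\circ(\mathrm{id}_E\otimes\cdot)\circ\mathcal{P}$, state-level data processing for the post-processing channel, and absorption of the pre-processing channel and ancilla $E$ into the reference. Your explicit fixed-point argument for why $\Theta[\mathcal{M}^*]$ stays classical under DISC is a detail the paper leaves implicit (it simply asserts $\Theta[\mathcal{M}^*]\in\textrm{C}(B_0\to B_1)$), so your write-up is, if anything, slightly more complete.
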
    
\end{shaded}
\begin{proof}
   First, we need to prove that the hypothesis-testing relative entropy of channel divergence is monotonic under the freesuperchannel $\Theta[\mathcal{N}]= \mathcal{P}_{A_1E\to B_1}\circ id_E\otimes\mathcal{N} \circ \mathcal{Q}_{B_0\to A_0E}$, i.e.,
\begin{equation}\label{eq:Hypothesistesting-DPI}
\hat{H}_\epsilon(\Theta[\mathcal{N}]||\Theta[\mathcal{M}])\leq\hat{H}_\epsilon(\mathcal{N}||\mathcal{M})    
\end{equation}
holds for any quantum channel $\mathcal{N},\mathcal{M}\in\textrm{CPTP}(A_0\to A_1)$. This is because the following inequalities chain holds
\begin{eqnarray}
\hat{H}_\epsilon(\Theta[\mathcal{N}]||\Theta[\mathcal{M}])
%\nonumber &=& \max_{\psi\in
%    \mathcal{D}(RB_0)} H_{\epsilon}(id_R\otimes\Theta[\mathcal{N}](\psi)||id_R\otimes\Theta[\mathcal{M}](\psi)) \\
\nonumber &=& H_{\epsilon}(id_R\otimes\Theta[\mathcal{N}](\psi^*_{RB_0})||id_R\otimes\Theta[\mathcal{M}](\psi^*_{RB_0}))
\\\nonumber &=& H_{\epsilon}(id_{R}\otimes\mathcal{P}_{A_1E\to B_1}\circ id_E\otimes\mathcal{N} \circ \mathcal{Q}_{B_0\to A_0E}(\psi^*_{RB_0})  
      ||     id_R\otimes\mathcal{P}_{A_1E\to B_1}\circ id_E\otimes\mathcal{M} 
\\\nonumber     &\circ& \mathcal{Q}_{B_0\to A_0E}(\psi^*_{RB_0}))
\\\nonumber &\leq& H_{\epsilon}(id_{RE}\otimes\mathcal{N} \circ \mathcal{Q}_{B_0\to A_0E}(\psi^*_{RB_0})  
      ||     id_{RE}\otimes\mathcal{M} \circ \mathcal{Q}_{B_0\to A_0E}(\psi^*_{RB_0}))
\\\nonumber &\leq& H_{\epsilon}(id_{RE}\otimes\mathcal{N} (\psi^*_{REA})  
      ||     id_{RE}\otimes\mathcal{M}(\psi^*_{REA}))
\\\nonumber &\leq& \max_{\psi\in\textrm{D}(REA)}H_{\epsilon}(id_{RE}\otimes\mathcal{N} (\psi)  
      ||     id_{RE}\otimes\mathcal{M}(\psi))      
\\ &=&\hat{H}_\epsilon(\mathcal{N}||\mathcal{M}),
\end{eqnarray}
where $\psi^*_{RB_0}$ is the optimal pure state for $\hat{H}_\epsilon(\Theta[\mathcal{N}]||\Theta[\mathcal{M}])$, and the first and second inequalities holds because the data-processing inequality holds for the hypothesis-testing relative entropy~\cite{wang-PhysRevLett.108.200501}. Then, we have
\begin{eqnarray}
C_{H}^{\epsilon}(\mathcal{N}) 
\nonumber &=& \hat{H}_{\epsilon}(\mathcal{N}||\mathcal{M}^*)
\\\nonumber &\geq& \hat{H}_{\epsilon}(\Theta[\mathcal{N}]||\Theta[\mathcal{M}^*])
\\\nonumber &\geq& \min_{\tilde{\mathcal{M}}\in\textrm{C}(B_0\to B_1)}\hat{H}_{\epsilon}(\Theta[\mathcal{N}]||\tilde{\mathcal{M}}])
\\ &=& C_{H}^{\epsilon}(\Theta[\mathcal{N}]),
\end{eqnarray}
where $\mathcal{M}^*\in \textrm{C}(B_0\to B_1)$ is the optimal classical channel for $C_{H}^{\epsilon}(\mathcal{N})$, the first inequality holds because Eq.(\ref{eq:Hypothesistesting-DPI}) holds, and the second inequality holds because $\Theta[\mathcal{M}^*]\in\textrm{C}(B_0\to B_1)$.
\end{proof}
%%%%%%%%%%%%%%%%%%%%%%%%%%%%%%%%%%%%%

The following result provides the upper bound of the one-shot dynamic coherence distillation under MISC:
\begin{shaded}
\begin{theorem}\label{The:one-shot-distill-upperbound}
For any quantum channel $\mathcal{N}\in \textrm{CPTP}(A_0\to A_1) $, the one-shot dynamic coherence distillation under \textrm{MISCs} is bound as follows
\begin{equation}
d_{\epsilon,\textrm{MISC}}^{(1)}(\mathcal{N}) \leq C_{H}^{2\epsilon}(\mathcal{N}).
\end{equation}
\end{theorem}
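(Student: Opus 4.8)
The plan is to show that every rate $\log d^2$ that is achievable in the definition of $d_{\epsilon,\textrm{MISC}}^{(1)}(\mathcal{N})$ satisfies $\log d^2 \le C_{H}^{2\epsilon}(\mathcal{N})$; taking the maximum over achievable rates then yields the claim. So I fix a MISC $\Theta\in\textrm{MISC}(A\to B)$ with $\frac12||\Theta[\mathcal{N}]-\mathcal{F}_d||_{\diamond}\le\epsilon$. The argument rests on two pillars: the monotonicity of $C_{H}^{\epsilon}$ under free superchannels established in Proposition~\ref{prop:1st-CH}, applied at smoothing level $2\epsilon$ to give $C_{H}^{2\epsilon}(\Theta[\mathcal{N}])\le C_{H}^{2\epsilon}(\mathcal{N})$, together with a matching lower bound $C_{H}^{2\epsilon}(\Theta[\mathcal{N}])\ge\log d^2$. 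Chaining these two inequalities immediately produces $\log d^2\le C_{H}^{2\epsilon}(\mathcal{N})$.

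For the lower bound I would first let $\mathcal{M}^\ast\in\textrm{C}(B_0\to B_1)$ be the classical channel attaining $C_{H}^{2\epsilon}(\Theta[\mathcal{N}])$, so that $C_{H}^{2\epsilon}(\Theta[\mathcal{N}])=\hat{H}_{2\epsilon}(\Theta[\mathcal{N}]||\mathcal{M}^\ast)$. Since $\hat{H}_{2\epsilon}$ maximizes over input states, I restrict the optimization to the maximally entangled input $\phi^+$, which replaces the two channels by their Choi states and yields $C_{H}^{2\epsilon}(\Theta[\mathcal{N}])\ge H_{2\epsilon}(J^{\Theta[\mathcal{N}]}||J^{\mathcal{M}^\ast})$. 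The diamond-norm bound on $\Theta[\mathcal{N}]$ forces $\frac12||J^{\Theta[\mathcal{N}]}-J^{\mathcal{F}_d}||_1\le\epsilon$, so I can invoke the feasibility-set comparison for the hypothesis-testing divergence: any test $P$ with $\Tr[P J^{\mathcal{F}_d}]\ge 1-\epsilon$ also obeys $\Tr[P J^{\Theta[\mathcal{N}]}]\ge 1-2\epsilon$, whence $H_{2\epsilon}(J^{\Theta[\mathcal{N}]}||J^{\mathcal{M}^\ast})\ge H_{\epsilon}(J^{\mathcal{F}_d}||J^{\mathcal{M}^\ast})\ge H_{0}(J^{\mathcal{F}_d}||J^{\mathcal{M}^\ast})$. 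This is precisely where the factor of two in $C_{H}^{2\epsilon}$ originates.

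It then remains to evaluate $H_0(J^{\mathcal{F}_d}||J^{\mathcal{M}^\ast})$. Because $\mathcal{F}_d$ is unitary, $J^{\mathcal{F}_d}$ is a pure state $|\Phi_d\rangle\langle\Phi_d|$, and for a pure first argument the $\epsilon=0$ test is optimal at $P=|\Phi_d\rangle\langle\Phi_d|$, giving $H_0(J^{\mathcal{F}_d}||J^{\mathcal{M}^\ast})=-\log\langle\Phi_d|J^{\mathcal{M}^\ast}|\Phi_d\rangle=-\log\Tr[J^{\mathcal{F}_d}J^{\mathcal{M}^\ast}]$. The key computation, already used in the proof of Theorem~\ref{the:cost-misc-upp-lower}, is that $\Tr[J^{\mathcal{F}_d}J^{\mathcal{Q}}]=\tfrac{1}{d^2}$ for every classical channel $\mathcal{Q}$, so $H_0(J^{\mathcal{F}_d}||J^{\mathcal{M}^\ast})=\log d^2$ and the lower bound follows. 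I expect the main obstacle to be the smoothing step: one must track carefully that $\epsilon$-closeness in diamond norm relaxes the hypothesis-testing constraint by exactly $\epsilon$, via $|\Tr[P(\rho-\rho')]|\le\frac12||\rho-\rho'||_1$, which is what upgrades $\epsilon$ to $2\epsilon$ and makes the bound $C_{H}^{2\epsilon}$ rather than $C_{H}^{\epsilon}$.
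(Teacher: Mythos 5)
Your proof is correct, and its overall skeleton matches the paper's: monotonicity of $C_H^{2\epsilon}$ under \textrm{MISCs} (Proposition~\ref{prop:1st-CH}), followed by a lower bound $C_H^{2\epsilon}(\Theta[\mathcal{N}])\geq \log d^2$ obtained by feeding in the maximally entangled input, using the pure Choi state $J^{\mathcal{F}_d}$ as the test operator, and invoking the overlap identity $\Tr[J^{\mathcal{F}_d}J^{\mathcal{Q}}]=\frac{1}{d^2}$ for classical $\mathcal{Q}$ (exactly as in the paper and in Theorem~\ref{the:cost-misc-upp-lower}). The one genuine difference lies in how feasibility of this test is certified. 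The paper's Lemma~\ref{lemma:F_in_S} works with an arbitrary input $\psi_{RB_0}$, recognizes the overlap $\Tr[\mathrm{id}_R\otimes\mathcal{F}_{d,B}(\psi_{RB_0})\,\mathrm{id}_R\otimes\Theta[\mathcal{N}](\psi_{RB_0})]$ as a fidelity (legitimate because the first argument is pure), and then applies the Fuchs--van Graaf inequality, incurring $(1-\epsilon)^2\geq 1-2\epsilon$; only afterwards does it specialize to $\phi^+$. You instead specialize to $\phi^+$ first and certify feasibility by a purely trace-distance feasible-set inclusion, $|\Tr[P(\rho-\rho')]|\leq\frac{1}{2}\|\rho-\rho'\|_1$, avoiding fidelity altogether. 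This buys you something you did not notice: for your actual test $P=J^{\mathcal{F}_d}$ one has $\Tr[P J^{\mathcal{F}_d}]=1$, so $\Tr[P J^{\Theta[\mathcal{N}]}]\geq 1-\epsilon$, i.e.\ the test is already feasible at smoothing level $\epsilon$, and your argument in fact proves the strictly stronger bound $d^{(1)}_{\epsilon,\textrm{MISC}}(\mathcal{N})\leq C_H^{\epsilon}(\mathcal{N})$. The factor of two in your write-up (and your closing remark that the diamond-norm slack is ``what upgrades $\epsilon$ to $2\epsilon$'') is therefore an artifact of chaining through $H_\epsilon(J^{\mathcal{F}_d}\|\cdot)$ over all $\epsilon$-feasible tests rather than the single projector; in the paper the $2\epsilon$ is a genuine loss coming from Fuchs--van Graaf. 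Either way, your argument establishes the stated $2\epsilon$ bound.
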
    
\end{shaded}
Before proving this theorem, we need to prove the following lemma:
\begin{lemma}\label{lemma:F_in_S}
For any superchannel $\Theta\in\hat{\mathfrak{O}}(A\to B)$, quantum channel $\mathcal{N}\in\textrm{CPTP}(A_0\to A_1)$ and pure state $\psi_{RB_0}\in \textrm{D}(RB_0)$, we have $0\leq id_R\otimes\mathcal{F}_{d,B}(\psi_{RB_0})\leq I$ and $\Tr [id_R\otimes\mathcal{F}_{d,B}(\psi_{RB_0})id_R\otimes\Theta[\mathcal{N}](\psi_{RB_0})]  \geq 1- 2\epsilon$.
\end{lemma}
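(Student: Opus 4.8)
The plan is to read this lemma as the statement that the output of the ideal QFT channel furnishes a \emph{feasible operator} for the semidefinite program defining the hypothesis-testing relative entropy. Throughout I would work under the achievability hypothesis that is implicit in the distillation task (and under which the lemma is actually invoked in Theorem~\ref{The:one-shot-distill-upperbound}), namely that $\Theta$ is a distillation protocol realizing the rate $\log d^2$, so that $\frac{1}{2}||\Theta[\mathcal{N}]-\mathcal{F}_{d}||_{\diamond}\leq\epsilon$; without such a closeness condition relating $\Theta[\mathcal{N}]$ to $\mathcal{F}_d$ the second inequality cannot hold.

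The first inequality $0\leq id_R\otimes\mathcal{F}_{d,B}(\psi_{RB_0})\leq I$ is immediate and requires no hypothesis: since $id_R\otimes\mathcal{F}_{d,B}$ is CPTP and $\psi_{RB_0}$ is a density operator, its image is again a density operator, whose eigenvalues lie in $[0,1]$. The one point worth recording here — because it is the engine of the second part — is that the QFT channel $\mathcal{F}_{d}(\cdot)=F_d(\cdot)F_d^{\dag}$ is \emph{unitary}, so $id_R\otimes\mathcal{F}_{d,B}$ is a unitary channel and therefore sends the pure state $\psi_{RB_0}$ to a pure state.

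For the second inequality I would abbreviate $\tau=id_R\otimes\mathcal{F}_{d,B}(\psi_{RB_0})$ and $\rho=id_R\otimes\Theta[\mathcal{N}](\psi_{RB_0})$. Specializing the diamond-norm bound to the input $\psi_{RB_0}$ gives $||\rho-\tau||_1\leq ||\Theta[\mathcal{N}]-\mathcal{F}_{d,B}||_{\diamond}\leq 2\epsilon$. Using purity of $\tau$, so that $\tau^2=\tau$ and hence $\Tr[\tau^2]=1$, I decompose
\[
\Tr[\tau\rho]=\Tr[\tau^2]+\Tr[\tau(\rho-\tau)]=1+\Tr[\tau(\rho-\tau)],
\]
and then control the cross term by H\"older's inequality, $|\Tr[\tau(\rho-\tau)]|\leq ||\tau||_{\infty}\,||\rho-\tau||_1\leq 1\cdot 2\epsilon = 2\epsilon$, where $||\tau||_{\infty}\leq 1$ because $\tau$ is a density operator. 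Combining the two displays yields $\Tr[\tau\rho]\geq 1-2\epsilon$, which is the claim.

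The genuinely load-bearing observation — which I would flag as the conceptual crux rather than a technical obstacle — is the purity of $\tau$ coming from the unitarity of the QFT: it is exactly what promotes $\Tr[\tau^2]$ from something strictly less than $1$ up to $1$, and without it the bound would degrade to $\Tr[\tau^2]-2\epsilon$ and fail. Everything else is a one-line H\"older estimate fed by the trace-distance consequence of the diamond-norm hypothesis. Once the lemma is established, its purpose is transparent: setting $P=\tau$ in the definition of $H_{2\epsilon}$, the first claim is precisely the feasibility constraint $0\leq P\leq I$ and the second is precisely the acceptance constraint $\Tr[P\rho]\geq 1-2\epsilon$, which together certify $P=\tau$ as admissible in the minimization and thereby drive the upper bound $d_{\epsilon,\textrm{MISC}}^{(1)}(\mathcal{N})\leq C_{H}^{2\epsilon}(\mathcal{N})$ in Theorem~\ref{The:one-shot-distill-upperbound}.
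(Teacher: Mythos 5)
Your proof is correct, and it reaches the paper's conclusion by a genuinely different (and more elementary) estimate. The paper first identifies the overlap $\Tr[\tau\rho]$ with the fidelity $F(\rho,\tau)$ — an identity that, like your argument, silently relies on the purity of $\tau$ — and then invokes the Fuchs--van Graaf inequality together with the diamond-norm hypothesis to get $F(\rho,\tau)\geq\left(1-\tfrac{1}{2}\|\rho-\tau\|_1\right)^2\geq(1-\epsilon)^2\geq 1-2\epsilon$. You instead bypass fidelity entirely: the decomposition $\Tr[\tau\rho]=\Tr[\tau^2]+\Tr[\tau(\rho-\tau)]=1+\Tr[\tau(\rho-\tau)]$ plus H\"older's inequality $|\Tr[\tau(\rho-\tau)]|\leq\|\tau\|_\infty\|\rho-\tau\|_1\leq 2\epsilon$ gives $1-2\epsilon$ in one line. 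What the paper's route buys is the marginally sharper intermediate bound $(1-\epsilon)^2=1-2\epsilon+\epsilon^2$; what yours buys is transparency — no fidelity, no Fuchs--van Graaf, and the role of purity (hence of the unitarity of the QFT) is made explicit rather than hidden inside the identity $\Tr[\tau\rho]=F(\rho,\tau)$. You also correctly flag that the lemma as stated omits the hypothesis $\tfrac{1}{2}\|\Theta[\mathcal{N}]-\mathcal{F}_d\|_\diamond\leq\epsilon$: the paper's own proof uses it in exactly the same implicit way (in the step bounding the maximized trace distance by $\epsilon$), so your reading of the lemma as conditional on the distillation protocol's feasibility matches the paper's intent.
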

\begin{proof}
First, it is easy to check that $0\leq id_R\otimes\mathcal{F}_{d,B}(\psi_{RB_0})\leq I$. Second,
\begin{eqnarray}
\Tr [id_R\otimes\mathcal{F}_{d,B}(\psi_{RB_0})id_R\otimes\Theta[\mathcal{N}](\psi_{RB_0})]
\nonumber 
&=& F\left(id_R\otimes\Theta[\mathcal{N}](\psi_{RB_0}),id_R\otimes\mathcal{F}_{d,B}(\psi_{RB_0})\right)
\\\nonumber &\geq& \left( 1- \frac{1}{2} ||id_R\otimes\Theta[\mathcal{N}_A](\psi_{RB_0})- 
id_R\otimes\mathcal{F}_{d,B}(\psi_{RB_0})||_1  \right)^2
\\\nonumber &\geq&   \left( 1-\max_{\psi_{RB_0}}  \frac{1}{2} ||id_R\otimes\Theta[\mathcal{N}_A](\psi_{RB_0})-id_R\otimes\mathcal{F}_{d,B}(\psi_{RB_0})||_1  \right)^2
\\\nonumber &=& \left( 1- \frac{1}{2} ||  \Theta[\mathcal{N}_A]-  \mathcal{F}_{d,B}||_{\diamond} \right)^2
\\\nonumber &\geq&    (1-\epsilon)^2
\\ &\geq&     1-2\epsilon,
\end{eqnarray}
where $F(\rho,\sigma)=||\sqrt{\rho}\sqrt{\sigma}||_1^2$ is the fidelity~\cite{Nielsen2010}, and the second inequality holds because of the Fuchs-van Graaf inequalities: 
$1-\sqrt{F(\rho,\sigma)}   \leq  \frac{1}{2}||\rho-\sigma||_1   \leq  \sqrt{1-F(\rho, \sigma)   }$~\cite{watrous2018}.  
\end{proof}

Next, we provide the proof of Theorem \ref{The:one-shot-distill-upperbound}:

\begin{proof}\textbf{of Theorem \ref{The:one-shot-distill-upperbound}}
Let $\Theta\in\textrm{MISC}(A\to B)$ be an optimal free superchannel such that $d_{\epsilon, \textrm{MISC}}^{(1)}(\mathcal{N})=\log d^2$, the following inequality chain holds
\begin{eqnarray}
C_{H}^{2\epsilon}(\mathcal{N})
\nonumber &\geq& 
C_{H}^{2\epsilon}(\Theta[\mathcal{N}])
\\\nonumber &=& 
\min_{\mathcal{M}_B\in\textrm{C}(B_0\to B_1)} \max_{\psi_{RB_0}} \max_{ 0\leq P_{RB_0}\leq I, \Tr[ P_{RB_0} id_R\otimes\Theta[\mathcal{N}](\psi_{RB_0})]\geq 1-2\epsilon   } 
-\log \Tr\left[P_{RB_0}id_R\otimes\mathcal{M}_B(\psi_{RB_0})
\right]
\\\nonumber&\geq& 
\min_{\mathcal{M}_B\in\textrm{C}(B_0\to B_1) } \max_{\psi_{RB_0}}  
-\log \Tr\left[id_R\otimes\mathcal{F}_{d,B}(\psi_{RB_0})id_R\otimes\mathcal{M}_B(\psi_{RB_0})
\right]
\\\nonumber &\geq&
\min_{\mathcal{M}_B\in\textrm{C}(B_0\to B_1) }  
-\log \Tr\left[id_R\otimes\mathcal{F}_{d,B}(\phi^+_{RB_0})id_R\otimes\mathcal{M}_B(\phi^+_{RB_0})
\right]
\\\nonumber &=&
\min_{\mathcal{M}_B\in\textrm{C}(B_0\to B_1) }   
-\log \Tr(J^{\mathcal{F}_{d}}J^{\mathcal{M}_B})
\\\nonumber &=&
\log d^2
\\ &=& 
d_{\epsilon,\textrm{MISC}}^{(1)}(\mathcal{N}),
\end{eqnarray}   
where the fact that the first line holds follows from the monotonicity of $C_{H}^{2\epsilon}(\mathcal{N})$ under \textrm{MISCs}, the third line holds because of Lemma~\ref{lemma:F_in_S}, and the fourth line holds because we choose the pure state \( \psi_{RB_0} \) to be the maximally entangled state \( \phi^+_{RB_0} = \frac{1}{d} \sum_{i,j=0}^{d-1} \ketbra{ii}{jj}_{RB_0} \). The fifth line holds because \( J^{\mathcal{F}_{d}} \) and \( J^{\mathcal{M}_B} \) are the normalized Choi matrices of \( \mathcal{F}_{d,B} \) and the classical channel \( \mathcal{M}_B \), respectively. Thus, the normalized Choi matrix \( J^{\mathcal{M}_B} \) can be viewed as an incoherent state and it is easy to check that for any incoherent state $\sigma$, $Tr[J^{\mathcal{F}_{d}}\sigma]=\frac{1}{d^2}$.
\end{proof}
%%%%%%%%%%%%%%%%%%%%%%%%%%%%%%%%%%%%%%
\begin{definition}
For any quantum channel $\mathcal{N}\in \textrm{CPTP}(A_0\to A_1)$, its dephasing hypothesis-testing relative entropy of coherence is defined as
\begin{equation}
    C_{H,\Delta_A}^{\epsilon}(\mathcal{N})=\hat{H}_{\epsilon}(\mathcal{N}||\Delta_{A}[\mathcal{N}]),
\end{equation}
where $\Delta_{A}\in\hat{\mathfrak{O}}(A\to A)$ is the dephasing superchannel.
\end{definition}

\begin{shaded}
\begin{proposition} \label{prop:2ed-CH}
For any quantum channel $\mathcal{N}\in \textrm{CPTP}(A_0\to A_1) $, the dephasing hypothesis-testing relative entropy of coherence is a dynamic coherence monotone under the \textrm{DISCs}. In other words, for any $\Theta\in\textrm{DISC}(A\to B)$, we have
\begin{equation}
C_{H,\Delta_A}^{\epsilon}(\Theta[\mathcal{N}]) \leq C_{H,\Delta_A}^{\epsilon}(\mathcal{N}).   
\end{equation}
\end{proposition}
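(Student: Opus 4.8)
The plan is to reduce the claim to the data-processing inequality for the hypothesis-testing channel divergence already established in the proof of Proposition~\ref{prop:1st-CH}, namely Eq.~(\ref{eq:Hypothesistesting-DPI}),
$\hat{H}_\epsilon(\Theta[\mathcal{N}]||\Theta[\mathcal{M}])\leq\hat{H}_\epsilon(\mathcal{N}||\mathcal{M})$, which holds for every superchannel $\Theta$ and all channels $\mathcal{N},\mathcal{M}\in\textrm{CPTP}(A_0\to A_1)$. The only additional ingredient is the defining commutation relation of a DISC. This is structurally the same three-line argument as the lower-bound computation in Theorem~\ref{the:log-rob-lower-DISC}, with $\hat{D}_{\max}$ replaced by $\hat{H}_\epsilon$.

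First I would unfold the definition on the output side: for $\Theta\in\textrm{DISC}(A\to B)$, the quantity $C_{H,\Delta_A}^{\epsilon}(\Theta[\mathcal{N}])=\hat{H}_\epsilon(\Theta[\mathcal{N}]||\Delta_B[\Theta[\mathcal{N}]])$ uses as its reference channel the dephased version of $\Theta[\mathcal{N}]$ on the output system $B$. The crucial step is to rewrite this reference channel by invoking the DISC property $\Delta_B\circ\Theta=\Theta\circ\Delta_A$, which yields $\Delta_B[\Theta[\mathcal{N}]]=\Theta[\Delta_A[\mathcal{N}]]$. Hence $C_{H,\Delta_A}^{\epsilon}(\Theta[\mathcal{N}])=\hat{H}_\epsilon(\Theta[\mathcal{N}]||\Theta[\Delta_A[\mathcal{N}]])$. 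With both arguments now in the form $\Theta[\cdot]$, I would apply Eq.~(\ref{eq:Hypothesistesting-DPI}) with the choice $\mathcal{M}=\Delta_A[\mathcal{N}]$ to obtain $\hat{H}_\epsilon(\Theta[\mathcal{N}]||\Theta[\Delta_A[\mathcal{N}]])\leq\hat{H}_\epsilon(\mathcal{N}||\Delta_A[\mathcal{N}])=C_{H,\Delta_A}^{\epsilon}(\mathcal{N})$, which is exactly the asserted monotonicity.

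There is no serious obstacle; the main point requiring care is the bookkeeping of systems. The dephasing superchannel appearing in $C_{H,\Delta_A}^{\epsilon}(\Theta[\mathcal{N}])$ acts on the output system $B$, and one must use the DISC commutation relation to pull it back through $\Theta$ onto the input system $A$ before the previously established data-processing inequality can be applied. I would also note the contrast with Proposition~\ref{prop:1st-CH}: here no minimization over free channels is needed, since the reference is fixed to be the dephased channel itself, so the monotonicity follows directly without the extra step of enlarging the feasible set of classical channels.
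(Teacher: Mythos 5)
Your proposal is correct and follows essentially the same route as the paper's own proof: unfold the definition, use the DISC commutation relation $\Delta_B\circ\Theta=\Theta\circ\Delta_A$ to rewrite the reference channel, and then apply the data-processing inequality for $\hat{H}_\epsilon$ established in Proposition~\ref{prop:1st-CH} with $\mathcal{M}=\Delta_A[\mathcal{N}]$. Your closing remark that no minimization over classical channels is needed here (unlike in Proposition~\ref{prop:1st-CH}) is also accurate and consistent with the paper's argument.
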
    
\end{shaded}
\begin{proof}
  For any $\Theta\in \textrm{DISC}(A\to B)$, the following inequality holds:
  \begin{eqnarray}
    C_{H,\Delta_A}^{\epsilon}(\Theta[\mathcal{N}])
    &=&\nonumber
    \hat{H}_{\epsilon}(\Theta[\mathcal{N}]||\Delta_{B}\circ\Theta[\mathcal{N}])      
    \\\nonumber &=&  
    \hat{H}_{\epsilon}(\Theta[\mathcal{N}]||\Theta\circ\Delta_{A}[\mathcal{N}])  
    \\\nonumber &\leq&
       \hat{H}_{\epsilon}(\mathcal{N}||\Delta_{A}[\mathcal{N}])   
    \\\nonumber &=&
        C_{H,\Delta_A}^{\epsilon}(\mathcal{N}),    
  \end{eqnarray}
where the fact that the second equality holds follows from the definition of DISCs, and the inequality holds because the hypothesis-testing relative entropy of channel divergence is
monotonic under the freesuperchannel as shown in Proposition \ref{prop:1st-CH}.   
\end{proof}

%%%%%%%%%%%%%%%%%%%%%%%%%%%%%%%%%%%%%%%
The following result shows that the dephasing hypothesis-testing relative entropy of coherence is an upper bound of the one-shot dynamic coherence distillation under DISC:

\begin{shaded}
\begin{theorem}
For any quantum channel $\mathcal{N}\in \textrm{CPTP}(A_0\to A_1) $, the one-shot dynamic coherence distillation under DISC is bound as follows
\begin{equation}
d_{\epsilon,\textrm{DISC}}^{(1)}(\mathcal{N}) \leq  C_{H,\Delta}^{2\epsilon}(\mathcal{N}).
\end{equation}
\end{theorem}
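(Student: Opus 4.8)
The plan is to mirror the proof of Theorem~\ref{The:one-shot-distill-upperbound} for the MISC case, replacing the monotonicity of $C_H^{2\epsilon}$ with the DISC monotonicity of the dephasing hypothesis-testing relative entropy of coherence established in Proposition~\ref{prop:2ed-CH}. First I would fix an optimal free superchannel $\Theta\in\textrm{DISC}(A\to B)$ attaining $d_{\epsilon,\textrm{DISC}}^{(1)}(\mathcal{N})=\log d^2$, so that $\frac{1}{2}||\Theta[\mathcal{N}]-\mathcal{F}_d||_{\diamond}\leq\epsilon$. Proposition~\ref{prop:2ed-CH} then gives $C_{H,\Delta}^{2\epsilon}(\mathcal{N})\geq C_{H,\Delta}^{2\epsilon}(\Theta[\mathcal{N}])=\hat{H}_{2\epsilon}(\Theta[\mathcal{N}]||\Delta_B[\Theta[\mathcal{N}]])$, and the right-hand side is, by definition, a maximization over pure inputs $\psi_{RB_0}$ of the smoothed hypothesis-testing relative entropy, which is itself a maximization over test operators $0\leq P\leq I$ subject to the acceptance constraint $\Tr[P\, id_R\otimes\Theta[\mathcal{N}](\psi_{RB_0})]\geq 1-2\epsilon$.

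The central step is to insert a good feasible test operator. I would take $P=id_R\otimes\mathcal{F}_{d,B}(\psi_{RB_0})$; Lemma~\ref{lemma:F_in_S} guarantees exactly that $0\leq P\leq I$ and that the acceptance constraint holds, so $P$ is admissible and lower-bounds the quantity. This yields $C_{H,\Delta}^{2\epsilon}(\mathcal{N})\geq \max_{\psi_{RB_0}} -\log\Tr[id_R\otimes\mathcal{F}_{d,B}(\psi_{RB_0})\, id_R\otimes\Delta_B[\Theta[\mathcal{N}]](\psi_{RB_0})]$. I would then specialize the input to the maximally entangled state $\phi^+_{RB_0}$ (which can only decrease the maximum), turning both arguments into normalized Choi matrices and reducing the trace to $\Tr(J^{\mathcal{F}_d}J^{\Delta_B[\Theta[\mathcal{N}]]})$.

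The observation that closes the argument is that the reference channel here is forced to be $\Delta_B[\Theta[\mathcal{N}]]=\mathcal{D}_{B_1}\circ\Theta[\mathcal{N}]\circ\mathcal{D}_{B_0}$, which is a classical channel by the definition in Eq.~(\ref{eq:classicalchannel}); hence its normalized Choi matrix is an incoherent state. Using the same fact invoked in the MISC proof, namely $\Tr[J^{\mathcal{F}_d}\sigma]=\frac{1}{d^2}$ for every incoherent $\sigma$, the trace collapses to $1/d^2$, giving $C_{H,\Delta}^{2\epsilon}(\mathcal{N})\geq -\log\frac{1}{d^2}=\log d^2=d_{\epsilon,\textrm{DISC}}^{(1)}(\mathcal{N})$. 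I do not anticipate a serious obstacle, as the structure is parallel to the MISC case; the one point requiring care is that the dephasing version carries no minimization over classical channels, so I must verify that the fixed reference $\Delta_B[\Theta[\mathcal{N}]]$ is genuinely classical (which the definition of the dephasing superchannel together with Eq.~(\ref{eq:classicalchannel}) supplies) rather than appealing to a minimization as one does for $C_H^{2\epsilon}$.
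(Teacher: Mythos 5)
Your proposal is correct and follows essentially the same route as the paper's proof: monotonicity of $C_{H,\Delta}^{2\epsilon}$ under DISCs (Proposition~\ref{prop:2ed-CH}), the feasible test operator $id_R\otimes\mathcal{F}_{d,B}(\psi_{RB_0})$ supplied by Lemma~\ref{lemma:F_in_S}, specialization to the maximally entangled input, and the observation that $\Delta_B\circ\Theta[\mathcal{N}]$ is classical (via idempotency of $\mathcal{D}$), so its Choi matrix is incoherent and the overlap with $J^{\mathcal{F}_d}$ equals $1/d^2$. Your explicit remark that the dephasing version fixes the reference channel rather than minimizing over classical channels is precisely the point the paper also verifies.
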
    
\end{shaded}

\begin{proof}
Let $\Theta\in\textrm{DISC}(A\to B)$ be an optimal free superchannel such that $d_{\epsilon}^{(1)}(\mathcal{N})=\log d^2$, the following chain of inequalities holds
\begin{eqnarray}
C_{H,\Delta}^{2\epsilon}(\mathcal{N})
\nonumber &\geq& 
C_{H,\Delta}^{2\epsilon}(\Theta[\mathcal{N}])
\\\nonumber &=& 
\hat{H}_{2\epsilon}(\Theta[\mathcal{N}]||\Delta_B\circ\Theta[\mathcal{N}])
\\\nonumber &=& 
\max_{\psi_{RB_0}} H_{2\epsilon}(id_R\otimes\Theta[\mathcal{N}](\psi_{RB_0})||id_R\otimes
 \Delta_B\circ\Theta[\mathcal{N}]   
(\psi_{RB_0}))
\\\nonumber&=& 
\max_{\psi_{RB_0}}  \max_{ 0\leq P_{RB_0}\leq I, \Tr[ P_{RB_0} id_R\otimes\Theta[\mathcal{N}](\psi_{RB_0})]\geq 1-2\epsilon   } 
-\log \Tr\left[P_{RB_0}id_R\otimes\Delta_B\circ\Theta[\mathcal{N}]   
(\psi_{RB_0})
\right]
\\\nonumber&\geq& 
\max_{\psi_{RB_0}}  -\log \Tr\left[id_R\otimes\mathcal{F}_{d,B}(\psi_{RB_0}) 
id_R\otimes\Delta_B\circ\Theta[\mathcal{N}] (\psi_{RB_0})
\right]
\\\nonumber &\geq&
-\log \Tr\left[id_R\otimes\mathcal{F}_{d,B}(\phi^+_{RB_0}) 
id_R\otimes\Delta_B\circ\Theta[\mathcal{N}] (\phi^+_{RB_0})
\right]
\\\nonumber &=&
\min_{\mathcal{M}_B\in\textrm{C}(B_0\to B_1) }   
-\log \Tr(J^{\mathcal{F}_{d}}J^{\Delta_B\circ\Theta[\mathcal{N}]})
\\\nonumber &=&
\log d^2
\\ &=& 
d_{\epsilon,\textrm{DISC}}^{(1)}(\mathcal{N}),
\end{eqnarray}   
where the first line follows from the monotonicity of $C_{H,\Delta}^{2\epsilon}(\mathcal{N})$ under DISCs, the fourth line follows from the definition of the hypothesis-testing relative entropy, the fifth line holds because we define $P_{RB_0} = \text{id}_R \otimes \mathcal{F}_{d,B}(\psi_{RB_0})$, and the sixth line holds because we choose the pure state \( \psi_{RB_0} \) to be the maximally entangled state \( \phi^+_{RB_0} = \frac{1}{d} \sum_{i,j=0}^{d-1} \ketbra{ii}{jj}_{RB_0} \). The eighth line holds because \( \Delta_B \circ \Theta[\mathcal{N}] \) is a classical channel, which follows from the fact that $\Delta \circ \Theta[\mathcal{N}] = \mathcal{D} \circ \Theta[\mathcal{N}] \circ \mathcal{D} = \mathcal{D} \circ \mathcal{D} \circ \Theta[\mathcal{N}] \circ \mathcal{D} \circ \mathcal{D} = \mathcal{D} \circ (\Delta \circ \Theta[\mathcal{N}]) \circ \mathcal{D}$. Thus, the normalized Choi matrix \( J^{\Delta_B \circ \Theta[\mathcal{N}]} \) is an incoherent state, and it is easy to check that for any incoherent state $\sigma$, \( \Tr[J^{\mathcal{F}_{d}} \sigma] = \frac{1}{d^2} \).
\end{proof}
%%%%%%%%%%%%%%%%%%%%%%%%%%%%%%%%%%%%%%%%%%%%%%%%%%%%%%%%%%%%%%%%%%%%%%%%%%%%%%%%%%
\section{One-shot catalytic dynamic coherence cost under \texorpdfstring{$\delta$}{delta}-MISC}~\label{sec:one-shot catalytic cost}
%In this section, we will explore the one-shot dynamic coherence cost with the assistance of catalysis. To investigate the limitations of dynamic coherence cost, we consider the free superchannel to be $\delta$-MISC. 
To investigate the limitations of the dynamic coherence cost, we appropriately relax the conditions of the dynamic coherence cost protocol in Definition \ref{Def.one-shot-cost}. First, we allow the use of catalysts. Second, we consider a class of free superchannels that can generate a small amount of dynamic coherence when acting on classical channels, which we refer to as $\delta$-MISC. Based on these two relaxations, we will explore the one-shot dynamic coherence cost with the assistance of catalysis. We are now ready to present the formal definition of $\delta$-MISC:
\begin{definition}[$\delta$-MISC]
A superchannel $\Theta\in\hat{\mathfrak{O}}(A\to B)$ is said to be $\delta$-MISC if for any quantum channel $\mathcal{N}\in\textrm{C}(A_0\to A_1)$, $\hat{C}_R(\Theta[\mathcal{N}])\leq\delta$,
where \( \hat{C}_R \)  represents the robustness of coherence for a quantum channel.
\end{definition}
We denote the set of $\delta$-MISCs as $\delta$-$\textrm{MISC}(A\to B)$.

\begin{remark}
The intuition behind using $\hat{C}_R$ in the definition of $\delta$-MISCs stems from the static QRT~\cite{Brandao15,Chitambar2019,Hayashi2024generalized,Lami2024solution,Brandao2011one-shot,brandao2010reversible,xi2019coherence}. The counterpart of $\delta$-MISC in the static QRT is referred to as  $\epsilon$-resource non-generating maps, whose definition is directly based on the generalized robustness (see Sec III.C.3 in Ref.\cite{Chitambar2019}). These operations have been shown to be highly effective for studying asymptotic resource convertibility. Furthermore, in the study of the dynamic QRT of entanglement, the authors of Ref.\cite{KIM-one-shot-2021} similarly defined the free superchannels analogous to $\delta$-MISCs based on the robustness measure of the channel. For the above reasons, we use $\hat{C}_R$ as the fundamental quantity to define $\delta$-MISCs.
\end{remark}

The following results give the upper bounds of log-robustness $\hat{LR}(\mathcal{N})$ and its smoothed version $\hat{LR}_{\epsilon}(\mathcal{N})$ under a $\delta$-MISC:

\begin{shaded}
\begin{lemma}\label{lemma:LRbound}
 For any $\Theta\in\delta$-$\textrm{MISC}(A\to B)$ and quantum channel $\mathcal{N}\in \textrm{CPTP}(A_0\to A_1)$, the following holds
 \begin{equation}
\hat{LR}(\Theta[\mathcal{N}])\leq \hat{LR}(\mathcal{N})+\log(1+\delta).   
 \end{equation}
\end{lemma}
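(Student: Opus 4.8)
The plan is to reduce the lemma to two ingredients that are already available in the paper: the data-processing inequality for the channel max-relative entropy $\hat{D}_{\max}$ under superchannels (the very monotonicity invoked in the proof of Theorem~\ref{the:log-rob-lower-DISC}), and the definition of the channel robustness $\hat{C}_R$ together with the $\delta$-MISC hypothesis. First I would fix an optimal classical channel $\mathcal{M}^*\in\textrm{C}(A_0\to A_1)$ realizing $\hat{LR}(\mathcal{N})=\hat{D}_{\max}(\mathcal{N}||\mathcal{M}^*)$, so that $\mathcal{N}\leq 2^{\hat{LR}(\mathcal{N})}\mathcal{M}^*$ in the completely-positive ordering. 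Because a superchannel $\Theta$ sends completely-positive maps to completely-positive maps, applying $\Theta$ preserves this ordering, which is exactly the statement $\hat{D}_{\max}(\Theta[\mathcal{N}]||\Theta[\mathcal{M}^*])\leq \hat{D}_{\max}(\mathcal{N}||\mathcal{M}^*)=\hat{LR}(\mathcal{N})$.

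Next I would exploit that $\mathcal{M}^*$ is a \emph{classical} channel, so the defining property of $\delta$-MISC applies to it and gives $\hat{C}_R(\Theta[\mathcal{M}^*])\leq\delta$. Unpacking the definition of $\hat{C}_R$, there exist a number $s\leq\delta$, a channel $\mathcal{M}'\in\textrm{CPTP}(B_0\to B_1)$, and a classical channel $\mathcal{L}\in\textrm{C}(B_0\to B_1)$ with $\Theta[\mathcal{M}^*]+s\mathcal{M}'=(1+s)\mathcal{L}$. Since $s\mathcal{M}'$ is completely positive, discarding it preserves the completely-positive ordering and yields $\Theta[\mathcal{M}^*]\leq(1+\delta)\mathcal{L}$, i.e. $\hat{D}_{\max}(\Theta[\mathcal{M}^*]||\mathcal{L})\leq\log(1+\delta)$. (This step is nothing but the standard identity $\hat{LR}=\log(1+\hat{C}_R)$ specialized to the channel $\Theta[\mathcal{M}^*]$.)

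Finally I would chain the two estimates with the triangle-type property of $\hat{D}_{\max}$, namely that $X\leq 2^{a}Y$ and $Y\leq 2^{b}Z$ imply $X\leq 2^{a+b}Z$. Because $\mathcal{L}$ is a classical channel and hence feasible in the minimization defining $\hat{LR}(\Theta[\mathcal{N}])$, this gives
\[
\hat{LR}(\Theta[\mathcal{N}])\leq \hat{D}_{\max}(\Theta[\mathcal{N}]||\mathcal{L})\leq \hat{D}_{\max}(\Theta[\mathcal{N}]||\Theta[\mathcal{M}^*])+\hat{D}_{\max}(\Theta[\mathcal{M}^*]||\mathcal{L})\leq \hat{LR}(\mathcal{N})+\log(1+\delta),
\]
which is precisely the assertion.

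I expect the only delicate step to be the second one: correctly reading off from the definition of $\hat{C}_R$ that the robustness bound $\hat{C}_R(\Theta[\mathcal{M}^*])\leq\delta$ converts into a completely-positive inequality $\Theta[\mathcal{M}^*]\leq(1+\delta)\mathcal{L}$ \emph{against a genuinely classical channel} $\mathcal{L}$. The two facts that make this work are that the perturbing map $\mathcal{M}'$ is CPTP, hence completely positive and discardable without breaking the ordering, and that the resulting $\mathcal{L}$ lies in $\textrm{C}(B_0\to B_1)$, so it is admissible in the minimization defining $\hat{LR}(\Theta[\mathcal{N}])$. The monotonicity used in the first step and the triangle inequality used in the last step are standard properties of the channel max-relative entropy and were effectively already used earlier in the paper.
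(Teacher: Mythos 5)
Your proof is correct and is essentially the paper's own argument recast in $\hat{D}_{\max}$ language: the paper decomposes $\mathcal{N}$ against its robustness-optimal classical channel $\mathcal{G}$ (which coincides with your $\mathcal{M}^*$, since $\hat{LR}=\log(1+\hat{C}_R)$), applies $\Theta$ by linearity, invokes the $\delta$-MISC hypothesis on $\Theta[\mathcal{G}]$, and re-decomposes against a second classical channel — precisely your three steps of data processing, robustness-to-ordering conversion, and the triangle inequality for $\hat{D}_{\max}$. The only difference is presentational: the paper keeps the completely positive ``noise'' terms inside exact convex mixtures and discards them at the end, whereas you discard them early via monotonicity of $\hat{D}_{\max}$ under superchannels, a fact the paper itself uses elsewhere (Lemma~\ref{lemma:cost-f_d} and Theorem~\ref{the:log-rob-lower-DISC}), so both routes are equally valid.
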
    
\end{shaded}
\begin{proof}
Let $\mathcal{M}\in \textrm{CPTP}(A_0\to A_1)$ be the channel such that 
\begin{equation}
    \mathcal{G}:=\frac{\mathcal{N} + \hat{C}_R(\mathcal{N}) \mathcal{M}}  {1+\hat{C}_R(\mathcal{N})} \in\textrm{C}(A_0\to A_1),
\end{equation}
where $\hat{C}_R(\mathcal{N})$ is the robustness of coherence for $\mathcal{N}$. Using the linear property, we have
\begin{equation}\label{eq:pro-1}
\Theta[\mathcal{N}] + \hat{C}_R(\mathcal{N})   \Theta[\mathcal{M}]  =   
(1+\hat{C}_R(\mathcal{N}))   \Theta[\mathcal{G}].
\end{equation}
For the quantum channel $\Theta[\mathcal{G}]$, we can always find another quantum channel $\mathcal{P}\in \textrm{CPTP}(B)$ such that 
\begin{equation}
\mathcal{Q}:=\frac{\Theta[\mathcal{G}]+\hat{C}_R(\Theta[\mathcal{G}] ) \mathcal{P}}{1+\hat{C}_R(\Theta[\mathcal{G}] )}    \in \textrm{C}(B_0\to B_1).
\end{equation}
Then, we have
\begin{eqnarray}\label{eq:pro-2}
\Theta[\mathcal{G}]+\hat{C}_R(\Theta[\mathcal{G}] ) \mathcal{P}= (1+\hat{C}_R(\Theta[\mathcal{G}] )) \mathcal{ Q }
\end{eqnarray}
Combining Eq.(\ref{eq:pro-1}) with Eq.(\ref{eq:pro-2}), we obtain that
\begin{equation}
    \Theta[\mathcal{N}] + \hat{C}_R(\mathcal{N})   \Theta[\mathcal{M}] + (1+\hat{C}_R(\mathcal{N}))  \hat{C}_R(\Theta[\mathcal{G}] ) \mathcal{P}  =    
(1+\hat{C}_R(\mathcal{N}))  (1+\hat{C}_R(\Theta[\mathcal{G}] )) \mathcal{Q}.   
\end{equation}
Which implies that 
\begin{eqnarray}
    \Theta[\mathcal{N}] \nonumber &\leq&   
(1+\hat{C}_R(\mathcal{N}))  (1+\hat{C}_R(\Theta[\mathcal{G}] )) \mathcal{Q}
\\ &\leq& (1+\hat{C}_R(\mathcal{N}))  (1+\delta ) \mathcal{Q}.
\end{eqnarray}
Thus, we have
\begin{eqnarray}
    1+\hat{C}_R(\Theta[\mathcal{N}]) \leq (1+\hat{C}_R(\mathcal{N})) (1+\delta ). 
\end{eqnarray}
\end{proof}

\begin{shaded}
\begin{lemma}
For any $\Theta\in\delta\textrm{-MISC}(A\to B)$ and $\mathcal{N}\in\textrm{CPTP}(A_0\to A_1)$, the following holds
    \begin{equation}
        \hat{LR}_{\epsilon}( \Theta[ \mathcal{N}   ])\leq \hat{LR}_{\epsilon}(\mathcal{N})  +  \log(1+\delta).  
    \end{equation}
\end{lemma}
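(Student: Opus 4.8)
The plan is to reduce this smoothed inequality to the already-established unsmoothed bound in Lemma~\ref{lemma:LRbound} by exploiting the contractivity of the diamond norm under superchannels. The key observation is that the smoothing ball $B_\epsilon(\mathcal{N})$ is mapped \emph{into} the smoothing ball $B_\epsilon(\Theta[\mathcal{N}])$ by any superchannel $\Theta$, so an optimizer for $\hat{LR}_\epsilon(\mathcal{N})$ can be transported through $\Theta$ to produce a feasible competitor in the minimization defining $\hat{LR}_\epsilon(\Theta[\mathcal{N}])$.

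Concretely, I would first let $\mathcal{N}_\epsilon \in B_\epsilon(\mathcal{N})$ be a channel attaining the minimum in the definition of the smoothed log-robustness, so that $\hat{LR}(\mathcal{N}_\epsilon) = \hat{LR}_\epsilon(\mathcal{N})$ and $\tfrac{1}{2}\|\mathcal{N} - \mathcal{N}_\epsilon\|_{\diamond} \leq \epsilon$. Next I would invoke the monotonicity of the diamond norm under superchannels stated earlier in the Preliminaries, namely $\|\Theta[\mathcal{N}_1] - \Theta[\mathcal{N}_2]\|_{\diamond} \leq \|\mathcal{N}_1 - \mathcal{N}_2\|_{\diamond}$, applied to $\mathcal{N}_1 = \mathcal{N}$ and $\mathcal{N}_2 = \mathcal{N}_\epsilon$. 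This yields
\begin{equation}
\tfrac{1}{2}\|\Theta[\mathcal{N}] - \Theta[\mathcal{N}_\epsilon]\|_{\diamond} \leq \tfrac{1}{2}\|\mathcal{N} - \mathcal{N}_\epsilon\|_{\diamond} \leq \epsilon,
\end{equation}
which is exactly the statement that $\Theta[\mathcal{N}_\epsilon] \in B_\epsilon(\Theta[\mathcal{N}])$.

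Having established feasibility, the chain closes quickly. By the definition of $\hat{LR}_\epsilon$ as a minimization over the smoothing ball, $\Theta[\mathcal{N}_\epsilon]$ being feasible gives $\hat{LR}_\epsilon(\Theta[\mathcal{N}]) \leq \hat{LR}(\Theta[\mathcal{N}_\epsilon])$. I would then apply Lemma~\ref{lemma:LRbound} to the channel $\mathcal{N}_\epsilon$ to obtain $\hat{LR}(\Theta[\mathcal{N}_\epsilon]) \leq \hat{LR}(\mathcal{N}_\epsilon) + \log(1+\delta)$, and finally substitute $\hat{LR}(\mathcal{N}_\epsilon) = \hat{LR}_\epsilon(\mathcal{N})$. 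Concatenating these gives
\begin{equation}
\hat{LR}_\epsilon(\Theta[\mathcal{N}]) \leq \hat{LR}(\Theta[\mathcal{N}_\epsilon]) \leq \hat{LR}_\epsilon(\mathcal{N}) + \log(1+\delta),
\end{equation}
as desired.

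I do not anticipate a genuine obstacle here, since both the unsmoothed bound and the diamond-norm contractivity are already in hand; the only subtle point is to ensure that $\Theta[\mathcal{N}_\epsilon]$ is a legitimate CPTP channel (so that it is an admissible candidate in the ball $B_\epsilon(\Theta[\mathcal{N}])$), which is automatic because $\Theta$ is a superchannel and hence maps channels to channels. The proof is therefore essentially a transport-of-optimizer argument, with the contractivity inequality serving as the single nontrivial ingredient.
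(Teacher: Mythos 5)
Your proposal is correct and follows essentially the same route as the paper: transport the optimizer $\mathcal{N}_\epsilon$ of $\hat{LR}_\epsilon(\mathcal{N})$ through $\Theta$, use it as a feasible point for $\hat{LR}_\epsilon(\Theta[\mathcal{N}])$, and apply Lemma~\ref{lemma:LRbound}. In fact you make explicit the one step the paper leaves implicit, namely that diamond-norm contractivity under superchannels is what guarantees $\Theta[\mathcal{N}_\epsilon]\in B_\epsilon(\Theta[\mathcal{N}])$.
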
    
\end{shaded}
\begin{proof}
Let $\mathcal{N}^*\in\textrm{CPTP}(A_0\to A_1)$ be the optimal argument of $\hat{LR}_{\epsilon}(\mathcal{N})$, such that $\hat{LR}_{\epsilon}(\mathcal{N})=\hat{LR}(\mathcal{N}^*)$. Then, the following inequalities hold:
\begin{eqnarray}
    \hat{LR}_{\epsilon}(\Theta[\mathcal{N}]) 
&\leq&  \nonumber           \hat{LR}(\Theta[\mathcal{N}^*])
 \\\nonumber  &\leq& \hat{LR}(\mathcal{N}^*)  + \log(1+\delta)  
 \\ &=& \hat{LR}_{\epsilon}(\mathcal{N})  + \log(1+\delta),
\end{eqnarray}  
where the second inequality follows from Lemma~\ref{lemma:LRbound}.
\end{proof}
%%%%%%%%%%%%%%%%%%%%%%%%%%%%%%%%%%%%%%%%%%%%%%%%%%%%%%%%%%%%%%%%%%%%%%%%%%%%

Now, we formally define the one-shot catalytic dynamic coherence cost of a quantum channel as follows:
\begin{definition}
    For any $\delta,\epsilon \geq 0$ and $d,l\in \mathbb N$, the one-shot catalytic dynamic coherence cost of a quantum channel $\mathcal{N}_A\in \textrm{CPTP}(A_0\to A_1)$ is defined as 
\begin{eqnarray}\label{def:one-shot-cat-cost}
    \tilde{c}_{\epsilon,\delta}^{(1)}(\mathcal{N}_A)= 
      \min\{\log d^2: 
        \Theta_{A'B'\to AB}[\mathcal{F}_{d,A'}\otimes\mathcal{F}_{l,B'}]  =  
        \mathcal{N}'_{A} \otimes \mathcal{F}_{l,B},
          \\\nonumber \Theta_{A'B'\to AB}\in \delta\textrm{-MISC}(A'B'\to AB),
            \frac{1}{2}||\mathcal{N}'_A-\mathcal{N}_A||_{\diamond}\leq \epsilon
    \}.
\end{eqnarray}    
\end{definition}
%%%%%%%%%%%%%%%%%%%%%%%%%%%%%%%%%%%%%%%%
Before presenting the upper and lower bounds for the one-shot catalytic dynamic coherence cost of a quantum channel, we first introduce the following lemma:
%%%%%%%%%%%%%%%%%%%%%%%%%%%%%%%%%%%%%%%%%%%
\begin{shaded}
\begin{lemma}\label{lemma:cat-cost}
 For any quantum channel $\mathcal{N}_A\in \textrm{CPTP}(A_0\to A_1)$ and $\epsilon\geq 0$, there exists a quantum channel $\mathcal{M}_{AB}^{\epsilon'}\in \textrm{CPTP}(A_0B_0\to A_1B_1)$ satisfying following properties:
   \begin{equation} \label{eq:M_AB-epsilon}       \mathcal{M}_{AB}^{\epsilon'}=p\mathcal{N}^{\epsilon}_A\otimes\mathcal{F}_{l,B}+(1-p)\mathcal{L}_{AB},
   \end{equation}
\begin{equation}\label{eq:39}
\hat{LR}(\mathcal{M}_{AB}^{\epsilon'})\leq \hat{LR}_{\epsilon'}(\mathcal{N}_A\otimes\mathcal{F}_{l,B}),    
\end{equation}
\begin{equation}\label{eq:40}
p\geq 1-2\epsilon',    
\end{equation}
\begin{equation}\label{eq:41}
\frac{1}{2}||\mathcal{N}^{\epsilon}_A-\mathcal{N}_A||_{\diamond}\leq \epsilon ,
\end{equation}
where $\mathcal{L}_{AB}\in \textrm{CPTP}(A_0B_0\to A_1B_1)$ and $\epsilon'=\frac{\epsilon^2}{2|A_0|^2}$.
\end{lemma}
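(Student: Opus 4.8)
The plan is to manufacture $\mathcal{M}_{AB}^{\epsilon'}$ out of a smoothing optimizer and then peel off a clean tensor copy of the catalyst. First I would let $\mathcal{C}$ be an optimal channel for the smoothed quantity, so that $\frac{1}{2}||\mathcal{C}-\mathcal{N}_A\otimes\mathcal{F}_{l,B}||_{\diamond}\leq\epsilon'$ and $\hat{LR}(\mathcal{C})=\hat{LR}_{\epsilon'}(\mathcal{N}_A\otimes\mathcal{F}_{l,B})$. Passing to normalized Choi states gives $\frac{1}{2}||J^{\mathcal{C}}-J^{\mathcal{N}_A}\otimes J^{\mathcal{F}_{l,B}}||_{1}\leq\epsilon'$, and since the catalyst's Choi state $J^{\mathcal{F}_{l,B}}=\ketbra{\mathcal{F}_l}{\mathcal{F}_l}$ is pure, the Fuchs--van de Graaf inequalities upgrade this to a fidelity bound $F(J^{\mathcal{C}},J^{\mathcal{N}_A}\otimes J^{\mathcal{F}_{l,B}})\geq(1-\epsilon')^2\geq 1-2\epsilon'$. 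This single estimate is the engine behind all three conclusions, and the square appearing here together with the dimension factor below is exactly what calibrates $\epsilon'=\epsilon^2/(2|A_0|^2)$.

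Next I would read off the weight $p$ and the factor $\mathcal{N}_A^\epsilon$ by projecting the catalyst block. Define $p:=\braandket{\mathcal{F}_l}{J^{\mathcal{C}}_{\tilde B_0 B_1}}{\mathcal{F}_l}$, the overlap of the $B$-reduced Choi state $J^{\mathcal{C}}_{\tilde B_0 B_1}=\Tr_{\tilde A_0 A_1}[J^{\mathcal{C}}]$ with the pure catalyst Choi state. Monotonicity of fidelity under the partial trace over the $A$-systems gives $p=F(J^{\mathcal{C}}_{\tilde B_0 B_1},J^{\mathcal{F}_{l,B}})\geq 1-2\epsilon'$, which is Eq.~(\ref{eq:40}). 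Letting $\mathcal{B}$ be the pinching of the $\tilde B_0 B_1$ systems in the basis $\{\ket{\mathcal{F}_l},\ket{\mathcal{F}_l}^{\perp}\}$, the success block $\ketbra{\mathcal{F}_l}{\mathcal{F}_l}\,J^{\mathcal{C}}\,\ketbra{\mathcal{F}_l}{\mathcal{F}_l}$ collapses the catalyst to the pure state and leaves a weight-$p$ operator on the $A$-systems whose normalization defines $J^{\mathcal{N}_A^\epsilon}$; dumping the orthogonal block into $\mathcal{L}_{AB}$ yields the convex form of Eq.~(\ref{eq:M_AB-epsilon}). For Eq.~(\ref{eq:41}) I would apply the gentle-measurement lemma together with the fidelity bound to conclude $F(J^{\mathcal{N}_A^\epsilon},J^{\mathcal{N}_A})\geq 1-2\epsilon'$, convert back with Fuchs--van de Graaf to $\frac{1}{2}||J^{\mathcal{N}_A^\epsilon}-J^{\mathcal{N}_A}||_1\leq\sqrt{2\epsilon'}$, and finally pass from Choi trace distance to diamond distance at the cost of the input-dimension factor $|A_0|$, giving $\frac{1}{2}||\mathcal{N}_A^\epsilon-\mathcal{N}_A||_{\diamond}\leq |A_0|\sqrt{2\epsilon'}=\epsilon$.

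For the coherence bound Eq.~(\ref{eq:39}) I would transfer the max-relative-entropy witness of $\mathcal{C}$ through the same pinching. Writing $t=\hat{LR}(\mathcal{C})$ and letting $\mathcal{K}^*$ be the optimal classical channel, we have $J^{\mathcal{C}}\leq 2^{t}J^{\mathcal{K}^*}$; since $\mathcal{B}$ is completely positive this is preserved, so $J^{\mathcal{M}_{AB}^{\epsilon'}}=\mathcal{B}(J^{\mathcal{C}})\leq 2^{t}\,\mathcal{B}(J^{\mathcal{K}^*})$. The key structural input is the identity already exploited in the distillation bounds, namely $\Tr(J^{\mathcal{F}_l}\sigma)=1/l^2$ for every incoherent $\sigma$, which pins the catalyst-overlap of the incoherent operator $J^{\mathcal{K}^*}$ and lets me dominate $\mathcal{B}(J^{\mathcal{K}^*})$ by a multiple of a genuine classical-channel Choi state (e.g.\ the maximally mixing replacement channel, for which $J^{\mathcal{F}_{l,B}}\leq l^2 J^{\mathcal{R}}$). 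Tracking the constants then yields $\hat{LR}(\mathcal{M}_{AB}^{\epsilon'})\leq t=\hat{LR}_{\epsilon'}(\mathcal{N}_A\otimes\mathcal{F}_{l,B})$.

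The two places I expect genuine work are exactly where the naive pinching misbehaves. First, the projected $A$-block need not be trace-preserving, because $\ket{\mathcal{F}_l}$ is entangled across $\tilde B_0 B_1$ and conditioning on the catalyst outcome can distort the marginal on $\tilde A_0$; the fix is to absorb the $O(\epsilon')$ trace-preservation defect into a redefinition of $\mathcal{L}_{AB}$ and a slight adjustment of $p$, then recheck that this perturbation still respects Eqs.~(\ref{eq:40})--(\ref{eq:41}). Second, the catalyst block $\ketbra{\mathcal{F}_l}{\mathcal{F}_l}$ points in a coherent direction, so the pinching is \emph{not} a free superchannel and monotonicity of $\hat{LR}$ cannot be invoked abstractly; making Eq.~(\ref{eq:39}) go through therefore hinges on the explicit $1/l^2$ overlap identity rather than on a data-processing argument. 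These two bookkeeping steps, rather than any single inequality, are the crux of the lemma.
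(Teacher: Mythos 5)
Your high-level strategy coincides with the paper's: take the smoothing optimizer, hit its catalyst block with the two-outcome Choi-level measurement $\{J^{\mathcal{F}_{l,B}},\,I-J^{\mathcal{F}_{l,B}}\}$, and get Eqs.~(\ref{eq:40})--(\ref{eq:41}) from Fuchs--van de Graaf plus block orthogonality (these parts of your sketch are sound and match the paper). The genuine gap is in Eq.~(\ref{eq:39}), and the idea you are missing is exactly where the paper departs from a pinching. The paper's extraction map, Eq.~(\ref{eq:misc-lemma}), is measure-and-\emph{prepare}: $\Omega_B[\mathcal{E}]=\Tr[J^{\mathcal{F}_{l,B}}J^{\mathcal{E}}]\,\mathcal{F}_{l,B}+\Tr[(I-J^{\mathcal{F}_{l,B}})J^{\mathcal{E}}]\,\mathcal{G}_B$ with the fixed complement channel $J^{\mathcal{G}_B}=(I-J^{\mathcal{F}_{l,B}})/(l^2-1)$. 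Since $\tfrac{1}{l^2}J^{\mathcal{F}_{l,B}}+(1-\tfrac{1}{l^2})J^{\mathcal{G}_B}=I/l^2$ and every classical channel has catalyst overlap exactly $1/l^2$, $\Omega_B$ maps classical channels to classical channels; applying it to both sides of $J^{\mathcal{C}}\leq 2^{t}J^{\mathcal{K}^*}$ gives $J^{\mathcal{M}^{\epsilon'}_{AB}}\leq 2^{t}J^{\Omega_B[\mathcal{K}^*]}$ with $J^{\Omega_B[\mathcal{K}^*]}=\Tr_B[J^{\mathcal{K}^*}]\otimes I/l^2$ genuinely classical, i.e.\ Eq.~(\ref{eq:39}) with zero loss. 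Your pinching $\mathcal{B}$ keeps the post-measurement operators instead, so $\mathcal{B}(J^{\mathcal{K}^*})$ is \emph{not} classical: its success block $\tfrac{1}{l^2}\Tr_B[J^{\mathcal{K}^*}]\otimes\ketbra{\mathcal{F}_l}{\mathcal{F}_l}$ is coherent, and so is the failure block. It can then never be dominated at cost one: $\mathcal{B}(J^{\mathcal{K}^*})$ and any classical Choi matrix $J^{\mathcal{Q}}$ both have unit trace, and $0\leq A\leq B$ together with $\Tr A=\Tr B$ forces $A=B$; hence every $\lambda$ with $\mathcal{B}(J^{\mathcal{K}^*})\leq\lambda J^{\mathcal{Q}}$ satisfies $\lambda>1$, and the excess is a structural constant of the map (it does not shrink with $\epsilon'$, since neither $\mathcal{B}$ nor $J^{\mathcal{K}^*}$ depends on $\epsilon'$). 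So your conclusion $\hat{LR}(\mathcal{M}^{\epsilon'}_{AB})\leq t$ is not reachable along this route: the bound $J^{\mathcal{F}_{l,B}}\leq l^2J^{\mathcal{R}}$ repairs the success block, but the failure block's coherences remain uncontrolled and cost a nonvanishing additive constant.

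On the trace-preservation defect you are more careful than the paper, which silently asserts that the success-branch conditional object is proportional to a channel (false in general, precisely because $\ket{\mathcal{F}_l}$ is entangled across $\tilde{B}_0B_1$). But your repair is not mere bookkeeping and, as described, does not recover Eq.~(\ref{eq:40}). Keeping $\mathcal{M}^{\epsilon'}_{AB}$ as the pinched channel and re-splitting it as $p'\mathcal{N}^{\epsilon}_A\otimes\mathcal{F}_{l,B}+(1-p')\mathcal{L}_{AB}$ with both factors CPTP forces, by orthogonality of the two $B$-blocks, the operator inequality $p'J^{\mathcal{N}^{\epsilon}_A}\leq X_A:=\bra{\mathcal{F}_l}J^{\mathcal{C}}\ket{\mathcal{F}_l}$; taking $\Tr_{A_1}$ yields the cap $p'\leq |A_0|\,\lambda_{\min}\!\left(\Tr_{A_1}X_A\right)$, and the available estimate $\norm{\Tr_{A_1}X_A-I/|A_0|}_\infty\leq 2\epsilon'$ only guarantees this cap to be at least $1-2|A_0|\epsilon'$. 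So the adjusted weight is $p'\geq 1-2|A_0|\epsilon'$ at best, weaker than Eq.~(\ref{eq:40}) by a dimension factor, and you still owe an explicit construction of a channel Choi matrix satisfying $p'J^{\mathcal{N}^{\epsilon}_A}\leq X_A$ at all. To be fair, the paper's own proof shares this unaddressed subtlety when it asserts the form Eq.~(\ref{eq:M_AB-epsilon}); but flagging a hole is not closing it, and your proposed patch neither closes it nor preserves the lemma's stated constants.
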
    
\end{shaded}
\begin{proof}
Let $ \tilde{M}_{AB}^{\epsilon'}  \in \textrm{CPTP}(A_0B_0\to A_1B_1)$ satisfies $\hat{LR}( \tilde{\mathcal{M}}_{AB}^{\epsilon'})=
\hat{LR}_{\epsilon'}( \mathcal{N}_A\otimes\mathcal{F}_{l,B} )$, which implies there exists a classical channel $\mathcal{P}_{AB}\in \textrm{C}(A_0B_0\to A_1B_1)$ such that 
\begin{equation}\label{eq:M_ep}
\tilde{M}_{AB}^{\epsilon'}  \leq  2^{\hat{LR}( \tilde{\mathcal{M}}_{AB}^{\epsilon'})}\mathcal{P}_{AB}.
\end{equation}
Now, consider following superchannel $\Omega_{A}\in\hat{\mathfrak O}(A\to A)$ which maps any quantum channel $\mathcal{E}_{A}\in \textrm{CPTP}(A_0\to A_1)$ into following normalized Choi matrix structured form
\begin{eqnarray}\label{eq:misc-lemma}
    \Omega_A[\mathcal{E}_A]  
= \Tr[J^{ \mathcal{F}_{d,A}} J^{\mathcal{E}_{A}}] \mathcal{F}_{d,A}
+     \Tr[(I_{\tilde{A}A}  - J^{ \mathcal{F}_{d,A}}   ) J^{\mathcal{E}_{A}}]  \mathcal{G}_A,
\end{eqnarray}
with $\mathcal{G}_{A}\in \textrm{CPTP}(A_0\to A_1)$ corresponding to the following normalized Choi matrix:
\begin{equation}
   J^{\mathcal{G}_A}=\frac{I-J^{\mathcal{F}_{d,A}} }{d^2-1}.
\end{equation} 
Note that $\Omega_A\in \textrm{MISC}(A\to A)$ due to the fact that for any classical channel $\mathcal{P}_A\in\textrm{C}(A_0\to A_1)$, the following equality chain holds for the normalised Choi matrix $J^{\Omega_A[\mathcal{P}_A]}$:
\begin{eqnarray}
    J^{\Omega_A[\mathcal{P}_A]} &=&\nonumber
      \Tr[J^{ \mathcal{F}_{d,A}} J^{\mathcal{P}_{A}}] J^{\mathcal{F}_{d,A}}
+     \Tr[(I_{\tilde{A}A}  - J^{ \mathcal{F}_{d,A}}   ) J^{\mathcal{P}_{A}}]  J^{\mathcal{G}_A }    
    \\\nonumber &=&  
       \frac{1}{d^2} J^{\mathcal{F}_{d,A}}
+     (1-\frac{1}{d^2}) J^{\mathcal{G}_A }  
    \\ &=&
    \frac{I}{d^2}\in \mathfrak{F},
\end{eqnarray}   
where the second equality follows from $\Tr[J^{ \mathcal{F}_{d,A}} J^{\mathcal{P}_{A}}] = \frac{1}{d^2}$. 

Let $M_{AB}^{\epsilon'}:=\Omega_B[\tilde{M}_{AB}^{\epsilon'}]$. From Eq.~(\ref{eq:misc-lemma}), we find that $\mathcal{M}_{AB}^{\epsilon'}$ has following form
 \begin{equation}  
\mathcal{M}_{AB}^{\epsilon'}=p\mathcal{N}^{\epsilon}_A\otimes\mathcal{F}_{l,B}+(1-p)\mathcal{L}_{AB}.
   \end{equation}
Therefore, we have
\begin{equation}
M_{AB}^{\epsilon'}=\Omega_B[\tilde{M}_{AB}^{\epsilon'}]    \leq 
2^{\hat{LR}( \tilde{\mathcal{M}}_{AB}^{\epsilon'})} \Omega_B [\mathcal{P}_{AB}]=2^{\hat{LR}_{\epsilon'}( \mathcal{N}_A\otimes\mathcal{F}_{l,B} )} \Omega_B [\mathcal{P}_{AB}].
\end{equation}
Which implies that 
\begin{equation}
\hat{LR}(\mathcal{M}_{AB}^{\epsilon'})\leq \hat{LR}_{\epsilon'}(\mathcal{N}_A\otimes\mathcal{F}_{l,B}).  
\end{equation}    
Then, we have 
\begin{eqnarray}
1-\sqrt{F(J^{\Omega_B[\tilde{M}_{AB}^{\epsilon'}]},J^{\mathcal{N}_A\otimes\mathcal{F}_{l,B}})} 
&\leq&\nonumber
    \frac{1}{2} ||J^{\Omega_B[\tilde{M}_{AB}^{\epsilon'}]}-J^{\mathcal{N}_A\otimes\mathcal{F}_{l,B}} ||_1
\\&\leq&\nonumber
   \frac{1}{2} \sup_{\Psi_{ABA'B'}} || \Omega_B[\tilde{M}_{AB}^{\epsilon'}](\Psi_{ABA'B'}) -\Omega_B[\mathcal{N}_A\otimes\mathcal{F}_{l,B}](\Psi_{ABA'B'})||_1
\\&=&\nonumber
\frac{1}{2} || \Omega_B[\tilde{M}_{AB}^{\epsilon'}] - 
     \mathcal{N}_A\otimes\mathcal{F}_{l,B}||_{\diamond}
\\&=& \nonumber
\frac{1}{2} || \Omega_B[\tilde{M}_{AB}^{\epsilon'}] -\Omega_B[\mathcal{N}_A\otimes\mathcal{F}_{l,B}]||_{\diamond}
\\&\leq&\nonumber
\frac{1}{2} || \tilde{M}_{AB}^{\epsilon'} -\mathcal{N}_A\otimes\mathcal{F}_{l,B}||_{\diamond}
\\&\leq&
       \epsilon',
\end{eqnarray}
where the first inequality follows from the Fuchs-van Graaf inequalities:
$1 - \sqrt{F(\rho, \sigma)} \leq \frac{1}{2} \|\rho - \sigma\|_1$, the second inequality follows from the definition of the normalized Choi matrix, the third line follows from the definition of the diamond norm, the fourth line holds because $\Omega_B[\mathcal{F}_{l,B}] = \mathcal{F}_{l,B}$, the fifth line follows from the contractivity of the diamond norm under the superchannel $\Omega_B$, and the last line holds because $\hat{LR}( \tilde{\mathcal{M}}_{AB}^{\epsilon'}) = \hat{LR}_{\epsilon'}( \mathcal{N}_A \otimes \mathcal{F}_{l,B})$.

Thus, we obtain:
\begin{equation}
1-\sqrt{F(J^{\Omega_B[\tilde{M}_{AB}^{\epsilon'}]},J^{\mathcal{N}_A\otimes\mathcal{F}_{l,B}})} 
\leq
\epsilon'
\implies
F(J^{\Omega_B[\tilde{M}_{AB}^{\epsilon'}]},J^{\mathcal{N}_A\otimes\mathcal{F}_{l,B}})
\geq
(1-\epsilon')^2
\geq
1-2\epsilon'.
\end{equation}
Note that 
\begin{eqnarray}
F(J^{\Omega_B[\tilde{M}_{AB}^{\epsilon'}]},                J^{\mathcal{N}_A\otimes\mathcal{F}_{l,B}})
&=&\nonumber
F(pJ^{\mathcal{N}_A^{\epsilon}\otimes\mathcal{F}_{l,B}}+(1-p)J^{\mathcal{L}_{AB}}, 
   J^{\mathcal{N}_A\otimes\mathcal{F}_{l,B}})
\\\nonumber&=&
pF(J^{\mathcal{N}_A^{\epsilon}\otimes\mathcal{F}_{l,B}},   J^{\mathcal{N}_A\otimes\mathcal{F}_{l,B}})
\\&=&
pF(J^{\mathcal{N}_A^{\epsilon}}, J^{\mathcal{N}_A}),
\end{eqnarray}
where the second equality holds because $J^{\mathcal{L}_{AB}}$ and $J^{\mathcal{N}_A\otimes\mathcal{F}_{l,B}}$ have orthogonal images (i.e., $J^{\mathcal{L}_{AB}}J^{\mathcal{N}_A\otimes\mathcal{F}_{l,B}}=0$). From the above, we deduce that $p\geq 1-2\epsilon'$.

Finally, for the upper bound of $\frac{1}{2}||\mathcal{N}^{\epsilon}_A-\mathcal{N}_A||_{\diamond}$, the following inequality chain holds:
\begin{eqnarray}
\frac{1}{2} ||\mathcal{N}^{\epsilon}_A-\mathcal{N}_A||_{\diamond}    
&\leq&\nonumber
|A_0| \frac{1}{2} ||J^{\mathcal{N}^{\epsilon}_A}-J^{\mathcal{N}_A}||_1    
\\\nonumber&\leq&
|A_0| \sqrt{  1- F(J^{\mathcal{N}^{\epsilon}_A}, J^{\mathcal{N}_A})   }
\\\nonumber &\leq&
|A_0| \sqrt{2\epsilon'}
\\ &=&
\epsilon,
\end{eqnarray}
where the first inequality holds follows from $||\mathcal{N}^{\epsilon}_A-\mathcal{N}_A||_{\diamond}     \leq  |A_0|\cdot ||J^{\mathcal{N}^{\epsilon}_A}-J^{\mathcal{N}_A}||_1$~\cite{KIM-one-shot-2021}, the second inequality holds due to the Fuchs-van Graaf inequalities: 
$\frac{1}{2}||\rho-\sigma||_1   \leq  \sqrt{1-F(\rho, \sigma)}$, which completes the proof.
\end{proof}
%%%%%%%%%%%%%%%%%%%%%%%%%

We now present the upper and lower bounds for the one-shot catalytic dynamic coherence cost of a quantum channel as follows:
%%%%%%%%%%%%%%%%%
\begin{shaded}
\begin{theorem}
%For any $\delta>0, \epsilon\geq0$, there exists $l\in \mathbb{N}$ with $l^2\geq 1+\frac{1}{\delta}$ such that
%
For any $\delta>0, \epsilon\geq0$, there exists $l\in \mathbb{N}$ with $l^2\geq 1+\frac{1}{\delta}$ such that
\begin{eqnarray}
  \hat{LR}_{\epsilon'} ( \mathcal{N}_A\otimes \mathcal{F}_{l,B}) -\log(l^2(1-2\epsilon'))  +2                    &\geq&
  \tilde{c}_{\epsilon,\delta}^{(1)}(\mathcal{N}_A)    
  \\&\geq&    
  \hat{LR}_{\epsilon}(\mathcal{N}_A\otimes \mathcal{F}_{l,B}) -\log [l^2(1+\delta)  ],
\end{eqnarray}
where $ \epsilon'=\frac{\epsilon^2}{  2|A_0| }   $.
\end{theorem}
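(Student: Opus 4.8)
The plan is to prove the two inequalities separately: the lower bound follows from the $\delta$-MISC monotonicity of the log-robustness (Lemma~\ref{lemma:LRbound}) together with subadditivity of $\hat{LR}$ on the product QFT channel, while the upper bound is proved constructively from the channel decomposition of Lemma~\ref{lemma:cat-cost} and a golden-unit superchannel in the spirit of Theorem~\ref{the:cost-misc-upp-lower}. Throughout, $l$ is a catalyst dimension fixed with $l^2\ge 1+\tfrac1\delta$.

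\textbf{Lower bound.} Let $\Theta\in\delta$-$\textrm{MISC}(A'B'\to AB)$ and $\mathcal{F}_{d,A'}$ attain the minimum, so $\Theta[\mathcal{F}_{d,A'}\otimes\mathcal{F}_{l,B'}]=\mathcal{N}'_A\otimes\mathcal{F}_{l,B}$ with $\tfrac12\|\mathcal{N}'_A-\mathcal{N}_A\|_\diamond\le\epsilon$ and $\tilde c^{(1)}_{\epsilon,\delta}(\mathcal{N}_A)=\log d^2$. Since the diamond norm is multiplicative under tensoring with a fixed channel, $\tfrac12\|\mathcal{N}'_A\otimes\mathcal{F}_{l,B}-\mathcal{N}_A\otimes\mathcal{F}_{l,B}\|_\diamond=\tfrac12\|\mathcal{N}'_A-\mathcal{N}_A\|_\diamond\le\epsilon$, so $\mathcal{N}'_A\otimes\mathcal{F}_{l,B}\in B_\epsilon(\mathcal{N}_A\otimes\mathcal{F}_{l,B})$. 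Using the smoothed log-robustness as a minimum over this ball, then Lemma~\ref{lemma:LRbound}, and finally $\hat{LR}(\mathcal{F}_d\otimes\mathcal{F}_l)\le\log(d^2l^2)$ (choose the tensor product of the optimal classical channels and use additivity of $\hat D_{\max}$, with $\hat{LR}(\mathcal{F}_d)=\log d^2$ from Lemma~\ref{lemma:cost-f_d}), I obtain
\begin{align}
\hat{LR}_\epsilon(\mathcal{N}_A\otimes\mathcal{F}_{l,B}) &\le \hat{LR}(\mathcal{N}'_A\otimes\mathcal{F}_{l,B}) = \hat{LR}(\Theta[\mathcal{F}_{d,A'}\otimes\mathcal{F}_{l,B'}]) \nonumber\\
&\le \hat{LR}(\mathcal{F}_{d,A'}\otimes\mathcal{F}_{l,B'})+\log(1+\delta) \le \log(d^2l^2)+\log(1+\delta),
\end{align}
and rearranging gives $\log d^2\ge\hat{LR}_\epsilon(\mathcal{N}_A\otimes\mathcal{F}_{l,B})-\log\big(l^2(1+\delta)\big)$, the desired lower bound.

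\textbf{Upper bound.} For achievability I invoke Lemma~\ref{lemma:cat-cost} to produce $\mathcal{M}^{\epsilon'}_{AB}=p\,\mathcal{N}^\epsilon_A\otimes\mathcal{F}_{l,B}+(1-p)\mathcal{L}_{AB}$ with $p\ge1-2\epsilon'$, $\tfrac12\|\mathcal{N}^\epsilon_A-\mathcal{N}_A\|_\diamond\le\epsilon$, and $\hat{LR}(\mathcal{M}^{\epsilon'}_{AB})\le\hat{LR}_{\epsilon'}(\mathcal{N}_A\otimes\mathcal{F}_{l,B})$, and fix a classical $\mathcal{P}_{AB}$ with $\mathcal{M}^{\epsilon'}_{AB}\le2^{\hat{LR}(\mathcal{M}^{\epsilon'}_{AB})}\mathcal{P}_{AB}$. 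Writing $\mathcal{X}:=\mathcal{N}^\epsilon_A\otimes\mathcal{F}_{l,B}$, the completely positive ordering $p\,\mathcal{X}\le\mathcal{M}^{\epsilon'}_{AB}$ yields $\mathcal{X}\le 2^{\hat{LR}(\mathcal{M}^{\epsilon'}_{AB})}p^{-1}\mathcal{P}_{AB}$. I then take $d$ to be the least integer with $d^2l^2\ge 2^{\hat{LR}(\mathcal{M}^{\epsilon'}_{AB})}p^{-1}$ and define, in analogy with Theorem~\ref{the:cost-misc-upp-lower},
\begin{align}
\Theta[\mathcal{E}] = \frac{d^2l^2}{d^2l^2-1}\Big(\Tr[J^{\mathcal{F}_d\otimes\mathcal{F}_l}J^{\mathcal{E}}]-\tfrac{1}{d^2l^2}\Big)\mathcal{X} + \frac{d^2l^2}{d^2l^2-1}\big(1-\Tr[J^{\mathcal{F}_d\otimes\mathcal{F}_l}J^{\mathcal{E}}]\big)\mathcal{S}_{AB}.
\end{align}
Feeding in $\mathcal{F}_{d,A'}\otimes\mathcal{F}_{l,B'}$ (Choi self-overlap one) returns $\Theta[\mathcal{F}_{d,A'}\otimes\mathcal{F}_{l,B'}]=\mathcal{X}=\mathcal{N}^\epsilon_A\otimes\mathcal{F}_{l,B}$, while any classical input (Choi overlap $1/(d^2l^2)$) returns $\mathcal{S}_{AB}$. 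The choice of $d$ makes $\mathcal{S}_{AB}-\tfrac{1}{d^2l^2}\mathcal{X}\ge0$, so $\Theta$ is a genuine superchannel; minimality of $d$, together with $p\ge1-2\epsilon'$, $\hat{LR}(\mathcal{M}^{\epsilon'}_{AB})\le\hat{LR}_{\epsilon'}(\mathcal{N}_A\otimes\mathcal{F}_{l,B})$, and the integer-rounding estimate $\lceil\sqrt x\rceil^2\le4x$ (the catalyst guaranteeing $x\ge1$), gives $\log d^2\le\hat{LR}_{\epsilon'}(\mathcal{N}_A\otimes\mathcal{F}_{l,B})-\log\big(l^2(1-2\epsilon')\big)+2$.

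\textbf{The $\delta$-MISC property and the main obstacle.} To certify $\Theta\in\delta$-$\textrm{MISC}$ one must bound the robustness of the off-resource output $\mathcal{S}_{AB}$ by $\delta$. Taking $\mathcal{S}_{AB}$ to be the classical channel $\mathcal{P}_{AB}$ already renders $\Theta$ a MISC, hence a $\delta$-MISC for every $\delta>0$; the hypothesis $l^2\ge 1+\tfrac1\delta$ becomes essential only when, following the machinery of Lemma~\ref{lemma:cat-cost}, the off-resource branch is routed through the complementary channel $\mathcal{G}_B$ with Choi $(I-J^{\mathcal{F}_{l,B}})/(l^2-1)$: the incoherent mixture $\tfrac{1}{l^2}J^{\mathcal{F}_{l,B}}+(1-\tfrac{1}{l^2})J^{\mathcal{G}_B}=\tfrac{I}{l^2}$ shows $\hat C_R(\mathcal{G}_B)\le 1/(l^2-1)$, so requiring $l^2\ge 1+\tfrac1\delta$ keeps the generated coherence below $\delta$. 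The delicate step is precisely this simultaneous bookkeeping: verifying that $\Theta$ is at once a valid superchannel (complete positivity of the off-resource term) and a $\delta$-MISC, while holding $d$ small enough that the only overhead beyond $\hat{LR}_{\epsilon'}(\mathcal{N}_A\otimes\mathcal{F}_{l,B})-\log(l^2(1-2\epsilon'))$ is the additive constant $2$, and matching the factors $(1-2\epsilon')$ and $\log(1+\delta)$ in the two bounds.
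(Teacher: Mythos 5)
Your proof is correct, and both bounds follow the paper's blueprint: the lower bound is the same chain (Lemma~\ref{lemma:LRbound} together with $\hat{LR}(\mathcal{F}_{d}\otimes\mathcal{F}_{l})\leq\log(d^2l^2)$ and stability of the diamond norm under tensoring with $\mathcal{F}_{l}$, which you justify more explicitly than the paper does), and the upper bound uses the same key ingredient, Lemma~\ref{lemma:cat-cost}, plugged into a measure-and-prepare superchannel keyed to the overlap $\Tr[J^{\mathcal{F}_d\otimes\mathcal{F}_l}J^{\mathcal{E}}]$. The genuine divergence is the off-resource branch. The paper routes it through the non-classical complement $\mathcal{G}_{AB}$ satisfying $(\mathcal{N}^\epsilon_A\otimes\mathcal{F}_{l,B}+s\mathcal{G}_{AB})/(1+s)\in\textrm{C}(A_0B_0\to A_1B_1)$, so classical inputs acquire a little coherence and the hypothesis $l^2\geq 1+1/\delta$ is precisely what yields $\hat{C}_R(\Theta[\mathcal{P}_{A'B'}])\leq\hat{C}_R(\mathcal{G}_{AB})\leq 1/(l^2-1)\leq\delta$. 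You route it through the classical channel $\mathcal{P}_{AB}$ itself; since your choice of $d$ guarantees $\mathcal{P}_{AB}-\frac{1}{d^2l^2}\,\mathcal{N}^\epsilon_A\otimes\mathcal{F}_{l,B}\geq 0$, the supermap is a valid superchannel sending every classical channel exactly to $\mathcal{P}_{AB}$, i.e.\ an exact MISC and therefore trivially a $\delta$-MISC, at the same rate $\log d^2\leq\hat{LR}_{\epsilon'}(\mathcal{N}_A\otimes\mathcal{F}_{l,B})-\log(l^2(1-2\epsilon'))+2$. Your version is simpler and in fact stronger: achievability then holds already under MISCs, so the constraint $l^2\geq 1+1/\delta$ plays no role in the upper bound; it is only the paper's construction, which genuinely exercises the $\delta$-MISC relaxation and exhibits where the $l$-constraint enters, that needs it. Two small repairs: (i) your parenthetical claim that $x=2^{\hat{LR}(\mathcal{M}^{\epsilon'}_{AB})}p^{-1}/l^2\geq 1$ deserves a line of proof --- it follows by pairing the Choi-matrix ordering $p\,J^{\mathcal{N}^\epsilon_A}\otimes J^{\mathcal{F}_{l}}\leq 2^{\hat{LR}(\mathcal{M}^{\epsilon'}_{AB})}J^{\mathcal{P}_{AB}}$ with the positive operator $I\otimes J^{\mathcal{F}_{l}}$ and using that $J^{\mathcal{P}_{AB}}$ is diagonal with the diagonal of $J^{\mathcal{F}_{l}}$ uniform, giving $p\leq 2^{\hat{LR}(\mathcal{M}^{\epsilon'}_{AB})}/l^2$; (ii) the channel you call $\mathcal{G}_B$, with Choi $(I-J^{\mathcal{F}_{l,B}})/(l^2-1)$, is the one appearing inside $\Omega_B$ in Lemma~\ref{lemma:cat-cost}, not the $\mathcal{G}_{AB}$ of the paper's theorem proof; this conflation is harmless for your argument, since your construction never uses that branch.
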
    
\end{shaded}
\begin{proof}
For the lower bound, consider a superchannel $\Theta_{A\to B}\in\delta\textrm{-MISC}(A\to B)$ and a catalyst $\mathcal{F}_{l,B}$ such that
\begin{equation}
\Theta_{A\to B}[\mathcal{F}_{d,A}\otimes\mathcal{F}_{l,B}]  =  
        \mathcal{N}'_{A} \otimes \mathcal{F}_{l,B}, 
\end{equation}
where $\tilde{c}_{\epsilon,\delta}^{(1)}(\mathcal{N}_A)=\log d^2$ and $ \frac{1}{2}||\mathcal{N}'_A-\mathcal{N}_A||_{\diamond}\leq \epsilon$. Then, we have 

 \begin{eqnarray}
 \hat{LR}_{\epsilon}(\mathcal{N}_A\otimes \mathcal{F}_{l,B}) 
 \nonumber &\leq&             \hat{LR}(\mathcal{N}_A'\otimes\mathcal{F}_{l,B})   
 \\\nonumber &=&  \hat{LR}(\Theta_{A\to B}[\mathcal{F}_{d,A}\otimes\mathcal{F}_{l,B}]) 
 \\\nonumber &\leq&   \hat{LR}(\mathcal{F}_{d,A}\otimes\mathcal{F}_{l,B}) + \log(1+\delta) 
 \\\nonumber          &=&     \log d^2 +  \log l^2 + \log(1+\delta) 
  \\\nonumber         &=&     \tilde{c}_{\epsilon,\delta}^{(1)}(\mathcal{N}_A)  +  \log l^2 + \log(1+\delta)  
   \\        &=&     \tilde{c}_{\epsilon,\delta}^{(1)}(\mathcal{N}_A)  +  \log [l^2(1+\delta)  ], 
 \end{eqnarray}
where the second inequality follow from Lemma~\ref{lemma:LRbound}, the second equality holds because the log-robustness of coherence of a channel is additive under tensor products~\cite{Saxena20}. This completes the proof of the lower bound.

For the upper bound, suppose $\mathcal{M}^{\epsilon'}_{AB}\in \textrm{CPTP}(A_0B_0\to A_1B_1)$ satisfies the conditions given in Eq.(\ref{eq:M_AB-epsilon})-Eq.(\ref{eq:41}). Combining Eq.(\ref{eq:M_AB-epsilon}) with Eq.(\ref{eq:39}), we have 
\begin{eqnarray}
\mathcal{M}^{\epsilon'}_{AB}
&=&\nonumber 
p\mathcal{N}_A^{\epsilon}\otimes\mathcal{F}_{l,B} + (1-p) \mathcal{L}_{AB}
\\&\leq&
2^{ \hat{LR}_{\epsilon'}(\mathcal{N}_A^{\epsilon}\otimes\mathcal{F}_{l,B})}\mathcal{P}_{AB},
\end{eqnarray}
where $\mathcal{P}_{AB}\in\textrm{C}(A_0B_0\to A_1B_1)$. This implies that 
\begin{eqnarray}
\mathcal{N}_A^{\epsilon}\otimes\mathcal{F}_{l,B}
&\leq&\nonumber 
2^{ \hat{LR}_{\epsilon'}(\mathcal{N}_A^{\epsilon}\otimes\mathcal{F}_{l,B})-\log p}\mathcal{P}_{AB} 
\\&\leq&
2^{ \hat{LR}_{\epsilon'}(\mathcal{N}_A^{\epsilon}\otimes\mathcal{F}_{l,B})-\log (1-2\epsilon')} \mathcal{P}_{AB} ,
\end{eqnarray}
where the first inequality holds because $\mathcal{L}_{AB}\geq0$, and the second inequality follows from Eq.(\ref{eq:40}). Therefore, let $s=2^{ \hat{LR}_{\epsilon'}(\mathcal{N}_A^{\epsilon}\otimes\mathcal{F}_{l,B})-\log (1-2\epsilon')}-1$,
we can find a quantum channel $\mathcal{G}_{AB}$ such that 
\begin{equation}\label{eq:99}
    \frac{\mathcal{N}_A^{\epsilon}\otimes\mathcal{F}_{l,B}+s\mathcal{G}_{AB}}{1+s}\in \textrm{C}(A_0B_0\to A_1B_1).
\end{equation}
For any quantum channel $\mathcal{E}_{A'B'}$, consider a superchannel $\Theta_{A'B'\to AB}$ defined as follows:
\begin{equation}
\Theta_{A'B'\to AB}[\mathcal{E}_{A'B'}]  
=
\Tr( J^{ \mathcal{F}_{d,A'}\otimes\mathcal{F}_{l,B'}}  J^{\mathcal{E}_{A'B'}}) \mathcal{N}_A^\epsilon \otimes  \mathcal{F}_{l,B}  
+ \Tr[(I_{ABA'B'}-J^{\mathcal{F}_{d,A'}\otimes\mathcal{F}_{l,B'}}) J^{\mathcal{E}_{A'B'}} ] \mathcal{G}_{AB}.
\end{equation}
We will demonstrate that the superchannel $\Theta_{A'B'\to AB}$ above is a feasible solution to Eq.(\ref{def:one-shot-cat-cost}). 
First, it is straightforward to verify that
\begin{equation}
\Theta_{A'B'\to AB}[\mathcal{F}_{d,A'}\otimes\mathcal{F}_{l,B'}]  =  
        \mathcal{N}^{\epsilon}_{A} \otimes \mathcal{F}_{l,B}    
\end{equation}
Second, we need to show that $\Theta_{A'B'\to AB}\in \delta$-MISC$(A'B'\to AB)$. For any classical channel $\mathcal{P}_{A'B'}$, we have that
\begin{eqnarray}
\Theta_{A'B'\to AB}[\mathcal{P}_{A'B'}]  
&=&\nonumber
\Tr( J^{ \mathcal{F}_{d,A'}\otimes\mathcal{F}_{l,B'}}  J^{\mathcal{P}_{A'B'}}) \mathcal{N}_A^\epsilon \otimes  \mathcal{F}_{l,B}  
+ \Tr[(I_{ABA'B'}-J^{\mathcal{F}_{d,A'}\otimes\mathcal{F}_{l,B'}}) J^{\mathcal{P}_{A'B'}} ] \mathcal{G}_{AB}
\\ &=&
q\frac{\mathcal{N}_A^{\epsilon}\otimes\mathcal{F}_{l,B}+s\mathcal{G}_{AB}}{1+s}+(1-q)\mathcal{G}_{AB},
\end{eqnarray}
where $ q=(1+s)\Tr( J^{ \mathcal{F}_{d,A'}\otimes\mathcal{F}_{l,B'}} J^{\mathcal{P}_{A'B'}})=\frac{1+s}{d^2l^2}$. To ensure $0\leq q\leq1$, we assume that $d=\lceil \frac{\sqrt{1+s}}{l}\rceil$. Thus, we can bound the robustness of coherence of $\Theta_{A'B'\to AB}[\mathcal{P}_{A'B'}] $ as follows
\begin{eqnarray}
 \hat{C}_R(\Theta_{A'B'\to AB}[\mathcal{P}_{A'B'}])
  &=&\nonumber
   \hat{C}_R(q\frac{\mathcal{N}_A^{\epsilon}\otimes\mathcal{F}_{l,B}+s\mathcal{G}_{AB}}{1+s}+(1-q)\mathcal{G}_{AB})    
     \\\nonumber&\leq&   
       q \hat{C}_R(\frac{\mathcal{N}_A^{\epsilon}\otimes\mathcal{F}_{l,B}+s\mathcal{G}_{AB}}{1+s}) + (1-q)\hat{C}_R(\mathcal{G}_{AB})  
       \\\nonumber &\leq&
        \hat{C}_R(\mathcal{G}_{AB})    
        \\\nonumber &\leq&
         \frac{1}{\hat{C}_R(\mathcal{N}_A^{\epsilon}\otimes\mathcal{F}_{l,B}) }
         \\\nonumber &\leq&
           \frac{1}{\hat{C}_R(\mathcal{F}_{l,B}) }
         \\   &=&
         \frac{1}{l^2-1},
\end{eqnarray}
where the first inequality follows from the convexity of the robustness of coherence, the second inequality follows from Eq.(\ref{eq:99}), the third inequality holds is due to  
\begin{equation}
\frac{\mathcal{N}_A^{\epsilon}\otimes\mathcal{F}_{l,B}+s\mathcal{G}_{AB}}{1+s}
=
\frac{s^{-1}\mathcal{N}_A^{\epsilon}\otimes\mathcal{F}_{l,B}+\mathcal{G}_{AB}}{1+s^{-1}}
\in \textrm{C} (A_0B_0\to A_1B_1) ,   
\end{equation}
the fourth inequality holds follows from the monotonicity of robustness under discarding a system. 

Thus, to easure $\hat{C}_R(\Theta_{A'B'\to AB}[\mathcal{P}_{A'B'}])\leq\frac{1}{l^2+1} \leq\delta$, we assume $l\geq \sqrt{1+\frac{1}{\delta}}$. Consequently, we set 
\begin{equation}
    d=\lceil\frac{\sqrt{1+s}}{l}\rceil=\lceil \frac{\sqrt{2^{ \hat{LR}_{\epsilon'}(\mathcal{N}_A^{\epsilon}\otimes\mathcal{F}_{l,B})-\log (1-2\epsilon')} }}{l}  \rceil,
\end{equation}
and $l\in\mathbb{N}$.
Thus, we obtain
\begin{eqnarray}
  \hat{LR}_{\epsilon'} ( \mathcal{N}_A\otimes \mathcal{F}_{l,B}) -\log(l^2(1-2\epsilon'))  +2             &\geq&
        \log d^2  
         \\&\geq&    
   \tilde{c}_{\epsilon,\delta}^{(1)}(\mathcal{N}_A) ,
\end{eqnarray}
where $ \epsilon'=\frac{\epsilon^2}{  2|A_0| }   $.
\end{proof}

%%%%%%%%%%%%%%%%%%%%%%%%%%%%%%%%
\section{Conclusion and Discussion}\label{sec:conclusion}
In this paper, we regard the classical channel as the free channel and the QFT channel as the golden unit of dynamic coherence resources. The role of the QFT channel is analogous to that of the maximally coherent state within the framework of static coherence resources. We investigated the one-shot manipulation of dynamic coherence under two kinds of free superchannels: the one-shot dynamic coherence cost and one-shot dynamic coherence distillation. Specifically, we found that the one-shot dynamic coherence cost under MISC is bounded by the log-robustness of coherence, while the one-shot dynamic coherence cost under DISCs is constrained by the dephasing log-robustness of coherence. Additionally, the one-shot dynamic coherence distillation under MISCs is limited by the hypothesis-testing relative entropy of coherence, whereas the one-shot dynamic coherence distillation under DISCs is bounded by the dephasing hypothesis-testing relative entropy of coherence. Finally, we examine the catalytic scenario in which additional dynamic coherence is supplied and subsequently returned after the superchannels. We found that the one-shot dynamic coherence cost under $\delta$-MISCs is bounded by the log-robustness of coherence. The results presented above provide an operational interpretation of the log-robustness of coherence, dephasing log-robustness of coherence, hypothesis-testing relative entropy of coherence, and dephasing hypothesis-testing relative entropy of coherence. These results provide constraints on the ability of QFT channels to simulate general channels, as well as on the ability of general channels to simulate QFT channels.

We note that an alternative formulation of the one-shot manipulation of dynamic coherence under free superchannels recently appeared in Ref.\cite{Saxena20}. The authors of Ref.\cite{Saxena20} studied the exact one-shot dynamic coherence cost and one-shot dynamic coherence distillation. The definitions of these concepts depend on static states rather than on the channel itself. For instance, the one-shot dynamic coherence distillation is defined as the logarithm of the dimension of the maximally coherent state $\psi^+_{B_1}$ that can be transformed from a given channel $\mathcal{N}_A$ using a free superchannel $\Theta_{A\to B}$. That is
\begin{equation}
    \tilde{d}^{(1)}_{\epsilon, \hat{\mathfrak{O}}} (\mathcal{N}_A)= \max \{ \log |B_1|: F(\Theta_{A\to B} [\mathcal{N}_A], \psi^+_{B_1})\geq 1-\epsilon, \Theta_{A\to B}\in\hat{\mathfrak{O}}\}.
\end{equation}
Since this study focuses on dynamic coherence theory, the concepts defined using the QFT channel, such as dynamic coherence distillation (e.g., Eq.(\ref{def:dynamic-coh-dis})), are more aligned with the terminology of dynamic coherence resource theory~\cite{Gour2019,Takagi2022}. This type of research also appears in the study of dynamic entanglement resource theory. For example, in Ref.~\cite{KIM-one-shot-2021}, the authors considered the $K$-swap channel to be the golden unit within this dynamic resource theory. Meanwhile, viewing the QFT channel as the golden unit also provides theoretical insights for understanding its simulation or utilization~\cite{Jones2024hadamard}. Finally, we believe that our study can also provide insights for other dynamic QRTs, such as "dynamic imaginarity"\cite{Hickey2018,Wu2021b,Wu2021a,Xu2023-PhysRevA.108.062203} and "dynamic superposition"\cite{Theurer2017-PhysRevLett.119.230401,Huseyin2022-PhysRevA.105.042410,Torun2021-PhysRevA.103.032416}.

%%%%%%%%%%%%%%%%%%%%%%%%%%%%%%%
\section*{acknowledgments}
I thank Ping Li and Mingfei Ye for useful discussions on an early manuscript. I also express my gratitude to the anonymous referees, whose valuable feedback has greatly contributed to improving the paper. This work is supported by the National Natural Science Foundation of China (Grant No. 62001274, Grant No. 12071271 and Grant No. 62171266).
\sloppy
\bibliographystyle{abbrv}
\bibliography{main}

\fussy
%%%%%%%%%%%%%%%%%%%%%%%%%%%%%%%%%%%%%%%%%%%%%%%%%%%%%%%%%%%%%%5
%%%%%%%%%%%%%%%%%%%%%%%%%%%%%%%%%%%%%%%%%%%%%%%%%%%%%%%%%%%%%%%%%%%%%%%%%%%%%%%%%%
\appendix
\section{The golden units in the dynamic QRT of coherence}\label{app:golden-units}
In this section, we demonstrate that \textit{QFT channels} serve as \textit{the golden units} in the dynamic resource theory of coherence where the free superchannels are \textrm{MISC}. Specifically, we show that any other channel can be obtained from a QFT channel through \textrm{MISC}.

\begin{shaded}
\begin{proposition}
Let \( |A_0| = |A_1| = |B_0| = |B_1| = d \) and \( \mathcal{F}_d \in \textrm{CPTP}(A_0\to A_1) \) be a QFT channel. For any quantum channel \( \mathcal{N} \in \textrm{CPTP}(B_0\to B_1) \), there exists \( \Theta_{A\to B} \in \textrm{MISC}(A\to B) \) such that \( \Theta_{A\to B}[\mathcal{F}_d] = \mathcal{N}\). 
\end{proposition}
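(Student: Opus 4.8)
The plan is to reuse, essentially verbatim, the explicit superchannel construction from the proof of Theorem~\ref{the:cost-misc-upp-lower}, but run it so that the \emph{input} slot is occupied by $\mathcal{F}_d$ and the \emph{output} is the target $\mathcal{N}$. Concretely, I would pick a classical ``fallback'' channel $\mathcal{C}\in\textrm{C}(B_0\to B_1)$, set $\mathcal{K}:=\frac{d^2}{d^2-1}\big(\mathcal{C}-\frac{1}{d^2}\mathcal{N}\big)$, and define the supermap $\Theta_{A\to B}$ through its action on the normalized Choi matrix of an arbitrary input channel $\mathcal{E}\in\textrm{CPTP}(A_0\to A_1)$ by
\begin{equation}
\Theta[\mathcal{E}]=\Tr(J^{\mathcal{F}_d}J^{\mathcal{E}})\,\mathcal{N}+\Tr\big((I-J^{\mathcal{F}_d})J^{\mathcal{E}}\big)\,\mathcal{K}.
\end{equation}
Because $\mathcal{F}_d$ is unitary, its Choi state $J^{\mathcal{F}_d}$ is a rank-one projector, so $\Tr(J^{\mathcal{F}_d}J^{\mathcal{F}_d})=1$; feeding $\mathcal{E}=\mathcal{F}_d$ then puts weight $1$ on $\mathcal{N}$ and $0$ on $\mathcal{K}$, giving $\Theta[\mathcal{F}_d]=\mathcal{N}$, which is exactly the required conversion.

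The next step is to certify that $\Theta$ is a genuine superchannel. Here I would impose that $\mathcal{C}-\frac{1}{d^2}\mathcal{N}$ be completely positive; by the criterion of Ref.~\cite{Gour2019a} invoked in the proof of Theorem~\ref{the:cost-misc-upp-lower}, this positivity is precisely what guarantees $\Theta$ is a valid (completely CP-preserving) superchannel. Equivalently, under this condition $\mathcal{K}$ is itself a bona fide CPTP map: its Choi operator $\frac{d^2}{d^2-1}\big(J^{\mathcal{C}}-\frac{1}{d^2}J^{\mathcal{N}}\big)$ is positive semidefinite and has the correct partial trace $\frac{1}{|B_0|}I_{B_0}$. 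Since $\Tr(J^{\mathcal{F}_d}J^{\mathcal{E}})\in[0,1]$ for every channel $\mathcal{E}$ (as $0\le J^{\mathcal{F}_d}\le I$ and $\Tr(J^{\mathcal{E}})=1$), the map $\Theta$ is then manifestly a measure-and-prepare superchannel, i.e.\ a convex combination of the two valid channels $\mathcal{N}$ and $\mathcal{K}$.

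The one substantive point, and the main obstacle, is exhibiting such a classical $\mathcal{C}$. I would take $\mathcal{C}$ to be the completely depolarizing classical channel $\mathcal{C}(\cdot)=\Tr(\cdot)\frac{I_{B_1}}{d}$, whose normalized Choi matrix is $J^{\mathcal{C}}=\frac{I}{d^2}$; this is diagonal (hence incoherent) and trace-preserving, so $\mathcal{C}\in\textrm{C}(B_0\to B_1)$. Then $J^{\mathcal{C}}-\frac{1}{d^2}J^{\mathcal{N}}=\frac{1}{d^2}(I-J^{\mathcal{N}})\ge0$, since $J^{\mathcal{N}}\ge0$ has operator norm at most $\Tr(J^{\mathcal{N}})=1$ and therefore $J^{\mathcal{N}}\le I$. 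Conceptually this is the same fact as $\hat{LR}(\mathcal{N})\le\log d^2$ for every channel on a $d$-dimensional system: it is the maximality of the QFT's log-robustness that leaves exactly enough room for the construction.

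Finally I would confirm the \textrm{MISC} property. For any classical channel $\mathcal{Q}\in\textrm{C}(A_0\to A_1)$ the standard overlap identity for incoherent Choi states gives $\Tr(J^{\mathcal{F}_d}J^{\mathcal{Q}})=\frac{1}{d^2}$, so
\begin{equation}
\Theta[\mathcal{Q}]=\tfrac{1}{d^2}\,\mathcal{N}+\big(1-\tfrac{1}{d^2}\big)\mathcal{K}=\tfrac{1}{d^2}\mathcal{N}+\big(\mathcal{C}-\tfrac{1}{d^2}\mathcal{N}\big)=\mathcal{C}\in\textrm{C}(B_0\to B_1).
\end{equation}
Thus $\Theta$ maps every classical channel to the classical channel $\mathcal{C}$, so $\Theta\in\textrm{MISC}(A\to B)$, and together with $\Theta[\mathcal{F}_d]=\mathcal{N}$ this completes the proof.
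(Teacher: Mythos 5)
Your proof is correct, and it takes a genuinely different route from the paper's own Appendix~\ref{app:golden-units} argument. The paper realizes $\Theta_{A\to B}$ through an explicit pre-/post-processing decomposition: a DI pre-processing channel appends the incoherent state $\ketbra{0}{0}_{A_0}$, the QFT turns it into the maximally coherent state $\psi^+_{A_1}$, and an MIO post-processing channel $\mathcal{Q}_{A_1E\to B_1}$ with $\mathcal{Q}_{A_1E\to B_1}(\cdot\otimes\psi^+_{A_1})=\mathcal{N}(\cdot)$ consumes $\psi^+_{A_1}$ to implement $\mathcal{N}$; the existence of this MIO is imported from Theorem~6 of Ref.~\cite{diaz2018-using} (via $LR(\psi^+_{A_1})\geq\max_i LR(\mathcal{N}(\ketbra{i}{i}))$), and the fact that a DI pre-processing composed with an MIO post-processing yields a MISC is cited from Ref.~\cite{Saxena20}. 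You instead adapt the measure-and-prepare supermap that the paper itself deploys in the proofs of Theorem~\ref{the:cost-misc-upp-lower} and Eq.~(\ref{eq:misc-lemma}): measure the input's Choi state against the POVM $\{J^{\mathcal{F}_d},\,I-J^{\mathcal{F}_d}\}$ and prepare either $\mathcal{N}$ or the fallback $\mathcal{K}$. Your verification is self-contained and elementary: it needs only $J^{\mathcal{N}}\leq I$ (so that $\mathcal{K}$ is CPTP, which is the channel-level statement $\hat{LR}(\mathcal{N})\leq\log d^2$), purity of $J^{\mathcal{F}_d}$ (so that $\Theta[\mathcal{F}_d]=\mathcal{N}$ holds exactly, with no error term), and the overlap identity $\Tr(J^{\mathcal{F}_d}J^{\mathcal{Q}})=\tfrac{1}{d^2}$ (so that every classical channel collapses to the single classical channel $\mathcal{C}$, making the MISC property immediate). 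What the paper's construction buys in exchange is a physically transparent circuit picture — one use of the QFT produces one copy of $\psi_d^+$, which is then spent as a static resource — thereby making explicit the link between the dynamic golden unit and the static maximally coherent state; your construction instead isolates the structural mechanism (perfect distinguishability of the pure Choi state $J^{\mathcal{F}_d}$ from the uniform diagonal overlap of classical channels) that also drives the paper's one-shot cost bounds, at the price of obscuring how the conversion would be carried out as an actual pre-/post-processing protocol.
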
    
\end{shaded}
\begin{proof}
Since a superchannel acting on a QFT channel can be expressed as  
\begin{equation}
\Theta_{A \to B}[\mathcal{F}_d] = \mathcal{Q}_{A_1E \to B_1} \circ (\text{id}_E\otimes\mathcal{F}_d ) \circ \mathcal{P}_{B_0 \to A_0E},
\end{equation}
where \( \mathcal{P}_{B_0 \to A_0E} \) and \( \mathcal{Q}_{A_1E \to B_1} \) are the pre-processing and post-processing channels, respectively, we can construct a superchannel \( \Theta_{A\to B} \in \textrm{MISC}(A\to B) \) by choosing \( \mathcal{P}_{B_0 \to A_0E} \) as a DI channel and \( \mathcal{Q}_{A_1E \to B_1} \) as an MIO channel~\cite{Saxena20}. 
We define the quantum channels \( \mathcal{P}_{B_0 \to A_0E} \) and \( \mathcal{Q}_{A_1E \to B_1} \) as follows. The channel \( \mathcal{P}_{B_0 \to A_0E} \) acts as a DI channel and is given by  
\(\mathcal{P}_{B_0 \to A_0E}(\cdot) = \cdot \otimes \ketbra{0}{0}_{A_0}\).
Similarly, we define \( \mathcal{Q}_{A_1E \to B_1} \) as an MIO channel, which is specified by  
\( \mathcal{Q}_{A_1E \to B_1}(\cdot \otimes \psi^+_{A_1}) = \mathcal{N}(\cdot) \) .
The reason \( \mathcal{Q}_{A_1E \to B_1} \) qualifies as an MIO channel follows from Theorem 6 in Ref.~\cite{diaz2018-using}, which states that \( \mathcal{Q}_{A_1E \to B_1} \) is an MIO channel if and only if  
\( LR(\psi^+_{A_1}) \geq \max_{i} LR(\mathcal{N}(\ketbra{i}{i})) \) where \(\{\ket{i}\}\) is the fixed basis.
Then, we have 
\begin{eqnarray}
\Theta_{A \to B}[\mathcal{F}_d](\cdot) &=&
\mathcal{Q}_{A_1E \to B_1} \circ (\text{id}_E\otimes\mathcal{F}_d ) \circ \mathcal{P}_{B_0 \to A_0E}(\cdot)
\\&=&
\mathcal{Q}_{A_1E \to B_1} \circ (\text{id}_E\otimes\mathcal{F}_d ) (\cdot\otimes\ketbra{0}{0}_{A_0})
\\&=&
\mathcal{Q}_{A_1E \to B_1} (\cdot\otimes\psi^+_{A_1})
\\&=&
\mathcal{N}(\cdot).
\end{eqnarray}
This completes the proof.
\end{proof}

%%%%%%%%%%%%%%%%%%%%%%%%%%%%%%%%%%%%%%%%%%%%%%%%%%%%%%%%%%%%%%%%%%%%%%%%%%%%%%%%%%%%%%%%%%%%%%%%%%%%%%%%%%%%%%
\section{The maximal replacement channels can not be the golden unit in the dynamic QRT of coherence}\label{app:maximal-replacement}
Let \(\mathcal{F}_d\in\textrm{CPTP}(A_0\to A_1)\) be a QFT channel and \(\mathcal{R}_d(\cdot)\in\textrm{CPTP}(A_0\to A_1)\) be a maximal replacement channel with \(|A_0| = |A_1| = d\),  . We will show that 
\begin{equation}
    \hat{LR}(\mathcal{F}_d)=\hat{LR}_{\Delta}(\mathcal{F}_d)=\log d^2,
\end{equation}
and
\begin{equation}
    \hat{LR}(\mathcal{R}_d)=\hat{LR}_{\Delta}(\mathcal{R}_d)=\log d.
\end{equation}
%where \(\mathcal{F}_d\) is the QFT channel and \(\mathcal{R}_d(\cdot)=\Tr(\cdot)\psi^+_d\) is the maximal replacement channel.
The equations \(\hat{LR}(\mathcal{F}_d) = \log d^2\) and \(\hat{LR}(\mathcal{R}_d) = \log d\) were established in Ref.~\cite{Saxena20}. From the definition of dephasing log-robustness, we have the inequality \( \log d^2 \geq \hat{LR}_{\Delta}(\mathcal{F}_d) \geq \hat{LR}(\mathcal{F}_d) = \log d^2 \), which directly implies that \(\hat{LR}_{\Delta}(\mathcal{F}_d) = \log d^2\).
For the maximal replacement channel \(\mathcal{R}_d\), we have 
\begin{eqnarray}
\hat{LR}_{\Delta}(\mathcal{R}_d) &=& \hat{D}_{\max} (\mathcal{R}_d||\Delta[\mathcal{R}_d])
\\&=&
D_{\max} (J^{\mathcal{R}_d}||J^{\Delta[\mathcal{R}_d]})
\\&=&
D_{\max} (\frac{I}{d}\otimes\psi_d^+||\frac{I}{d}\otimes\frac{I}{d})
\\&=&
\log d.
\end{eqnarray}
In conclusion, we observe that the log-robustness (or dephasing log-robustness) of coherence for the QFT channel is consistently twice that of the maximal coherent replacement channel. This implies that under \textrm{MISCs} (\textrm{DISCs}), two maximal replacement channels are required to simulate a QFT channel.
%%%%%%%%%%%%%%%%%%%%%%%%%%%%%%%%%%%%%%%%%%%%%%%%%%%%%%%%%%%%%%%%%%%%%%%%%%%%%%%%%%%%%%%%%%%%%%%%%%%%%%%%%%%%%%
\section{The maximal replacement channels \(\mathcal{R}_d\) can be converted from the QRT channels \(\mathcal{F}_d\) via \( \Theta\in\textrm{DISC} \)}\label{app:QRT-convert-Replacement}
Let \(\Theta_{A \to B}[\cdot] = \mathcal{Q}_{A_1E \to B_1} \circ (\text{id}_E\otimes\cdot ) \circ \mathcal{P}_{B_0 \to A_0E}\) be a superchannel and \( |A_0| = |A_1| = |B_0| = |B_1| = d \). To ensure that \(\Theta_{A \to B}\) belongs to \(\textrm{DISC}\), we set \(\mathcal{P}_{B_0 \to A_0E}(\cdot) = \sigma_E \otimes \ketbra{0}{0}\), where \(\sigma_E \in \mathfrak{F}\), and define \( \mathcal{Q}_{A_1E \to B_1}(\cdot) = \Tr_E(\cdot) \) as the partial trace operation. It is straightforward to verify that both \(\mathcal{P}_{B_0 \to A_0E}\) and \(\mathcal{Q}_{A_1E \to B_1} \) are DIO channels, ensuring that \(\Theta_{A \to B} \in \textrm{DISC}(A\to B)\). A simple verification confirms that \(\Theta_{A \to B}[\mathcal{F}_d](\cdot) = \mathcal{R}_d(\cdot)=\Tr(\cdot)\psi_d^+\).
%%%%%%%%%%%%%%%%%%%%%%%%%%%%%%%%%%%%%%%%%%%%%%%%%%%%%%%%%%%%%%%%%%%%%%%%%%%%%%%%%%
%%%%%%%%% SUPPLEMENTAL MATERIAL %%%%%%%%%

%\newpage

%\appendix
%\setcounter{subsection}{0}
%\setcounter{table}{0}
%\setcounter{figure}{0}

%\vspace{3cm}

%\begin{center}
%\Large{\textbf{Supplementary Material} \\ \textbf{
%}}
%\end{center}

\renewcommand{\theequation}{S\arabic{equation}}
% \numberwithin{equation}{section}
\renewcommand{\thesubsection}{\normalsize{Supplementary Note \arabic{subsection}}}
\renewcommand{\theproposition}{S\arabic{proposition}}
\renewcommand{\thedefinition}{S\arabic{definition}}
\renewcommand{\thefigure}{S\arabic{figure}}
\setcounter{equation}{0}
\setcounter{table}{0}
\setcounter{section}{0}
\setcounter{proposition}{0}
\setcounter{definition}{0}
\setcounter{figure}{0}

% \tableofcontents
% 

%%%%%%%%%%%%%%%%%%%%%%%%%%%%%%%%%%%%%%%%%%%%%%%%%%%%%%%%%%%%%%%%%%%%%%%%%%%
%\section{QRTs}\label{appendix:Wigner function}
%Let 

%%%%%%%%%%%%%%%%%%%%%%%%%%%%%%%%%%%%%%%%%%%%%%%%%%%%%%%%%%%%%%%%%%%%%%%%%%%

\end{document}